\newcommand*\meminitinsecure{
  \begin{jasmincode}[fontsize=\footnotesize,outerwidth=25ex,outerpos=t]
    \jasminindent{0}\\
    \jasminindent{0}\label{line:example:store}%
      [a + \jasminconstant{5}] = sec;\\
    \jasminindent{0}\dots \jasmincomment{// use a}\\
    \jasminindent{0}\label{line:example:begin-while}%
      i = \jasminconstant{0};\\
    \jasminindent{0}\\
    \jasminindent{0}%
      \jasminkw{while} (i < \jasminconstant{10}) \jasminopenbrace{}\\
    \jasminindent{0}\\
      \jasminindent{1}[a + i] = pub;\\
    \jasminindent{1}i += \jasminconstant{1};\\
    \jasminindent{0}\\
    \jasminindent{0}\jasminclosebrace{}\label{line:example:end-while}\\
    \jasminindent{0}\\
    \jasminindent{0}%
      v = [a + \jasminconstant{5}];\label{line:example:load}\\
    \jasminindent{0}\label{line:example:leak}%
      [v] = \jasminconstant{0}; \jasmincomment{// leak v}
  \end{jasmincode}
}
\newcommand*\meminitprotected{
  \begin{jasmincode}[fontsize=\footnotesize,outerwidth=38ex,outerpos=t]
    \jasminindent{0}\label{line:T-example:ms}%
      \msvar{} = \(\bot\);\\
    \jasminindent{0}\label{line:T-example:store}%
      [a + \jasminconstant{5}] = sec;\\
    \jasminindent{0}\dots \jasmincomment{// use a}\\
    \jasminindent{0}i = \jasminconstant{0};\\
    \jasminindent{0}\label{line:T-example:leak}%
      \ileak{\text{(i < \jasminconstant{10})}};\\
    \jasminindent{0}\label{line:T-example:loop}%
      \jasminkw{while} \jasminhighlight{(\hd{\dirvar})} \jasminopenbrace{}\\
    \jasminindent{1}\label{line:T-example:tl-ms1}%
      \dirvar{} = \tl{\dirvar}; \msvar{} ||= !(i < \jasminconstant{10});\\
    \jasminindent{1}[a + i] = pub;\\
    \jasminindent{1}i += \jasminconstant{1};\\
    \jasminindent{1}\ileak{\text{(i < \jasminconstant{10})}};\\
    \jasminindent{0}\jasminclosebrace{}\\
    \jasminindent{0}\label{line:T-example:tl-ms2}%
      \dirvar{} = \tl{\dirvar}; \msvar{} ||= i < \jasminconstant{10};\\
    \jasminindent{0}v = [a + \jasminconstant{5}];\\
    \jasminindent{0}[v] = \jasminconstant{0};
  \end{jasmincode}
}
\newcommand*\meminitspsinsecure{
  \begin{jasmincode}[fontsize=\footnotesize,outerwidth=25ex,outerpos=t]
    \jasminindent{0}\\
    \jasminindent{0}\label{line:slh-example:init-msf}%
      \jasminprimitive{init\_msf}();\\
    \jasminindent{0}[a + \jasminconstant{5}] = sec;\\
    \jasminindent{0}\dots \jasmincomment{// use a}\\
    \jasminindent{0}i = \jasminconstant{0};\\ \\
    \jasminindent{0}%
      \jasminkw{while} (i < \jasminconstant{10}) \jasminopenbrace{}\\
    \jasminindent{0}\\
    \jasminindent{1}[a + i] = pub;\\
    \jasminindent{1}\label{line:slh-example:inc}i += \jasminconstant{1};\\ \\
    \jasminindent{0}\jasminclosebrace{}\\
    \jasminindent{0}\\
    \jasminindent{0}\label{line:slh-example:update-msf}%
      \jasminprimitive{update\_msf}(i == \jasminconstant{10});\\
    \jasminindent{0}v = [a + \jasminconstant{5}];\\
    \jasminindent{0}\label{line:slh-example:protect}%
      v = \jasminprimitive{protect}(v);\\
    \jasminindent{0}\label{line:slh-example:leak}%
      [v] = \jasminconstant{0};
  \end{jasmincode}
}
\newcommand*\meminitspsprotected{
  \begin{jasmincode}[fontsize=\footnotesize,outerwidth=38ex,outerpos=t]
    \jasminindent{0}\msvar{} = \(\bot\);\\
    \jasminindent{0}\label{line:T-slh-example:init-msf}%
      \jasminkw{assert}(!\msvar); \msf{} = \msfnomask{};\\
    \jasminindent{0}[a + \jasminconstant{5}] = sec;\\
    \jasminindent{0}\dots \jasmincomment{// use a}\\
    \jasminindent{0}i = \jasminconstant{0};\\
    \jasminindent{0}\ileak{\text{(i < \jasminconstant{10})}};\\
    \jasminindent{0}%
      \jasminkw{while} \jasminhighlight{(\hd{\dirvar})} \jasminopenbrace{}\\
    \jasminindent{1}%
      \dirvar{} = \tl{\dirvar}; \msvar{} ||= !(i < \jasminconstant{10});\\
    \jasminindent{1}[a + i] = pub;\\
    \jasminindent{1}i += \jasminconstant{1};\\
    \jasminindent{1}\ileak{\text{(i < \jasminconstant{10})}};\\
    \jasminindent{0}\jasminclosebrace{}\\
    \jasminindent{0}%
      \dirvar{} = \tl{\dirvar}; \msvar{} ||= i < \jasminconstant{10};\\
    \jasminindent{0}\label{line:T-slh-example:update-msf}%
      \msf{} ||= !(i == \jasminconstant{10});\\
    \jasminindent{0}v = [a + \jasminconstant{5}];\\
    \jasminindent{0}\label{line:T-slh-example:protect}%
      v = \msf{} ? \jasminconstant{0} : v;\\
    \jasminindent{0}[v] = \jasminconstant{0};
  \end{jasmincode}
}
\newcommand*\meminitTprotected{
  \begin{jasmincode}[fontsize=\footnotesize,outerwidth=40ex,outerpos=t]
    \jasminindent{0}\retvar{} = $\bot$; \obsvar{} = []; \msvar{} = $\bot$;\\
    \jasminindent{0}\jasminkw{if} (\msvar{}) \jasminopenbrace{} \retvar{} = $\top$; \jasminclosebrace{}
      \msf{} = \msfnomask{};\\
    \jasminindent{0}\jasminkw{if} (!\retvar{}) \jasminopenbrace{}\\
    \jasminindent{1}\obsvar{} += [\oaddr{\text{(a + \jasminconstant{5})}}];\\
    \jasminindent{1}[a + \jasminconstant{5}] = sec;\\
    \jasminindent{1}\dots\\
    \jasminindent{1}i = \jasminconstant{0}; \\
    \jasminindent{1}\obsvar{} += [\obranch{\text{(i < \jasminconstant{10})}}];\\
    \jasminindent{1}\jasminkw{while} (\hd{\dirvar}) \jasminopenbrace{}\\
    \jasminindent{2}\dirvar{} = \tl{\dirvar}; \msvar{} ||= !(i < \jasminconstant{10});\\
    \jasminindent{2}\obsvar{} += [\oaddr{\text{(a + i)}}]; \\
    \jasminindent{2}[a + i] = pub;\\
    \jasminindent{2}i += \jasminconstant{1};\\
    \jasminindent{2}\obsvar{} += [\obranch{\text{(i < \jasminconstant{10})}}];\\
    \jasminindent{1}\jasminclosebrace{}\\
    \jasminindent{1}\dirvar{} = \tl{\dirvar}; \msvar{} ||= (i < \jasminconstant{10});\\
    \jasminindent{1}\msf{} ||= !(i == \jasminconstant{10});\\
    \jasminindent{1}\obsvar{} += [\oaddr{\text{(a + \jasminconstant{5})}}];\\
    \jasminindent{1}v = [a + \jasminconstant{5}];\\
    \jasminindent{1}v = \msf{} ? \jasminconstant{0} : v; \\
    \jasminindent{1}\obsvar{} += [\oaddr{\text{v}}];\\
    \jasminindent{1}[v] = \jasminconstant{0};\\
    \jasminindent{0}\jasminclosebrace{}
  \end{jasmincode}
}
\newcommand*\macrotinsecure{
  \begin{jasmincode}[fontsize=\footnotesize,outerwidth=36ex,outerpos=t]
    \jasminindent{0}\\
    \jasminindent{0}ro = sec; i = 0;\\
    \jasminindent{0}\\
    \jasminindent{0}\jasminkw{while} (i < md\_size) \jasminopenbrace{}\\
    \jasminindent{0}\\
    \jasminindent{0}\\
    \jasminindent{1}\label{line:mac-rotation:load}%
      new = [rotated\_mac + ro];\\
    \jasminindent{0}\\
    \jasminindent{1}[out + i] = new;\\
    \jasminindent{1}ro += \jasminconstant{1}; i += \jasminconstant{1};\\
    \jasminindent{1}\label{line:mac-rotation:mod}%
      ro = (md\_size <= ro) ? \jasminconstant{0} : ro;\\
    \jasminindent{1}\\
    \jasminindent{0}\jasminclosebrace{}\\
  \end{jasmincode}
}
\newcommand*\macrottranslated{
  \begin{jasmincode}[fontsize=\footnotesize,outerwidth=48ex,outerpos=t]
    \jasminindent{0}\obsvar{} = []; \msvar{} = \jasminconstant{false};\\
    \jasminindent{0}ro = sec; i = 0;\\
    \jasminindent{0}\obsvar{} = \obsvar{} ++ [\obranch{\text{(i < md\_size)}}];\\
    \jasminindent{0}\jasminkw{while} (\hd{\dirvar}) \jasminopenbrace{}\\
    \jasminindent{1}\dirvar{} = \tl{\dirvar}; \msvar{} ||= !(i < md\_size);\\
    \jasminindent{1}\obsvar{} = \obsvar{} ++ [\oaddr{\text{((rotated\_mac + ro) / 64)}}];\\
    \jasminindent{1}new = [rotated\_mac + ro];\\
    \jasminindent{1}\obsvar{} = \obsvar{} ++ [\oaddr{\text{((out + i) / 64)}}];\\
    \jasminindent{1}[out + i] = new;\\
    \jasminindent{1}ro += \jasminconstant{1}; i += \jasminconstant{1};\\
    \jasminindent{1}\label{line:mac-rotation-sps:mod}%
    ro = (md\_size <= ro) ? \jasminconstant{0} : ro;\\
    \jasminindent{1}\obsvar{} = \obsvar{} ++ [\obranch{\text{(i < md\_size)}}];\\
    \jasminindent{0}\jasminclosebrace{}\\
    \jasminindent{0}\dirvar{} = \tl{\dirvar}; \msvar{} ||= (i < md\_size);
  \end{jasmincode}
}
\newcommand*\caseone{
  \begin{jasmincode}[fontsize=\scriptsize,outerwidth=35ex,outerpos=t]
    \jasminindent{0}\jasmintype{uint8\_t} pub\_mask = \jasminconstant{15};\\
    \jasminindent{0}\jasmintype{uint8\_t} pub[\jasminconstant{16}] = \jasminopenbrace{}\jasminconstant{1}, \dots, \jasminconstant{16}\jasminclosebrace{};\\
    \jasminindent{0}\jasmintype{uint8\_t} pub2[\jasminconstant{512} * \jasminconstant{256}] = \jasminopenbrace{} \jasminconstant{20} \jasminclosebrace{};\\
    \jasminindent{0}\\
    \jasminindent{0}\jasmintype{uint8\_t} sec[\jasminconstant{16}] = \jasminopenbrace{} {\dots} \jasminclosebrace{};\\
    \jasminindent{0}\\
    \jasminindent{0}\jasminstorageclass{volatile} \jasmintype{uint8\_t} temp = \jasminconstant{0};\\
    \jasminindent{0}\jasminstorageclass{volatile} \jasmintype{bool} msf;\\
    \jasminindent{0}\jasmintype{uint8\_t} aux;\\
    \jasminindent{0}\\
    \jasminindent{0}\jasmintype{void} \jasmindname{case\_1}(\jasmintype{uint64\_t} idx) \jasminopenbrace{}\\
    \jasminindent{1}\jasminkw{if} (idx < pub\_size) \jasminopenbrace{}\\
    \jasminindent{2}temp \&= pub2[pub[idx] * \jasminconstant{512}];\\
    \jasminindent{1}\jasminclosebrace{}\\
    \jasminindent{0}\jasminclosebrace{}\\
    \jasminindent{0}\\
    \jasminindent{0}\jasmintype{void} \jasmindname{case\_1\_masked}(\jasmintype{uint64\_t} idx) \jasminopenbrace{}\\
    \jasminindent{1}\label{line:case1:bounds-check}%
      \jasminkw{if} (idx < pub\_size) \jasminopenbrace{}\\
    \jasminindent{2}aux = pub[idx \& pub\_mask];\\
    \jasminindent{2}temp \&= pub2[aux * \jasminconstant{512}];\\
    \jasminindent{1}\jasminclosebrace{}\\
    \jasminindent{0}\jasminclosebrace{}\\
    \jasminindent{0}\\
    \jasminindent{0}\jasmintype{void} \jasmindname{case\_1\_slh}(\jasmintype{uint64\_t} idx) \jasminopenbrace{}\\
    \jasminindent{1}msf = \jasminconstant{false};\\
    \jasminindent{1}\jasminkw{if} (idx < pub\_size) \jasminopenbrace{}\\
    \jasminindent{2}msf = !(idx < pub\_size) | msf;\\
    \jasminindent{2}aux = pub[idx] \& (msf - \jasminconstant{1});\\
    \jasminindent{2}temp \&= pub2[aux * \jasminconstant{512}];\\
    \jasminindent{1}\jasminclosebrace{}\\
    \jasminindent{0}\jasminclosebrace{}
\end{jasmincode}
}
\newcommand*\casefive{
  \begin{jasmincode}[fontsize=\scriptsize,outerwidth=37ex,outerpos=t,startingno=32]
    \jasminindent{0}\jasmintype{void} \jasmindname{case\_5}(\jasmintype{uint64\_t} idx) \jasminopenbrace{}\\
    \jasminindent{1}\jasmintype{int64\_t} i;\\
    \jasminindent{1}\jasminkw{if} (idx < pub\_size) \jasminopenbrace{}\\
    \jasminindent{2}\label{line:case5:loop}%
      \jasminkw{for} (i = idx - \jasminconstant{1}; i >= \jasminconstant{0}; i-{}-) \jasminopenbrace{}\\
    \jasminindent{3}temp \&= pub2[pub[i] * \jasminconstant{512}];\\
    \jasminindent{2}\jasminclosebrace{}\\
    \jasminindent{1}\jasminclosebrace{}\\
    \jasminindent{0}\jasminclosebrace{}\\
    \jasminindent{0}\\
    \jasminindent{0}\jasmintype{void} \jasmindname{case\_5\_masked}(\jasmintype{uint64\_t} idx) \jasminopenbrace{}\\
    \jasminindent{1}\jasmintype{int64\_t} i;\\
    \jasminindent{1}\jasminkw{if} (idx < pub\_size) \jasminopenbrace{}\\
    \jasminindent{2}\jasminkw{for} (i = idx - \jasminconstant{1}; i >= \jasminconstant{0}; i-{}-) \jasminopenbrace{}\\
    \jasminindent{3}aux = pub[i \& pub\_mask];\\
    \jasminindent{3}temp \&= pub2[aux * \jasminconstant{512}];\\
    \jasminindent{2}\jasminclosebrace{}\\
    \jasminindent{1}\jasminclosebrace{}\\
    \jasminindent{0}\jasminclosebrace{}\\
    \jasminindent{0}\\
    \jasminindent{0}\jasmintype{void} \jasmindname{case\_5\_slh}(\jasmintype{uint64\_t} idx) \jasminopenbrace{}\\
    \jasminindent{1}\jasmintype{int64\_t} i;\\
    \jasminindent{1}msf = \jasminconstant{false};\\
    \jasminindent{1}\jasminkw{if} (idx < pub\_size) \jasminopenbrace{}\\
    \jasminindent{2}msf = !(idx < pub\_size) | msf;\\
    \jasminindent{2}\jasminkw{for} (i = idx - \jasminconstant{1}; i >= \jasminconstant{0}; i-{}-) \jasminopenbrace{}\\
    \jasminindent{3}msf = !(i >= \jasminconstant{0}) | msf;\\
    \jasminindent{3}aux = pub[i] \& (msf - \jasminconstant{1});\\
    \jasminindent{3}temp \&= pub2[aux * \jasminconstant{512}];\\
    \jasminindent{2}\jasminclosebrace{}\\
    \jasminindent{1}\jasminclosebrace{}\\
    \jasminindent{0}\jasminclosebrace{}
  \end{jasmincode}
}
\newcommand*\casevoneslh{
  \begin{jasmincode}[fontsize=\footnotesize,outerwidth=23ex]
    \jasminindent{0}\jasmindname{init\_msf}();\\
    \jasminindent{0}s = \jasminconstant{0}; i = \jasminconstant{0};\\
    \jasminindent{0}\jasminkw{while} (i < \jasminconstant{10}) \jasminopenbrace{}\\
    \jasminindent{1}t = p[i];\\
    \jasminindent{1}s += t;\\
    \jasminindent{1}i += \jasminconstant{1};\\
    \jasminindent{0}\jasminclosebrace{}\\
    \jasminindent{0}\jasmindname{update\_msf}(i == 10);\\
    \jasminindent{0}s = \jasmindname{protect}(s);
  \end{jasmincode}
}
\newcommand*\examplevfour{
  \begin{jasmincode}[fontsize=\footnotesize,outerwidth=14ex,outerpos=t]
    \jasminindent{0}\label{line:v4:store-sec}[a] = sec;\\
    \jasminindent{0}\label{line:v4:store-pub}[a] = pub;\\
    \jasminindent{0}\label{line:v4:load}v = [a];\\ \\
    \jasminindent{0}\label{line:v4:leak}[v] = \jasminconstant{0};
  \end{jasmincode}
}
\newcommand*\exampleTvfour{
  \begin{jasmincode}[fontsize=\footnotesize,outerwidth=16ex,outerpos=t]
    \jasminindent{0}[a] = sec;\\
    \jasminindent{0}[a] = pub;\\
    \jasminindent{0}v = [a];\\
    \jasminindent{0}\label{line:Tv4:fence}%
      \jasminprimitive{init\_msf}();\\
    \jasminindent{0}[v] = \jasminconstant{0};
  \end{jasmincode}
}
\newcommand*\labelfigsem{fig:source-semantics}
\newcommand*\labelsemassign{Assign}
\newcommand*\labelsemload{Load}
\newcommand*\labelsemstore{Store}
\newcommand*\labelsemcond{Cond}
\newcommand*\labelsemwhile{While}
\newcommand*\labelseminit{Init}
\newcommand*\labelsemupdate{Update}
\newcommand*\labelsemprotect{Protect}
\newcommand*\labelsemseq{Seq}
\newcommand*\labelsemskip{Skip}
\newcommand*\labelsemrefl{Refl}
\newcommand*\labelsemtrans{Trans}
\newcommand*\labelsembig{Big}
\newcommand*\refsemassign{\refrule{\labelfigsem}{\labelsemassign}}
\newcommand*\refsemload{\refrule{\labelfigsem}{\labelsemload}}
\newcommand*\refsemstore{\refrule{\labelfigsem}{\labelsemstore}}
\newcommand*\refsemcond{\refrule{\labelfigsem}{\labelsemcond}}
\newcommand*\refsemwhile{\refrule{\labelfigsem}{\labelsemwhile}}
\newcommand*\refseminit{\refrule{\labelfigsem}{\labelseminit}}
\newcommand*\refsemupdate{\refrule{\labelfigsem}{\labelsemupdate}}
\newcommand*\refsemprotect{\refrule{\labelfigsem}{\labelsemprotect}}
\newcommand*\refsemrefl{\refrule{\labelfigsem}{\labelsemrefl}}
\newcommand*\refsemtrans{\refrule{\labelfigsem}{\labelsemtrans}}
\newcommand*\refsembig{\refrule{\labelfigsem}{\labelsembig}}
\newcommand*{\figspecsem}{
  \small
  \begin{mathpar}
    \inferrule[\labelsemassign]{
      \vm' = \mset{\vm}{x}{\eval{e}{\vm}}
    }{
      \sem{
        \st{\iassign{x}{e}}{\vm}{\mem}{\ms}
      }{\dstep}{\lnil}{
        \st{\cskip}{\vm'}{\mem}{\ms}
      }
    }

    \inferrule[\labelsemload]{
      i = \eval{e}{\vm}\\
      \vm' = \mset{\vm}{x}{\mem(i)}
    }{
      \sem{
        \st{\iload{x}{e}}{\vm}{\mem}{\ms}
      }{\dstep}{[\oaddr{i}]}{
        \st{\cskip}{\vm'}{\mem}{\ms}
      }
    }

    \inferrule[\labelsemstore]{
      i = \eval{e}{\vm}
    }{
      \sem{
        \st{\istore{e}{x}}{\vm}{\mem}{\ms}
      }{\dstep}{[\oaddr{i}]}{
        \st{\cskip}{\vm}{\mset{\mem}{i}{\vm(x)}}{\ms}
      }
    }

    \inferrule[\labelsemcond]{
      b' = \eval{e}{\vm}
    }{
      \sem{
        \st{\iif{e}{c_\top}{c_\bot}}{\vm}{\mem}{\ms}
      }{[\dforce{b}]}{[\obranch{b'}]}{
        \st{c_{b}}{\vm}{\mem}{\ms \lor (b \neq b')}
      }
    }

    \inferrule[\labelsemwhile]{
      b' = \eval{e}{\vm}\\
      c_\top = c_w \capp \iwhile{e}{c_w}\\
      c_\bot = \cskip
    }{
      \sem{
        \st{\iwhile{e}{c_w}}{\vm}{\mem}{\ms}
      }{[\dforce{b}]}{[\obranch{b'}]}{
        \st{c_{b}}{\vm}{\mem}{\ms \lor (b \neq b')}
      }
    }

    \inferrule[\labelseminit]{
      \vm' = \mset{\vm}{\msf}{\msfnomask}
    }{
      \sem{
        \st{\iinitmsf}{\vm}{\mem}{\msseq}
      }{\dstep}{\lnil}{
        \st{\cskip}{\vm'}{\mem}{\msseq}
      }
    }

    \inferrule[\labelsemupdate]{
      \vm' = \mathite{\eval{e}{\vm}}{\vm}{\mset{\vm}{\msf}{\msfmask}}
    }{
      \sem{
        \st{\iupdatemsf{e}}{\vm}{\mem}{\ms}
      }{\dstep}{\lnil}{
        \st{\cskip}{\vm'}{\mem}{\ms}
      }
    }

    \inferrule[\labelsemprotect]{
      v = \mathite{\vm(\msf) = \msfmask}{0}{\eval{e}{\vm}}
    }{
      \sem{
        \st{\iprotect{x}{e}}{\vm}{\mem}{\ms}
      }{\dstep}{\lnil}{
        \st{\cskip}{\mset{\vm}{x}{v}}{\mem}{\ms}
      }
    }

    \inferrule[\labelsemseq]{
      \sem{
        \st{c}{\vm}{\mem}{\ms}
      }{\seq{d}}{\seq{o}}{
        \st{c'}{\vm'}{\mem'}{\ms'}
      }
    }{
      \sem{
        \st{c \capp c''}{\vm}{\mem}{\ms}
      }{\seq{d}}{\seq{o}}{
        \st{c' \capp c''}{\vm'}{\mem'}{\ms'}
      }
    }

    \inferrule[\labelsemskip]{
    }{
      \sem{
        \st{\cskip \capp c}{\vm}{\mem}{\ms}
      }{\lnil}{\lnil}{
        \st{c}{\vm}{\mem}{\ms}
      }
    }

    \inferrule[\labelsemrefl]{
    }{
      \sem*{s}{\lnil}{\lnil}{s}
    }

    \inferrule[\labelsemtrans]{
      \sem{s}{\seq{d_1}}{\seq{o_1}}{s'}\\
      \sem*{s'}{\seq{d_2}}{\seq{o_2}}{s''}
    }{
      \sem*{s}{\seq{d_1} \lapp \seq{d_2}}{\seq{o_1} \lapp \seq{o_2}}{s''}
    }

    \inferrule[\labelsembig]{
      \sem*{s}{\seq{d}}{\seq{o}}{s'}\\
      \final{s'}
    }{
      \sbigsem{s}{\seq{d}}{\seq{o}}
    }
  \end{mathpar}
}
\newcommand*\labelfigtgtsem{fig:target-semantics}
\newcommand*\labelsemassertT{Assert\(_\top\)}
\newcommand*\labelsemassertF{Assert\(_\bot\)}
\newcommand*\labelsemseqE{Seq\(_\sterror\)}
\newcommand*\labelsemtransE{Trans\(_\sterror\)}
\newcommand*\labelsemtransN{Trans\(_\mathrm{N}\)}
\newcommand*\reftgtsemassign{\refrule{\labelfigtgtsem}{\labelsemassign}}
\newcommand*\reftgtsemassertT{\refrule{\labelfigtgtsem}{\labelsemassertT}}
\newcommand*\reftgtsemassertF{\refrule{\labelfigtgtsem}{\labelsemassertF}}
\newcommand*\reftgtsemrefl{\refrule{\labelfigtgtsem}{\labelsemrefl}}
\newcommand*\reftgtsemtrans{\refrule{\labelfigtgtsem}{\labelsemtrans}}
\newcommand*{\figtgtsem}{
  \small
  \begin{mathpar}
    \inferrule[\labelsemassign]{
    }{
      \sem{
        \st{\iassign{x}{e}}{\vm}{\mem}{}
      }{}{\lnil}{
        \st{\cskip}{\mset{\vm}{x}{\eval{e}{\vm}}}{\mem}{}
      }
    }

    \inferrule[\labelsemload]{
      i = \eval{e}{\vm}
    }{
      \sem{
        \st{\iload{x}{e}}{\vm}{\mem}{}
      }{}{[\oaddr{i}]}{
        \st{\cskip}{\mset{\vm}{x}{\mem(i)}}{\mem}{}
      }
    }

    \inferrule[\labelsemstore]{
      i = \eval{e}{\vm}
    }{
      \sem{
        \st{\istore{e}{x}}{\vm}{\mem}{}
      }{}{[\oaddr{i}]}{
        \st{\cskip}{\vm}{\mset{\mem}{i}{\vm(x)}}{}
      }
    }

    \inferrule[\labelsemcond]{
      b = \eval{e}{\vm}
    }{
      \sem{
        \st{\iif{e}{c_\top}{c_\bot}}{\vm}{\mem}{}
      }{}{[\obranch{b}]}{
        \st{c_b}{\vm}{\mem}{}
      }
    }

    \inferrule[\labelsemwhile]{
      b = \eval{e}{\vm}\\
      c' = \mathite{b}{c_w \capp \iwhile{e}{c_w}}{\cskip}
    }{
      \sem{
        \st{\iwhile{e}{c_w}}{\vm}{\mem}{}
      }{}{[\obranch{b}]}{
        \st{c'}{\vm}{\mem}{}
      }
    }

    \inferrule[\labelsemassertF]{
      \eval{e}{\vm} = \bot
    }{
      \sem{
        \st{\iassert{e}}{\vm}{\mem}{}
      }{}{\lnil}{
        \sterror
      }
    }

    \inferrule[\labelsemassertT]{
      \eval{e}{\vm} = \top
    }{
      \sem{
        \st{\iassert{e}}{\vm}{\mem}{}
      }{}{\lnil}{
        \st{\cskip}{\vm}{\mem}{}
      }
    }

    \inferrule[\labelsemseq]{
      \sem{
        \st{c}{\vm}{\mem}{}
      }{}{\seq{o}}{
        \st{c'}{\vm'}{\mem'}{}
      }
    }{
      \sem{
        \st{c \capp c''}{\vm}{\mem}{}
      }{}{\seq{o}}{
        \st{c' \capp c''}{\vm'}{\mem'}{}
      }
    }

    \inferrule[\labelsemseqE]{
      \sem{\st{c}{\vm}{\mem}{}}{}{\lnil}{\sterror}
    }{
      \sem{
        \st{c \capp c'}{\vm}{\mem}{}
      }{}{\lnil}{
        \sterror
      }
    }

    \inferrule[\labelsemskip]{
    }{
      \sem{
        \st{\cskip \capp c}{\vm}{\mem}{}
      }{}{\lnil}{
        \st{c}{\vm}{\mem}{}
      }
    }

    \inferrule[\labelsemrefl]{
    }{
      \sem*{s}{}{\lnil}{s}
    }

    \inferrule[\labelsemtrans]{
      \sem{s}{}{\seq{o_1}}{s'}\\
      \sem*{s'}{}{\seq{o_2}}{s''_{\sterror}}
    }{
      \sem*{s}{}{\seq{o_1} \lapp \seq{o_2}}{s''_{\sterror}}
    }

    \inferrule[\labelsemtransE]{
      \sem{s}{}{\seq{o_1}}{\sterror}
    }{
      \sem*{s}{}{\seq{o_1}}{\sterror}
    }

    \inferrule[\labelsemtransN]{
    }{
      \nsem{s}{}{\lnil}{s}{0}
    }

    \inferrule[\labelsemtransN]{
      \sem{s}{}{\seq{o_1}}{s'}\\
      \nsem{s'}{}{\seq{o_2}}{s''}{n}
    }{
      \nsem{s}{}{\seq{o_1} \lapp \seq{o_2}}{s''}{n + 1}
    }

    \inferrule[\labelsembig]{
      \sem*{s}{}{\seq{o}}{\st{\cskip}{\vm}{\mem}{}}
      \quad \text{or} \quad
      \sem*{s}{}{\seq{o}}{\sterror}
    }{
      \tbigsem{s}{\seq{o}}
    }
  \end{mathpar}
}
\newcommand*\labelfigssbsem{fig:ssb-semantics}
\newcommand*\refssbsemstore{\refrule{\labelfigssbsem}{\labelsemstore}}
\newcommand*\refssbsemload{\refrule{\labelfigssbsem}{\labelsemload}}
\newcommand*\refssbseminit{\refrule{\labelfigssbsem}{\labelseminit}}
\newcommand*{\figssbsem}{
  \small
  \begin{mathpar}
    \inferrule[\labelsemload]{
      i = \eval{e}{\vm}
      \\
      \vm' = \mset{\vm}{x}{\mem(i)_{n}}
    }{
      \sem{
        \st{\iload{x}{e}}{\vm}{\mem}{\ms}
      }{[\dload{n}]}{[\oaddr{i}]}{
        \st{\cskip}{\vm'}{\mem}{\ms \lor (0 \neq n)}
      }
    }

    \inferrule[\labelsemstore]{
      i = \eval{e}{\vm}
      \\
      \mem' = \mset{\mem}{i}{\vm(x) \lapp \mem(a)}
    }{
      \sem{
        \st{\istore{e}{x}}{\vm}{\mem}{\ms}
      }{\dstep}{[\oaddr{i}]}{
        \st{\cskip}{\vm}{\mem'}{\ms}
      }
    }

   \inferrule[\labelseminit]{
     \vm' = \mset{\vm}{\msf}{\msfnomask}
     \\
     \qA{i}{\mem'(i) = \mem(i)_{0}}
   }{
     \sem{
       \st{\iinitmsf}{\vm}{\mem}{\msseq}
     }{\dstep}{\lnil}{
       \st{\cskip}{\vm'}{\mem'}{\msseq}
     }
   }
  \end{mathpar}
}
\newcommand*{\figtrans}{%
  \small
  \begin{mathpar}
    \renewcommand*\arraystretch{1.3}
    \begin{array}[t]{r @{{}\eqdef{}} l}
      \Tprogstar{\iassign{x}{e}} & \iassign{x}{e}\\
      \Tprogstar{\iload{x}{e}} & \iload{x}{e}\\
      \Tprogstar{\istore{e}{x}} & \istore{e}{x}\\
      \Tprogstar{\cnil} & \cnil\\
      \Tprogstar{\iif{e}{c_\top}{c_\bot}} &
        \renewcommand*\arraystretch{1}
        \begin{array}[t]{@{}l}
          \ileak{e}\capp \\
          \mathbf{if}~(\hd{\dirvar}) \{ \\
          \quad \iassign{\dirvar}{\tl{\dirvar}}\capp \\
          \quad \iassign{\msvar}{\msvar\lor \neg~e}\capp \\
          \quad \Tprogstar{c_\top} \\
          \}~\mathbf{else}~\{ \\
          \quad \iassign{\dirvar}{\tl{\dirvar}}\capp \\
          \quad \iassign{\msvar}{\msvar\lor e}\capp \\
          \quad \Tprogstar{c_\bot} \\
          \}
        \end{array}
    \end{array}

    \begin{array}[t]{r @{{}\eqdef{}} l}
      \Tprogstar{\iwhile{e}{c}} &
        \renewcommand*\arraystretch{1}
        \begin{array}[t]{@{}l}
          \ileak{e}\capp \\
          \mathbf{while}~(\hd{\dirvar})~\{\\
          \quad \iassign{\dirvar}{\tl{\dirvar}}\capp  \\
          \quad \iassign{\msvar}{\msvar\lor \neg e}\capp \\
          \quad \Tprogstar{c}\capp \\
          \quad \ileak{e} \\
          \}\capp \\
          \iassign{\dirvar}{\tl{\dirvar}}\capp \\
          \iassign{\msvar}{\msvar \lor e}
        \end{array}\\
      \Tprogstar{\iinitmsf{}} & \iassert{\neg \msvar} \capp \iassign{\msf}{\msfnomask}\\
      \Tprogstar{\iupdatemsf{e}} & \iassign{\msf}{\msf\lor \neg~e}\\
      \Tprogstar{\iprotect{x}{e}} & \iassign{x}{\eite{\msf}{0}{e}}\\
      \Tprogstar{c \capp c'} & \Tprogstar{c} \capp \Tprogstar{c'}
    \end{array}

    \Tprog{c} \eqdef \iassign{\msvar}{\bot} \capp \Tprogstar{c}
\end{mathpar}
}
\newcommand*{\figssbtrans}{
  \small\renewcommand*\arraystretch{1.3}%
  \[\begin{array}{r @{{}\eqdef{}} l}
    \Tssbstar{\iload{x}{e}} &
      \begin{array}[t]{@{}l}
        \iassign{x}{\nth{[e]}{\hd{\dirvar}}}\capp\\
        \iassign{\msvar}{\msvar \lor (0 \neq \hd{\dirvar})}\capp\\
        \iassign{\dirvar}{\tl{\dirvar}}
      \end{array}
    \\
    \Tssbstar{\istore{e}{x}} & \istore{e}{\app{[e]}{x}}
    \\
    \Tssbstar{\iinitmsf} &
      \renewcommand*\arraystretch{1}
      \begin{array}[t]{@{}l}
        \iassert{\neg \msvar}\capp \iassign{\msf}{\msfnomask}\capp\iclearmem{}
      \end{array}
  \end{array}\]
}
\newcommand*{\figasserttrans}{
  \small\renewcommand*\arraystretch{1.3}%
  \[\begin{array}{r @{{}\eqdef{}} l}
      \Tassstar{\cnil} & \cnil
      \\
      \Tassstar{\iassert{e}} & \iift{\neg e}{\iassign{\retvar}{\top}}
      \\
      \Tassstar{\iif{e}{c_\top}{c_\bot}} & \iif{e}{\Tassstar{c_\top}}{\Tassstar{c_\bot}}
      \\
      \Tassstar{\iwhile{e}{c_{body}} \capp c} &
        \renewcommand*\arraystretch{1.3}
        \begin{cases}
          \iwhile{e}{\Tassstar{c_{body}}}
          & \text{if } \Tassstar{c_{body}} = c_{body} \text,
          \\
          \iwhile{e \land \neg\retvar}{\Tassstar{c_{body}}}
          & \text{otherwise.}
        \end{cases}
      \\
      \Tassstar{c \capp c'} &
        \begin{cases}
          \Tassstar{c} \capp \Tassstar{c'} & \text{if } \Tassstar{c} = c \text,
          \\
          \Tassstar{c} \capp \iift{\neg\retvar}{\Tassstar{c}} & \text{otherwise.}
        \end{cases}
      \\
      \Tass{c} & \iassign{\retvar}{\bot} \capp \Tassstar{c}
  \end{array}\]
}
\newcommand*{\figleaktrans}{
  \small
  \begin{mathpar}
    \renewcommand{\arraystretch}{1.3}
    \begin{array}[t]{r @{{}\eqdef{}} l}
      \Tleakstar{\iassign{x}{e}} & \iassign{x}{e}
      \\
      \Tleakstar{\iload{x}{e}}
      & \renewcommand{\arraystretch}{1}
      \begin{array}[t]{@{}l}
        \iappend{\obsvar}{\oaddr{e}} \capp\\
        \iload{x}{e}
      \end{array}
      \\
      \Tleakstar{\istore{x}{e}}
      & \renewcommand{\arraystretch}{1}
      \begin{array}[t]{@{}l}
        \iappend{\obsvar}{\oaddr{e}} \capp\\
        \istore{e}{x}
      \end{array}
      \\
      \Tleakstar{\cnil} & \cnil
      \\
      \Tleakstar{c \capp c'} & \Tleakstar{c} \capp \Tleakstar{c'}
    \end{array}

    \begin{array}[t]{r @{{}\eqdef{}} l}
      \Tleakstar{\iif{e}{c}{c'}}
      & \renewcommand{\arraystretch}{1}
      \begin{array}[t]{@{}l}
        \iappend{\obsvar}{\obranch{e}} \capp\\
        \iif{e}{\Tleakstar{c}}{\Tleakstar{c'}}
      \end{array}
      \\
      \Tleakstar{\iwhile{e}{c}}
      & \renewcommand{\arraystretch}{1}
      \begin{array}[t]{@{}l}
        \iappend{\obsvar}{\obranch{e}} \capp\\
        \mathbf{while}~(e)~\{\\
        \quad \Tleakstar{c} \capp\\
        \quad \iappend{\obsvar}{\obranch{e}}\\
        \}
      \end{array}
    \end{array}

    \Tleak{c} \eqdef{} \iassign{\obsvar}{\enil} \capp \Tleakstar{c}
  \end{mathpar}
}
\newcommand*\figproducttrans{
  \small
  \begin{mathpar}
    \renewcommand{\arraystretch}{1.3}
    \begin{array}[t]{r @{{}\eqdef{}} l}
      \Tprod*{\iassign{x}{e}}
      & \iassign{\rtag{x}{1}}{\rtag{e}{1}} \capp
        \iassign{\rtag{x}{2}}{\rtag{e}{2}}
      \\
      \Tprod*{\iload{x}{e}}
      & \renewcommand{\arraystretch}{1}
      \begin{array}[t]{@{}l}
        \iassert{\rtag{e}{1} = \rtag{e}{2}} \capp\\
        \iload{\rtag{x}{1}}{\rtag{e}{1}} \capp\\
        \iload{\rtag{x}{2}}{\rtag{e}{2}}
      \end{array}
      \\
      \Tprod*{\istore{e}{x}}
      & \renewcommand{\arraystretch}{1}
      \begin{array}[t]{@{}l}
        \iassert{\rtag{e}{1} = \rtag{e}{2}} \capp\\
        \istore{\rtag{e}{1}}{\rtag{x}{1}} \capp\\
        \istore{\rtag{e}{2}}{\rtag{x}{2}}
      \end{array}
      \\
      \Tprod*{\cnil} & \cnil
    \end{array}

    \begin{array}[t]{r @{{}\eqdef{}} l}
      \Tprod*{\iif{e}{c}{c'}}
      & \renewcommand{\arraystretch}{1}
      \begin{array}[t]{@{}l}
        \iassert{\rtag{e}{1} = \rtag{e}{2}} \capp\\
        \iif{\rtag{e}{1}}{\Tprod*{c}}{\Tprod*{c'}}
      \end{array}
      \\
      \Tprod*{\iwhile{e}{c}}
      & \renewcommand{\arraystretch}{1}
      \begin{array}[t]{@{}l}
        \iassert{\rtag{e}{1} = \rtag{e}{2}} \capp\\
        \mathbf{while}~(\rtag{e}{1})~\{\\
        \quad \Tprod*{c} \capp\\
        \quad \iassert{\rtag{e}{1} = \rtag{e}{2}}\\
        \}
      \end{array}
      \\
      \Tprod*{c \capp c'} & \Tprod*{c} \capp \Tprod*{c'}
    \end{array}

    \Tprod{c}{\phi} \eqdef \iift{\phi}{\Tprod*{c}}
  \end{mathpar}
}
\newcommand*\tabspsbinsec{
  \small
  \begin{tabular}{l l r r r}
    \toprule
    Variant
    & Kind
    & \multicolumn{1}{l}{SPS-\binsec{}}
    & \multicolumn{1}{l}{\binsechaunted{}}
    & \multicolumn{1}{l}{SPS-\ctgrind{}}
    \\
    \midrule
    Vulnerable
    & Loop-Free
    & 12\evaltimes{} \evalbug{}
    & 12\evaltimes{} \evalbug{}
    & 12\evaltimes{} \evalbug{}
    \\
    & Loop
    & 4\evaltimes{} \evalbug{}
    & 4\evaltimes{} \evalbug{}
    & 4\evaltimes{} \evalbug{}
    \\[1ex]
    Index-masked
    & Loop-Free
    & 12\evaltimes{} \evalok{}
    & 12\evaltimes{} \evalok{}
    & 12\evaltimes{} \evalabort{}
    \\
    & Loop
    & 4\evaltimes{} \evalabort{}
    & 4\evaltimes{} \evalabort{}
    & 4\evaltimes{} \evalabort{}
    \\[1ex]
    SelSLH-protected
    & Loop-Free
    & 12\evaltimes{} \evalok{}
    & 12\evaltimes{} \evalok{}
    & 12\evaltimes{} \evalabort{}
    \\
    & Loop
    & 4\evaltimes{} \evalabort{}
    & 4\evaltimes{} \evalabort{}
    & 4\evaltimes{} \evalabort{}
    \\
    \bottomrule
  \end{tabular}
}
\begin{document}

\title{(Dis)Proving Spectre Security with Speculation-Passing Style}

\author{Santiago Arranz-Olmos}
\orcid{0009-0007-7425-570X}
\affiliation{%
  \institution{MPI-SP}
  \city{Bochum}
  \country{Germany}
}
\email{santiago.arranz-olmos@mpi-sp.org}

\author{Gilles Barthe}
\orcid{0000-0002-3853-1777}
\affiliation{%
  \institution{MPI-SP}
  \city{Bochum}
  \country{Germany}
}
\affiliation{%
  \institution{IMDEA Software Institute}
  \city{Madrid}
  \country{Spain}
}
\email{gilles.barthe@mpi-sp.org}

\author{Lionel Blatter}
\orcid{0000-0001-9058-2005}
\affiliation{%
  \institution{MPI-SP}
  \city{Bochum}
  \country{Germany}
}
\email{lionel.blatter@mpi-sp.org}

\author{Xingyu Xie}
\orcid{0000-0002-2220-7294}
\affiliation{%
  \institution{MPI-SP}
  \city{Bochum}
  \country{Germany}
}
\email{xingyu.xie@mpi-sp.org}

\author{Zhiyuan Zhang}
\orcid{0009-0000-2669-5654}
\affiliation{%
  \institution{MPI-SP}
  \city{Bochum}
  \country{Germany}
}
\email{zhiyuan.zhang@mpi-sp.org}

\begin{CCSXML}
<ccs2012>
   <concept>
       <concept_id>10002978.10002986.10002990</concept_id>
       <concept_desc>Security and privacy~Logic and verification</concept_desc>
       <concept_significance>500</concept_significance>
       </concept>
 </ccs2012>
\end{CCSXML}

\ccsdesc[500]{Security and privacy~Logic and verification}

\keywords{Constant-Time, Speculative Constant-Time, Program Transformation, Program Analysis, Microarchitectural Side-Channel Security}

\begin{abstract}
  Constant-time (CT) verification tools are commonly used for
  detecting potential side-channel vulnerabilities in cryptographic
  libraries.  Recently, a new class of tools, called speculative
  constant-time (SCT) tools, has also been used for detecting
  potential Spectre vulnerabilities. In many cases, these SCT tools
  have emerged as liftings of CT tools. However, these liftings are
  seldom defined precisely and are almost never analyzed formally. The
  goal of this paper is to address this gap, by developing formal
  foundations for these liftings, and to demonstrate that these
  foundations can yield practical benefits.

  Concretely, we introduce a program transformation, coined
  Speculation-Passing Style (SPS), for reducing SCT verification to CT
  verification. Essentially, the transformation instruments the
  program with a new input that corresponds to attacker-controlled
  predictions and modifies the program to follow them. This approach
  is sound and complete, in the sense that a program is
  SCT if and only if its SPS transform is CT\@. Thus, we can leverage
  existing CT verification tools to prove SCT\@; we illustrate this by
  combining SPS with three standard methodologies for CT verification,
  namely reducing it to noninterference, assertion safety, and dynamic
  taint analysis. We realize these combinations with three existing tools,
  \easycrypt{}, \binsecrel{}, and \ctgrind{}, and we evaluate them on Kocher's
  benchmarks for Spectre-v1. Our results focus on Spectre-v1 in the
  standard CT leakage model; however, we also discuss applications of
  our method to other variants of Spectre and other leakage models.
\end{abstract}

%%% Local Variables:
%%% mode: LaTeX
%%% TeX-master: "dedsct"
%%% End:

\maketitle

\section{Introduction}%
\label{sec:intro}
The constant-time programming discipline is a gold standard for
cryptographic libraries~\cite{CT_Survey,Geimer2023} that protects
software against timing- and cache-based side-channel attacks.
Such attacks can invalidate all security guarantees of critical
cryptographic implementations~\cite{DBLP:conf/sp/AlFardanP13,AlbrechtP16,%
  DBLP:journals/joc/TromerOS10,DBLP:conf/ccs/GenkinVY17,%
  DBLP:journals/jce/YaromGH17,DBLP:conf/ccs/RonenPS18,%
  DBLP:conf/ccs/AranhaN0TY20}.
Unfortunately, writing constant-time code is extremely difficult, even
for experts.
The challenges of writing CT code, and the risks of deploying code that
is not CT, have stirred the development of CT analysis tools,
which can help programmers ensure that their code is CT~\cite{%
CT_Survey,Geimer2023,%
  DBLP:conf/sp/BarbosaBBBCLP21,DBLP:conf/sp/JancarFBSSBFA22}.

Despite its success in protecting cryptography from side-channel
attacks, the CT discipline emerged more than twenty years ago and
offers no protection against Spectre attacks~\cite{spectre}. These
attacks exploit speculative execution (see,
e.g.,~\cite{DBLP:conf/uss/CanellaB0LBOPEG19}), rendering CT mitigations
ineffective because they fall outside of the threat model of CT, which
is based on a sequential execution model.
Speculative constant-time~\cite{pitchfork,highassuranceSpectre}
remedies this gap by lifting the principles of CT to a threat model
based on a speculative model of execution, and has already been adopted
by several post-quantum cryptography implementations.
The drastically stronger threat model of SCT makes writing SCT code even
harder than writing constant-time code.
To aid programmers in writing SCT code, numerous SCT analysis tools have
emerged over the last five years~\cite{SCT_Survey}. These efforts have
given programmers access to a range of complementary tools based on
different techniques such as dynamic analysis, symbolic execution,
static analysis, and type systems.

\paragraph{Research Questions}
Many SCT verification tools have been conceived as liftings of CT
verification tools. However, these liftings are generally described
informally, and used to inform the design of the new tool. They are
very seldom formalized, and almost never studied formally---a notable
exception is~\cite{DBLP:journals/pacmpl/BrotzmanZKT21}. The lack of
formal foundations for these liftings appears as a missed opportunity.
After all, some of the most successful approaches to reasoning about
constant-time and information flow are arguably three methodologies:
self-composition safety~\cite{DBLP:journals/mscs/BartheDR11},
product programs~\cite{DBLP:conf/fm/ZaksP08,Barthe2016,DBLP:conf/uss/AlmeidaBBDE16},
and dynamic taint analysis~\cite{ctgrind}.

This raises the questions of whether there exists a unifying method
for lifting CT verification to SCT verification, and whether the
method can be used to deliver new or better methods. This paper
answers both questions positively, by introducing a novel program
transformation. Our transformation, which we call speculation-passing
style (SPS) and denote \Tprogname{}, is defined such that a
program~\(c\) is SCT if and only if~\Tprog{c} is CT\@.

Speculation-Passing Style has two main benefits. First, cryptographic
implementers can continue using their favorite CT tools also for SCT
analysis, simply by applying the SPS transformation before analysis.
This also enables them to use deductive verification tools to prove SCT
for more sophisticated countermeasures, which was not possible before.
Second, tool developers can use our method as a blueprint to build SCT
tools and focus on CT verification, saving development and maintenance
effort. In this paper, we focus on establishing the theoretical
foundations of our approach and on showcasing its viability on
representative examples.

\paragraph{Detailed Contributions}
In this paper, we mostly focus on Spectre-v1 attacks, where the
attacker hijacks the branch predictor to take partial control over the
victim program's control flow, and the standard timing leakage model,
where conditional branches and memory accesses leak their guards and
addresses, respectively. Writing SCT code typically involves introducing
mitigations. Even though our approach can verify other
mitigations against Spectre-v1 attacks, e.g., index masking, we center
much of our discussion around \emph{selective speculative load
hardening} (SelSLH)~\cite{DBLP:conf/sp/ShivakumarBBCCGOSSY23}, a
countermeasure which refines LLVM's proposal for speculative load
hardening~\cite{carruth2018slh} and has been used to protect
high-assurance cryptography.

In this setting, we formalize the SPS program transformation and
analyze its theoretical foundations. We further substantiate this main
contribution with three additional contributions: two extensions of
the transformation (to fine-grained leakage models and to a different
Spectre variant); end-to-end methods for verifying SCT by combining
SPS with existing techniques; and an evaluation demonstrating their
feasibility. We structure our contributions, and this paper, as
follows.

\paragraph{Speculation-Passing Style}
The first, and main contribution of this paper, is the
speculation-passing style transformation, which is a sound and
complete method for reducing verification of SCT to verification
of CT\@.
At a high level, SPS internalizes speculative execution into
sequential execution, reminiscent of how continuation-passing style
internalizes returns as callbacks.
Remarkably, this reduction can be established without
loss of precision---this follows from the SCT threat model assuming that
the attacker completely controls the program's control flow.
More precisely, the reduction states that a program~\(c\) is speculative
constant-time if and only if \Tprog{c}~is constant-time.
To prove this result, we prove that the speculative leakage of a
program~\(c\) is in precise correspondence with the sequential leakage
of the program~\Tprog{c}.

\paragraph{Extensions}
We support the flexibility of our approach by discussing two extensions:
fine-grained leakage models and different Spectre variants.
First, we show how to combine our transformation~\Tprogname{} with
fine-grained leakage models~\cite{FinegrainedCT}, which include
the time-variable model (time-variable instructions leak their latency)
and the cache line model (memory accesses leak their cache lines rather
than the addresses).
Second, we discuss how to adapt our transformation to cover Spectre-v4
(store-to-load forwarding) attacks, which are notoriously difficult to
mitigate.

\paragraph{End-to-End SCT Verification Methods}
We provide end-to-end verification methods for (dis)proving SCT\@. These
methods combine SPS with existing verification techniques for CT\@.
First, we use a transformation that reduces CT to noninterference, a
property that can be verified with techniques such as Relational Hoare
Logic (RHL)~\cite{DBLP:conf/popl/Benton04}. Second, we use a
transformation that reduces CT to assertion safety, using the product
program construction~\cite{Barthe2016}. Third, we consider the
combination of SPS with dynamic analysis tools~\cite{ctgrind}.

\paragraph{Evaluation}
The last contribution of the paper is an evaluation of our approach
using the \easycrypt{} proof assistant~\cite{EasyCrypt}, the
\binsecrel{}~\cite{DBLP:conf/sp/DanielBR20} relational symbolic
execution tool%
\footnote{There exists an extension of \binsecrel{}, called
\binsechaunted{}, that aims to verify SCT directly. Our approach
provides an alternative that uses \binsecrel{}.  We compare them in
\cref{sec:evaluation}.}, and the \ctgrind{}~\cite{ctgrind} dynamic
taint analyzer.  We use these tools to analyze the SPS transform of
Kocher's benchmarks for SCT~\cite{benchmark}, an illustrating example for Spectre-v4, as well as two examples
from the literature: an SCT program that cannot be verified with an
SCT type system (from~\cite{ShivakumarTyping2023}), and a version of
the MAC rotation function in MEE-CBC encryption scheme of TLS 1.2 that
is secure in a weaker leakage model, called cache line leakage model,
and whose verification is intricate~\cite{FinegrainedCT}.  Our results
show that combining SPS with off-the-shelf CT verification tools is
feasible and enables the verification of examples that are beyond the
scope of existing SCT tools.

\paragraph{Organization}
\Cref{sec:overview} gives an overview of our approach, introducing CT
and SCT more precisely and illustrating our workflow with a simple
example beyond the capabilities of existing SCT tools.
\Cref{sec:setting} defines our source and target languages and security
notions formally.
\Cref{sec:transformation} defines the SPS transformation.
\Cref{sec:finegrained,sec:spectrev4} present two extensions of our
approach, to fine-grained leakage models and to Spectre-v4.
\Cref{sec:application} shows how SPS can be combined with three existing
approaches: reducing CT verification to noninterference verification,
reducing it to assertion safety verification, and applying dynamic
analysis.
\Cref{sec:evaluation} evaluates our approach using \easycrypt{},
\binsecrel{}, and \ctgrind{}.

\paragraph{Artifact}
We provide an artifact in
\ifanon{%
  \url{https://doi.org/10.5281/zenodo.17304903}%
}{%
  \url{https://doi.org/10.5281/zenodo.17339112}%
}.
The artifact contains original programs, SPS-transformed programs,
scripts, and mechanized proofs of our motivating example (\cref{sec:overview})
and evaluation (\cref{sec:evaluation}).

\linkappendix{}

%%% Local Variables:
%%% mode: LaTeX
%%% TeX-output-dir: "latex_output"
%%% TeX-master: "dedsct"
%%% End:

\section{Overview}%
\label{sec:overview}
The definition of constant-time idealizes timing side-channel attacks by
means of an abstract leakage model, in which an attacker observes the
control flow and memory access addresses.
The CT policy requires that the sequences of observations generated by a
program's execution, called leakage traces, do not depend on secrets.
The SCT policy extends CT to the speculative setting: it requires that
leakage traces generated by the program's \emph{speculative execution}
are independent of secrets.
Our model of speculative execution conservatively assumes that the
attacker has complete control over the program's control flow, i.e.,
decides at every conditional or loop command which branch to take.
Attacker decisions are collected into a list of booleans, called
\emph{directives}~\cite{highassuranceSpectre}, that drive the program's
execution.

\begin{figure}
  \begin{subfigure}{0.4\linewidth}
    \centering
    \meminitinsecure{}
    \setcounter{subfigure}{0}
    \subcaption{Insecure Program.}%
    \label{fig:init:original}
  \end{subfigure}%
  \hfill%
  \begin{subfigure}{0.58\linewidth}
    \centering
    \meminitprotected{}
    \setcounter{subfigure}{2}
    \subcaption{SPS Transformation of the Insecure Program.}%
    \label{fig:init:sps-insecure}
  \end{subfigure}

  \bigskip

  \begin{subfigure}{0.4\linewidth}
    \centering
    \meminitspsinsecure{}
    \setcounter{subfigure}{1}
    \subcaption{Protected Program.}%
    \label{fig:init:protected}
  \end{subfigure}%
  \hfill%
  \begin{subfigure}{0.58\linewidth}
    \centering
    \meminitspsprotected{}
    \setcounter{subfigure}{3}
    \subcaption{SPS Transformation of the Protected Program.}%
    \label{fig:init:sps-protected}
  \end{subfigure}
  \caption{%
    The program in (\subref{fig:init:original}) is CT but vulnerable to
    Spectre-v1, and (\subref{fig:init:protected}) is its protected
    version using SelSLH\@.
    The programs in (\subref{fig:init:sps-insecure}) and
    (\subref{fig:init:sps-protected}) show the application of SPS to the
    previous two, which internalizes speculation as part of the
    program.%
  }%
  \label{fig:init}
\end{figure}

To illustrate the difference between CT and SCT,
\cref{fig:init:original} presents a program that writes a secret to an
array pointed by~\jcode{a} (\cref{line:example:store}) before some
computation, overwrites it with a public value~\jcode{pub}
(\cref{line:example:begin-while} to \cref{line:example:end-while}),
loads a value to~\jcode{v} (\cref{line:example:load}), and finally leaks
it (\cref{line:example:leak}).
This program is CT, since its branch conditions and memory accesses are
independent of the secret~\jcode{sec}.
However, it is vulnerable to Spectre-v1 attacks, which can poison the
branch predictor to speculatively skip the loop---i.e., to predict
that the branch condition~\jcode{i < 10} will evaluate to false in the
first iteration. Bypassing the initialization, the processor loads
the secret into~\jcode{v} in \cref{line:example:load} and then leaks
it in \cref{line:example:leak}.  Thus, the attacker can recover the
value of~\jcode{sec} because the effect of \cref{line:example:leak} on
the microarchitectural state persists even after the processor
realizes the misprediction and rolls back execution.

A possible mitigation against Spectre-v1 attacks is to insert a speculation
fence after every branch---virtually stopping speculation---at the cost of a
significant performance overhead.
This overhead can be reduced by introducing a \emph{minimal} number of
fences~\cite{blade}, but \cref{fig:init:protected} presents a more efficient
approach: \emph{selective speculative load hardening}.
The essence of selective speculative load hardening is to transform control flow
dependencies (which may be abused speculatively) into data-flow dependencies
(i.e., arithmetic operations, which are not affected by speculation).
Selective speculative load hardening uses a distinguished program variable, the
\emph{misspeculation flag} (MSF), to track whether execution is misspeculating
and mask values that the attacker may observe.
Since the MSF is only used by SelSLH operators, we leave it implicit in our
language.
In our examples, selective speculative load hardening is achieved using the
following three operations:
\begin{enumerate}[topsep=1ex,parsep=0pt,itemsep=0pt]
\item \jcode{\jasminprimitive{init\_msf}()} in \cref{line:slh-example:init-msf}:
  set the MSF to false (\msfnomask{}) and introduce a speculation fence;
\item \jcode{\jasminprimitive{update\_msf}(e)} in
  \cref{line:slh-example:update-msf}:
  set the MSF to true (\msfmask{}) if \(e\)~is false, else leave
  unchanged; and
\item \jcode{x = \jasminprimitive{protect}(e)} in
  \cref{line:slh-example:protect}:
  mask the expression~\(e\) w.r.t.\ the MSF (i.e., \jcode{x}~becomes~\(e\) if
  the MSF is~\msfnomask{} and otherwise a default value \jcode{0}).
\end{enumerate}
Thus, the attack on \cref{fig:init:original} is no longer applicable: exiting
the loop prematurely causes the \jcode{\jasminprimitive{protect}} in
\cref{line:slh-example:protect} to overwrite the value of \jcode{v} with a
default value, thereby revealing no useful information in
\cref{line:slh-example:leak}.
Consequently, the program in \cref{fig:init:protected} is speculative
constant-time: its control flow and memory accesses are independent of secrets.

Now, we want to prove that the protected program achieves SCT, and we
hope to leverage one of the several existing CT approaches from the
literature.  Our approach involves transforming the program so that we
can reason about it using standard---that is, nonspeculative--CT
verification techniques.  The crux of our transformation is to
consider the directives as additional inputs to the program, on which
we make no assumptions.  Specifically, we introduce a new input
variable~\dirvar{} that contains the list of directives.

\Cref{fig:init:sps-insecure} shows our speculation-passing style
transformation applied to the original program in
\cref{fig:init:original}.
The transformed program begins by assigning~\(\bot\) to a new
variable~\msvar{} in \cref{line:T-example:ms}, which reflects whether
execution is misspeculating.
Most commands are left unchanged, such as the memory store in
\cref{line:T-example:store}.
Each conditional branch in the original program is transformed into a
\jcode{\leakname} command first (leaking the condition of the branch, in
\cref{line:T-example:leak}) and then a modified conditional branch that
follows a directive instead (the new loop condition,
{\setlength{\fboxsep}{1pt}\colorbox{jasminhighlight}{highlighted}} in
\cref{line:T-example:loop}, reads from the \dirvar{}~input list).
Here, we write \jcode{\hd{\dirvar{}}} for the first element of the list.
\Cref{line:T-example:tl-ms1} discards the first element of~\dirvar{} and
updates~\msvar{} to track whether the prediction was incorrect---we
write \jcode{\tl{\dirvar{}}} for the tail of the list.
We must update \dirvar{}~and~\msvar{} similarly after exiting the loop
in \cref{line:T-example:tl-ms2}.

\Cref{fig:init:sps-protected} presents the transformation of the
protected program in \cref{fig:init:protected}, which contains SelSLH
operators.
\Cref{line:T-slh-example:init-msf} immediately stops execution if it is
misspeculating, corresponding to the fence behavior of
\jcode{\jasminprimitive{init\_msf}()}, and sets~\msf{} to~\msfnomask{}.
\Cref{line:T-slh-example:update-msf} updates the MSF with respect to the
argument of \jcode{\jasminprimitive{update\_msf}}, and
\cref{line:T-slh-example:protect} evinces the masking behavior of
\jcode{\jasminprimitive{protect}}.

As a result of our transformation, the \emph{sequential} leakage of
the SPS-transformed programs in
\cref{fig:init:sps-insecure,fig:init:sps-protected} precisely captures
the \emph{speculative} leakage of the original programs.
The program in \cref{fig:init:sps-insecure} is insecure (i.e., not CT)
when the first input directive in~\dirvar{} is~\(\bot\), corresponding
to the attack above.  On the other hand,
\cref{fig:init:sps-protected} is CT, since whenever~\dirvar{} causes a
misspeculation, the conditional assignment in
\cref{line:T-slh-example:protect} overwrites~\jcode{v}, matching the
functionality of SelSLH\@.

More generally, our approach builds on the following result, which we
revisit later in the paper.
We write \sbigsem{c(i)}{\seq{d}}{\seq{o}} for a speculative execution of
the program~\(c\) on input~\(i\) under directives~\seq{d} that generates
leakage~\seq{o}, and \tbigsem{c(i, \seq{d})}{\seq{o}} for a sequential
execution of the program~\(c\) with inputs~\(i, \seq{d}\) that generates
observations~\seq{o}.

\begin{theorem*}
  There exists a leakage transformation function~\Tprogobsname{} such
  that
  \begin{equation*}
    \sbigsem{c(i)}{\seq{d}}{\seq{o}}
    \iff
    \tbigsem{\Tprog{c}(i, \seq{d})}{\Tprogobs{\seq{o}}{\seq{d}}}\text.
  \end{equation*}
\end{theorem*}
This theorem entails the desired property: a program~\(c\) is SCT if
and only if its SPS transform \Tprog{c} is CT\@. The key intuition is as
follows.
A program~\(c\) is SCT (w.r.t.\ a given relation on inputs) if for every
related inputs \(i_1\)~and~\(i_2\), and for every list of directives
\seq{d}, speculative execution of~\(c\) under directives~\seq{d}
yields equal leakage with inputs \(i_1\)~and~\(i_2\). Similarly, a
program~\(c\) is CT (w.r.t.\ a given relation on inputs) if for every
related inputs \(i_1\)~and~\(i_2\), sequential execution of~\(c\)
yields equal leakage with inputs \(i_1\)~and~\(i_2\). Therefore, it
suffices to prove that for every \(d\), \(o_1\), and \(o_2\), we have
$\Tprogobs{\seq{o_1}}{\seq{d}} = \Tprogobs{\seq{o_2}}{\seq{d}}$ if
and only if $\seq{o_1}=\seq{o_2}$, which can be established by
inspecting the definition of \Tprogobsname{} in the proof of the
theorem.

\paragraph{Combination with Existing Techniques}
An important benefit of the SPS transformation is its compatibility with
existing CT verification techniques.
We consider three such combinations: verification via noninterference
(\cref{sec:application:noninterference}), verification via assertion safety
(\cref{sec:application:assertion-safety}) and finally verification via
dynamic analysis (\cref{sec:application:taint-analysis}).

\begin{figure}
  \centering
  \meminitTprotected{}
  \caption{%
    Transformation of the program in \cref{fig:init:sps-protected} for
    noninterference verification.%
  }%
  \label{fig:verif}
\end{figure}

\Cref{fig:verif} presents the program in
\cref{fig:init:sps-protected} after standard leakage instrumentation and
assert elimination transformations, resulting in a simple imperative
program supported by techniques such as Relational Hoare Logic.
In a nutshell, the transformations introduce two ghost variables:
\retvar{}, which tracks whether the program should return, and
\obsvar{}, which accumulates the leakage of the program.
Assertions are replaced by conditional assignments to~\retvar{}, and the
rest of the code is guarded by~\retvar{} to skip execution after a
failed assertion. The leakage instrumentation appends to~\obsvar{} the
information leaked after every command.
Thanks to these transformations, our artifact uses an implementation of
RHL to verify that the program in \cref{fig:init:protected} is SCT by
establishing noninterference of \cref{fig:verif}.
(See \path{initialization.ec} in the artifact.)

On the other hand, we can also combine SPS with techniques that reduce
CT to assertion safety and dynamic taint analysis.
For assertion safety, we use the product program construction~\cite{Barthe2016} to obtain a
program that is assertion safe if and only if the original program (in
\cref{fig:init:original}) is SCT\@.
For dynamic taint analysis, we randomly generate inputs and directives
to dynamically check whether secrets will be leaked at branches or memory accesses.
Our artifact realizes these two methodologies with a symbolic execution tool and a dynamic taint analysis tool that
automatically find the vulnerability in \cref{fig:init:original} by
analyzing \cref{fig:init:sps-insecure}.
(See \path{initialization.c} in the artifact.)

%%% Local Variables:
%%% mode: LaTeX
%%% TeX-output-dir: "latex_output"
%%% TeX-master: "dedsct"
%%% End:

\section{Language and Security}%
\label{sec:setting}
\Cref{sec:setting:source} introduces a core imperative language with speculative
semantics, and \cref{sec:setting:security} formalizes the security notion of
speculative constant-time.
Afterward, \cref{sec:setting:target} introduces the target language of our
transformation: a minimal imperative language with standard sequential
semantics.
Throughout the paper, we make the standard assumption that programs are type
safe, that is, that expressions always evaluate to a value of the expected type.

\subsection{Source Language}%
\label{sec:setting:source}
The syntax of our source language comprises expressions and commands, defined
as follows:
\begin{align*}
  e &\Coloneqq n \mid b \mid x \mid {\oplus}(e, \cdots, e)\\
  c & \Coloneqq
    \iassign{x}{e} \mid \iload{x}{e} \mid \istore{e}{x}
    \mid \iinitmsf{} \mid \iupdatemsf{e} \mid \iprotect{x}{e}\\
    &\quad \mid \iif{e}{c}{c} \mid \iwhile{e}{c} \mid \cskip \mid c \capp c
\end{align*}
where
\(n\)~is a natural number,
\(b\)~is a boolean,
\(x\)~is a variable, and
\(\oplus\)~is an operator such as \(+\)~or~\(\land\).
We assume that operators are deterministic and have no side effects---note that
there are no memory accesses in the expressions.
The values of this language are integers and booleans.
We write \iift{e}{c} for \iif{e}{c}{\cnil}.

States are quadruples \st{c}{\vm}{\mem}{\ms} consisting of a command~\(c\),
a \emph{variable map}~\vm{} (a function from variables to values), a
memory~\mem{} (a function from natural numbers to values), and a
\emph{misspeculation status}~\ms{} (a boolean tracking whether misspeculation
has happened).
We call a state a \emph{misspeculating state} when \ms{} is~\msmiss{}.
Executions start from \emph{inputs}~\emath{i = (\vm, \mem)}, which are
pairs of variable maps and memories: we write \emath{c(i)} for the initial state
of program~\(c\) on input~\(i\), defined as \st{c}{\vm}{\mem}{\msseq}.

We capture the standard CT leakage model by indexing our semantics with
\emph{observations}~\oset{} that correspond to the two operations that leak:
every conditional branch---i.e., if and while---produces the observation
\obranch{b}, where~\(b\) is the value of their condition; and
every memory access produces the observation~\oaddr{i}, where~\(i\) is the
address accessed.

To model the adversarial control of the branches, we index our semantics by
\emph{directives}~\dset{}~\cite{highassuranceSpectre}, which steer control
flow.
There are two directives, \dforce{\top} and \dforce{\bot}, which force the
execution of the then and else branches, respectively.

\begin{figure}
  \figspecsem{}
  \caption{Speculative semantics of the source language.}%
  \label{fig:source-semantics}
\end{figure}

\Cref{fig:source-semantics} presents the semantics of our language.
We write \sem{s}{\seq{d}}{\seq{o}}{s'} to indicate that the state~\(s\)
performs one step of execution under the directive list~\seq{d}, and
produces the observation list~\seq{o} and the resulting state~\(s'\).

The \refsemassign{} rule is standard: it assigns the
value of the right-hand side~\(e\) to the left-hand side~\(x\).
It consumes no directives and produces no observations.
The \refsemload{} rule states that if the instruction
under execution is a load, we evaluate the expression~\(e\) to get an
address~\(i\), and we write the value stored at that address in memory to the
variable~\(x\).
Loads consume no directives and leak their addresses with the
\oaddr{i}~observation---we write \emath{[\oaddr{i}]} to emphasize that it is a
list with one element.
The \refsemstore{} rule is analogous.
The \refsemcond{} rule illustrates the purpose of directives: it
consumes~\dforce{b}, which corresponds to a prediction from the branch
predictor, and follows that branch, regardless of the evaluation of its
condition.
The value of the condition leaks with the observation~\obranch{b'},
and \ms{} becomes \msmiss{} if the prediction was incorrect.
The \refsemwhile{} rule is similar.

The behavior of SelSLH operators is as described in \cref{sec:overview}.
The \refseminit{} rule models a speculation fence by requiring the
misspeculation status of the state on the left-hand side to be~\msseq{}.
It also sets the MSF~\msf{} to~\msfnomask{}.
The \refsemupdate{} rule updates the MSF according to the value of~\(e\), and
the \refsemprotect{} rule masks the value of~\(e\) according to the MSF and
assigns it to~\(x\).

As defined in \refsemrefl{} and \refsemtrans{}, we write
\sem*{s}{\seq{d}}{\seq{o}}{s'} for executions of zero or more steps.
We say that a state is \emph{final}, written \final{s}, if it is of the form
\st{\cnil}{\vm}{\mem}{\ms} or \st{c}{\vm}{\mem}{\top} where the first instruction of $c$ is $\iinitmsf$.
Finally, \refsembig{} states that we write \sbigsem{s}{\seq{d}}{\seq{o}} for
complete executions starting from~\(s\) under directives~\seq{d} producing
observations~\seq{o}.

\subsection{Speculative Constant-Time}%
\label{sec:setting:security}
We can now define SCT precisely: a program is speculative constant-time if its
observations under speculative execution are independent of secrets.
In line with the standard definition of noninterference, we require that if two
inputs are indistinguishable they produce the same observations.

\begin{definition}[\phiSCT{}]%
\label{def:sct}
  A program~\(c\) is \emph{speculative constant-time} w.r.t.\ a relation~\phi{}
  (denoted \phiSCT{}) if it produces the same observations for every list of
  directives and every pair of related inputs.
  That is, for every \(i_1\), \(i_2\), \seq{d}, \seq{o_1}, and \seq{o_2}, we
  have that
  \begin{equation*}
    i_1 \relphi i_2 \land
    \sbigsem{c(i_1)}{\seq{d}}{\seq{o_1}} \land
    \sbigsem{c(i_2)}{\seq{d}}{\seq{o_2}} \implies
    \seq{o_1} = \seq{o_2}\text.
  \end{equation*}
\end{definition}

The relation on inputs~\phi{} encodes the \emph{low-equivalence} of the inputs,
i.e., that \(i_1\)~and~\(i_2\) coincide in their public part.
For example, for the memory initialization program in \cref{sec:overview},
we should define \phi{} as
\begin{equation*}
  (\rho_1, \mu_1) \relphi (\rho_2, \mu_2) \eqdef
    \rho_1(\mathsf{n}) = \rho_2(\mathsf{n})\text,
\end{equation*}
meaning that~\jcode{n} is public and all other variables (in particular,
\jcode{sec}) are secret.

\subsection{Target Language}%
\label{sec:setting:target}
The syntax of the target language is that of the source \emph{without}
the SelSLH operators and with a new instruction: \iassert{e}.
This command steps to an error state~\sterror{} when its condition~\(e\)
is false.

Target states are triples \st{c_t}{\vm_t}{\mem_t}{} consisting of
a target command~\(c_t\), a variable map~\(\vm_t\), and a memory~\(\mem_t\).
Target variable maps associate variables to values or lists of values.
Executions in this language take a list of directives as an extra input.
Thus, given an input~\emath{i = (\vm, \mem)} and a list of directives \seq{d},
the \emph{initial state} of a program~\(c\) is
\emath{c(i, \seq{d}) \eqdef \st{c}{\mset{\vm}{\dirvar}{\seq{d}}}{\mem}{}},
where~\dirvar{} is a distinguished variable that does not occur in source
programs.

\refappendixtgtsem{} presents the
semantics of the target language. The form
\sem{t}{}{\seq{o}}{t'} is for one step of execution from a state~\(t\) to a
state~\(t'\) producing observations~\seq{o}.
And, $\sem*{t}{}{\seq{o}}{t'}$ is for zero or more steps of execution, whereas
$\nsem{t}{}{\seq{o}}{t'}{n}$ is for exactly $n$ steps of execution.
We write \tbigsem{t}{\seq{o}} for
\qE{t'}{\sem*{t}{}{\seq{o}}{t'} \land \final{t'}}, where a target state
is \emph{final} if its code is~\cskip{} and the value of the~\dirvar{} variable
is the empty list or it is an error state.

\subsection{Constant-Time}%
\label{sec:setting:ct}
Finally, we can define our security notion for the target language.

\begin{definition}[\phiCT{}]%
\label{def:ct}
  A program~\(c\) is \emph{constant-time} w.r.t.\ a relation on
  inputs~\phi{} (denoted \phiCT{}) if it produces the same observations
  for every pair of related inputs.
  That is, for every \(i_1\), \(i_2\), \seq{d}, \seq{o_1}, and
  \seq{o_2}, we have that
  \begin{equation*}
    i_1 \relphi i_2 \land
    \tbigsem{c(i_1, \seq{d})}{\seq{o_1}} \land
    \tbigsem{c(i_2, \seq{d})}{\seq{o_2}}
    \implies
    \seq{o_1} = \seq{o_2}
    \text.
  \end{equation*}
\end{definition}

Note that the executions in these statements may terminate in an error
state.

%%% Local Variables:
%%% mode: LaTeX
%%% TeX-output-dir: "latex_output"
%%% TeX-master: "dedsct"
%%% End:

\section{Speculation-Passing Style}%
\label{sec:transformation}
This section presents our transformation, Speculation-Passing Style (SPS), which
materializes the speculative behavior of a program in a sequential language.
In order for our transformation to capture speculative execution, we assume
two distinguished program variables in target states:
\msvar{}, which captures the behavior of~\ms{}, and~\dirvar{}, which
stores the remaining directives.

\begin{figure}
  \figtrans{}
  \caption{Speculation-Passing Style transformation.}%
  \label{fig:trans}
\end{figure}

\Cref{fig:trans} present our SPS transformation, denoted~\Tprogname{},
which transforms a program in the source syntax presented in
\cref{sec:setting:source} into one in the target syntax presented in
\cref{sec:setting:target}.  We first define an auxiliary
transformation, \Tprogstarname{}, inductively on the code.  The basic
commands (i.e., assignment, load, store, and \cskip{}) are left
unmodified.  Conditionals first leak their branch condition with
\ileak{e}, which is notation for \iif{e}{\cskip}{\cskip}.  Then, we
extract a directive---i.e., an adversarially controlled branch
prediction---from the list~\dirvar{} with \hd{\dirvar} and follow that
branch.  Inside each branch, we pop an element of~\dirvar{} with
\tl{\dirvar} and update~\msvar{} to~\msmiss{} if the prediction was
incorrect.  Loops are transformed similarly.  The MSF initialization
command is transformed as an assertion that execution is not
misspeculating---modeling its fence behavior---followed by setting the
MSF variable~\msf{} to~\msfnomask{}.  The MSF update command sets the
\msf{}~variable to \msfmask{} if its argument~\(e\) is false.  The
last SelSLH operator, protect, masks its argument~\(e\) according to
the value of~\msf{}.  Lastly, the transformation of sequencing is as
expected.  Finally, the transformation~\Tprogname{} for whole programs
initializes~\msvar{} to~\msseq{} before transforming the body of the
program with~\Tprogstarname{}.
Note that the transformation replaces each statement with a fixed number
of statements; thus the code size increases linearly, i.e., there is no
code blowup.

Let us now turn to the theorem that underpins our approach: a transformed
program~\Tprog{c} sequentially matches the speculative behavior of~\(c\).

\begin{theorem}[Soundness and Completeness of SPS]%
\label{thm:correct-trans}
  There exists a function~\emath{
    \Tprogobsname : \observations^* \times \directives^* \to \observations^*
  }, which is injective in its first argument, such that for any
  program~\(c\), input~\(i\), directive sequence~\seq{d}, and
  observation sequence~\seq{o}, we have that
  \begin{equation*}
    \sbigsem{c(i)}{\seq{d}}{\seq{o}}
    \iff
    \tbigsem{\Tprog{c}(i, \seq{d})}{\Tprogobs{\seq{o}}{\seq{d}}}\text.
  \end{equation*}
\end{theorem}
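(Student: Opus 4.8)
The plan is to prove the equivalence by a careful induction that tracks, in lockstep, a speculative execution of the source program~\(c\) and a sequential execution of~\Tprog{c}. The first task is to define the leakage transformation~\Tprogobsname{} explicitly. The key observation is that the only place where the SPS transformation \emph{changes} the leakage is at conditionals and loops: the source emits one observation~\obranch{b'} (the actual value of the guard), whereas \Tprog{c} emits \emph{two} branch observations—one from the \ileak{e} prologue (again~\obranch{b'}, the real guard value), and one from the directive-driven \(\mathbf{if}~(\hd{\dirvar})\) (namely~\obranch{b} where~\(b\) is the consumed directive). All other observations (\oaddr{i} from loads and stores) are preserved verbatim, and the extra assignments \Tprogname{} inserts (to~\dirvar{}, \msvar{}, \msf{}, and the \iassert{\neg\msvar}) emit nothing. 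Therefore I would define \Tprogobs{\seq{o}}{\seq{d}} by recursion on~\seq{o} synchronized with~\seq{d}: every \oaddr{i} is copied; every \obranch{b'} in~\seq{o} is expanded to \([\obranch{b'}, \obranch{b}]\) where~\(b = \hd{\seq{d}}\) is the next directive, which is then popped from~\seq{d}. Injectivity in the first argument is then immediate: from \Tprogobs{\seq{o}}{\seq{d}} and~\seq{d} one recovers~\seq{o} by deleting, after each original branch observation, exactly one directive-tagged observation; more simply, the odd-indexed branch observations of the output (among the branch observations) together with all address observations reconstruct~\seq{o} uniquely regardless of~\seq{d}. I would record this as a short lemma.

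Next I would set up the core simulation invariant relating a source state \st{c}{\vm}{\mem}{\ms} reached after consuming directive prefix~\seq{d_p} and emitting~\seq{o_p}, to the corresponding target state of~\Tprog{c}. The invariant should assert: (i)~the target command is \Tprogstar{c} possibly followed by the ``tail fix-up'' code for any enclosing loops (so the statement must be proved for a slightly generalized continuation-passing form, i.e.\ I prove a lemma about \(\Tprogstar{c}\capp k\) versus \(c\) under an appropriate relation on continuations~\(k\)); (ii)~\(\vm_t\) agrees with~\vm{} on all source variables, \(\vm_t(\msvar) = \ms\), \(\vm_t(\dirvar)\) equals the \emph{remaining} directives~\seq{d}, and \(\vm_t(\msf)\) matches the source MSF value; (iii)~\(\mem_t = \mem\); and (iv)~the target has emitted \Tprogobs{\seq{o_p}}{\seq{d_p}} and consumed no more input beyond what corresponds to~\seq{d_p}. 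The forward direction (\(\Rightarrow\)) is then a routine structural induction on the derivation of \sbigsem{c(i)}{\seq{d}}{\seq{o}}, using \refsemrefl{}/\refsemtrans{} and case analysis on the stepping rule: \refsemassign{}, \refsemload{}, \refsemstore{} map to their target counterparts with identical leakage; \refseminit{} maps to \iassert{\neg\msvar}—which succeeds precisely because~\ms{} is~\msseq{} in a state that can take the \refseminit{} step, matching \reftgtsemassertT{}—followed by the \msf{}~assignment; \refsemupdate{} and \refsemprotect{} become the obvious conditional assignments; and the interesting case, \refsemcond{}/\refsemwhile{}, is simulated by \ileak{e} (emitting~\obranch{b'}, which via \reftgtsemcond{} contributes one branch observation, maintaining the semantics since both branches are \cskip), then the directive-driven conditional (emitting~\obranch{\hd{\dirvar}}, following branch~\(b\)), then the \(\dirvar{} = \tl{\dirvar}\) and \msvar{} ||= \dots updates, which exactly reproduce the source's update \(\ms \lor (b \neq b')\). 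The loop case additionally requires unfolding the \textbf{while} once and appealing to the induction hypothesis on the unrolled body-then-loop, carefully accounting for the post-loop fix-up block. One must also check the finality conditions correspond: a final source state is either \st{\cnil}{\cdots}{} (matched by target \cskip{} with~\dirvar{} emptied—here one uses that a complete execution consumes exactly the directives it needs, so~\seq{d} is exhausted) or a stuck \iinitmsf{} under~\(\top\) (matched by the target reaching \iassert{\neg\msvar} with \(\vm_t(\msvar)=\top\), which steps to~\sterror{}, a final target state).

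The backward direction (\(\Leftarrow\)) is the mirror image and is where the main obstacle lies: from a terminating target execution of \Tprog{c} one must reconstruct a speculative source execution, and the difficulty is that the target can get ``stuck'' or diverge in ways the source cannot—e.g.\ running off the end of~\dirvar{} inside a \(\hd{\dirvar}\) test (when the directive list is too short), or hitting the \iassert{} and going to~\sterror{}. For the \sterror{} case, note that \tbigsem{} explicitly admits error-terminating executions, so this must correspond to the source reaching a final stuck \iinitmsf{}-under-\(\top\) state; I would argue the assertion fails \emph{iff} \(\vm_t(\msvar)=\top\), and the invariant gives \(\vm_t(\msvar)=\ms\), so this lines up. For the ``too few directives'' case, a target execution that reaches a \(\hd{\dirvar}\) with \(\dirvar{}=[]\) is \emph{not} a terminating execution (it is stuck at a non-final state), so it simply does not arise on the left of the iff for the backward direction—one only needs to handle target executions that genuinely reach a final state, and strong induction on the length~\(n\) of the target execution (using \reftgtsemtransN{}) lets me peel off target steps and classify which source rule they came from by inspecting the shape of~\Tprogstar{c}. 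The bookkeeping to show that the inserted non-leaking instructions (the \msvar{}/\dirvar{}/\msf{} assignments and a successful \iassert{}) are always ``transparent'' in both directions—they neither add nor remove observations and always take exactly one deterministic step—is the tedious part, but it is genuinely routine; I would isolate it as a small lemma (``administrative steps'') to keep the two main inductions clean. Assembling the two directions and invoking the injectivity lemma for~\Tprogobsname{} then yields the theorem, and the corollary that~\(c\) is \phiSCT{} iff \Tprog{c} is \phiCT{} follows by the argument already sketched in the overview.
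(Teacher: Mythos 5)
Your proposal is correct and follows essentially the same route as the paper's proof: the same definition of \Tprogobsname{} (copy address observations, expand each branch observation into the real-guard/directive pair) with the same injectivity argument, a step-wise simulation lemma matching one source step to a fixed block of target steps, the forward direction by induction on the source execution, and the backward direction by induction on the length of the target execution with a decomposition of the trace into the first instruction's block plus the remainder. The only cosmetic difference is that you phrase the per-instruction correspondence as a simulation invariant with explicit continuations, where the paper uses a measure function \(m(c)\) counting the target steps for each source instruction together with a separate target-decomposition lemma.
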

\begin{proof}
  Follows immediately from~\refappendixspsthm{}, which proves a stronger statement that also relates
  the final states.
\end{proof}

The definition of~\Tprogobsname{} in the above theorem follows from the
observation that since every conditional \iif{e}{c}{c'} in the source program is
transformed into~\emath{\ileak{e}\capp \iif{\hd{\dirvar}}{{\dots}}{{\dots}}},
every source observation~\obranch{b} induces two observations in the target
program, namely \obranch{b} and \obranch{(\hd{\dirvar})}.
The situation is analogous for loops.
Consequently, we define~\Tprogobsname{} as follows:
\begin{equation*}
  \Tprogobs{\seq{o}}{\seq{d}} \eqdef
  \begin{cases}
    \lnil & \when{\seq{o} = \lnil}\text,\\
    \oaddr{n} \lapp \Tprogobs{\seq{o}'}{\seq{d}}
    & \when{\seq{o} = \oaddr{n} \lapp \seq{o}'}\text,\\
    \obranch{b} \lapp \obranch{b'} \lapp \Tprogobs{\seq{o}'}{\seq{d}'}
    & \when{
        \seq{o} = \obranch{b} \lapp \seq{o}' \text{ and }
        \seq{d} = \dforce{b'} \lapp \seq{d}'\text.
      }
  \end{cases}
\end{equation*}
That is, \Tprogobs{\seq{o}}{\seq{d}} inserts, for every index~\(i\), the
observation \obranch{b'} immediately after the \(i\)-th branch observation,
where the \(i\)-th directive is \dforce{b'}.

Now, we can reduce the verification of speculative constant-time to the
verification of constant-time, as follows.

\begin{corollary}[Reduction of SCT to CT]%
\label{thm:ct-to-sct}
  A program~\(c\) is \phiSCT{} if and only if \Tprog{c} is \phiCT{}.
\end{corollary}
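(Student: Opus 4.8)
The plan is to read the corollary off \cref{thm:correct-trans} via a short chain of equivalences, with injectivity of~\Tprogobsname{} in its first argument doing the real work. Unfolding \cref{def:sct,def:ct}, the corollary amounts to showing that, for every relation~\phi{}, the following two conditions are equivalent. \emph{Condition~(i):} for all $i_1 \relphi i_2$, all~\seq{d}, and all~$\seq{o_1},\seq{o_2}$ with $\sbigsem{c(i_1)}{\seq{d}}{\seq{o_1}}$ and $\sbigsem{c(i_2)}{\seq{d}}{\seq{o_2}}$, one has $\seq{o_1} = \seq{o_2}$. \emph{Condition~(ii):} for all $i_1 \relphi i_2$, all~\seq{d}, and all~$\seq{o_1}',\seq{o_2}'$ with $\tbigsem{\Tprog{c}(i_1, \seq{d})}{\seq{o_1}'}$ and $\tbigsem{\Tprog{c}(i_2, \seq{d})}{\seq{o_2}'}$, one has $\seq{o_1}' = \seq{o_2}'$. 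The reason this goes through cleanly is that \cref{thm:correct-trans} relates a speculative execution and its SPS image \emph{under the same directive sequence}~\seq{d}, which matches precisely how \seq{d} is universally quantified in both security definitions.

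For~(ii)~$\Rightarrow$~(i), I would fix related inputs $i_1 \relphi i_2$, a directive sequence~\seq{d}, and speculative executions $\sbigsem{c(i_j)}{\seq{d}}{\seq{o_j}}$ for $j \in \{1,2\}$; the forward implication of \cref{thm:correct-trans} then gives $\tbigsem{\Tprog{c}(i_j, \seq{d})}{\Tprogobs{\seq{o_j}}{\seq{d}}}$, so condition~(ii) yields $\Tprogobs{\seq{o_1}}{\seq{d}} = \Tprogobs{\seq{o_2}}{\seq{d}}$, whence $\seq{o_1} = \seq{o_2}$ by injectivity of~\Tprogobsname{}. For~(i)~$\Rightarrow$~(ii), I would fix $i_1 \relphi i_2$,~\seq{d}, and target executions $\tbigsem{\Tprog{c}(i_j, \seq{d})}{\seq{o_j}'}$; assuming each~$\seq{o_j}'$ has the form $\Tprogobs{\seq{o_j}}{\seq{d}}$ for some~$\seq{o_j}$ with $\sbigsem{c(i_j)}{\seq{d}}{\seq{o_j}}$ (see below), condition~(i) gives $\seq{o_1} = \seq{o_2}$, and hence $\seq{o_1}' = \Tprogobs{\seq{o_1}}{\seq{d}} = \Tprogobs{\seq{o_2}}{\seq{d}} = \seq{o_2}'$.

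The only delicate point is the assumption just used: that \emph{every} leakage trace of a complete execution of~\Tprog{c} on~$(i, \seq{d})$ lies in the image of~$\Tprogobs{-}{\seq{d}}$. As stated, \cref{thm:correct-trans} only guarantees that target leakages of this form are realized, so one must additionally rule out any others. I expect this to follow from two observations: both the speculative semantics---for a fixed input and directive sequence---and the target semantics are deterministic, so each side has at most one complete execution; and a complete target execution on~$(i,\seq{d})$ exists if and only if a complete speculative execution of~$c$ on~$i$ under~\seq{d} exists, since a length mismatch in~\seq{d} leaves~\dirvar{} nonempty or makes~$\hd{\dirvar}$ ill-typed in the target, and symmetrically prevents reaching a final state in the source. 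The backward implication of \cref{thm:correct-trans} then pins the unique target leakage down to $\Tprogobs{\seq{o}}{\seq{d}}$ for the unique source leakage~\seq{o}. More directly, the assumption is immediate from the stronger statement proved in~\cref{app:thm:soundness-completeness}, which exhibits a bijection between the complete executions of~$c$ and those of~\Tprog{c}. Everything else is routine unfolding of definitions together with the single appeal to injectivity of~\Tprogobsname{}.
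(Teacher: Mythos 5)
Your proposal is correct and follows essentially the same route as the paper: the backward implication is the direct application of \cref{thm:correct-trans} plus injectivity of~\Tprogobsname{}, and the forward implication hinges on the surjectivity claim that every complete target leakage of~\Tprog{c} on~$(i,\seq{d})$ lies in the image of~$\Tprogobs{-}{\seq{d}}$, which the paper likewise identifies and attributes to the structure of~\Tprog{c} (and which, as you note, is immediate from the stronger statement in \cref{app:thm:soundness-completeness}). Your additional remarks on determinism merely flesh out the justification of that same step, so there is no substantive difference.
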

\begin{proof}
  The backward implication follows immediately from \cref{thm:correct-trans}:
  given \sbigsem{c(i_k)}{\seq{d}}{\seq{o_k}} for \emath{k \in \set{1, 2}}, we
  have \tbigsem{\Tprog{c}(i_k, \seq{d})}{\Tprogobs{\seq{o_k}}{\seq{d}}}, and
  the \phiCT{} hypothesis gives us
  \emath{\Tprogobs{\seq{o_1}}{\seq{d}} = \Tprogobs{\seq{o_2}}{\seq{d}}}, which
  means that \emath{\seq{o_1} = \seq{o_2}} by injectivity.

  The forward implication entails showing that \Tprogobsname{} is surjective for
  executions of transformed programs, i.e., that
  \tbigsem{\Tprog{c}(i, \seq{d})}{\seq{o}} implies that there exists
  \emath{\seq{o}'} such that \emath{\Tprogobs{\seq{o}'}{\seq{d}} = \seq{o}}.
  This surjectivity follows from the structure of~\Tprog{c}.
  Consequently, given \tbigsem{c(i_k, \seq{d})}{\seq{o_k}} for
  \emath{k \in \set{1, 2}}, we have
  \tbigsem{c(i_k, \seq{d})}{\Tprogobs{\seq{o_k}'}{\seq{d}}} by surjectivity,
  and \cref{thm:correct-trans} gives us the source executions
  \sbigsem{\Tprog{c}(i_k)}{\seq{d}}{\seq{o_k}'}.
  Finally, the \phiSCT{} hypothesis shows \emath{\seq{o_1}' = \seq{o_2}'},
  which means that \emath{
    \seq{o_1} =
    \Tprogobs{\seq{o_1}'}{\seq{d}} =
    \Tprogobs{\seq{o_2}'}{\seq{d}} =
    \seq{o_2}
  }.
\end{proof}

%%% Local Variables:
%%% mode: LaTeX
%%% TeX-output-dir: "latex_output"
%%% TeX-master: "dedsct"
%%% End:

\section{Fine-Grained Leakage Models}%
\label{sec:finegrained}
This section discusses how to extend our approach to other leakage
models beyond the \emph{baseline} constant-time leakage model that we
have considered so far. Concretely, the baseline leakage model has
been generalized both in the literature and in practice to account for
more specific threat models.  These generalizations are instances of
\emph{fine-grained} constant-time leakage models~\cite{FinegrainedCT}.
Below, we discuss two examples and CT verification in this setting.

The \emph{variable-time} leakage model assumes that operators leak
information about their operands, thus preventing attacks that exploit
variable-time instructions.
For example, the execution time of a division operation depends on the
value of its operands in many CPUs, a fact that has been recently
exploited in Kyberslash~\cite{kyberslash}.
The variable-time leakage model is stricter than the baseline model, as
it considers that an assignment leaks a function of the values of its
expression; for example, \iassign{x}{a / b} leaks the sizes of
\(a\)~and~\(b\) instead of~\lnil{}.

Another example is the \emph{cache-line} leakage model, which assumes
that the attacker can observe the cache line of an address being
accessed but not the address itself.
Specifically, accessing an address~\(a\) leaks \floor{\frac{a}{N}},
where \(N\)~is the size of a cache line, e.g., 64 bytes.
This model is more realistic and more permissive than the baseline
model, i.e., there are secure programs in the cache-line model that are
deemed insecure in the baseline model.
The advantage of permissive models is that they allow for more
optimizations: some cryptographic libraries, such as OpenSSL, offer
multiple implementations of the same function, optimized
w.r.t.\ different leakage models, allowing users to choose their
trade-off between efficiency and security.

\begin{figure}
  \small
  \begin{mathpar}
    \inferrule[\labelsemassign]{
      v = \eval{e}{\vm}\\
      v' = \eval{e'}{\vm}\\
      \vm' = \mset{\vm}{x}{\floor{\frac{v}{v'}}}
    }{
      \sem{
        \st{\iassign{x}{\frac{e}{e'}}}{\vm}{\mem}{\ms}
      }{\dstep}{\log_2(v),\, \log_2(v')}{
        \st{\cskip}{\vm'}{\mem}{\ms}
      }
    }

    \inferrule[\labelsemload]{
      i = \eval{e}{\vm}\\
      \vm' = \mset{\vm}{x}{\mem(i)}
    }{
      \sem{
        \st{\iload{x}{e}}{\vm}{\mem}{\ms}
      }{\dstep}{[\oaddr{\floor{\frac{i}{64}}}]}{
        \st{\cskip}{\vm'}{\mem}{\ms}
      }
    }

    \inferrule[\labelsemstore]{
      i = \eval{e}{\vm}
    }{
      \sem{
        \st{\istore{e}{x}}{\vm}{\mem}{\ms}
      }{\dstep}{[\oaddr{\floor{\frac{i}{64}}}]}{
        \st{\cskip}{\vm}{\mset{\mem}{i}{\vm(x)}}{\ms}
      }
    }
  \end{mathpar}
  \caption{%
    Selected rules of the fine-grained semantics of the source
    language.%
  }%
  \label{fig:sem:finegrained}
\end{figure}

Fine-grained leakage models, such as the two above, are more challenging
for verification than the baseline model, since they require reasoning
about values rather than mere dependencies.
Consequently, these models often require deductive methods of CT
verification.

\paragraph{Semantics for Fine-Grained Leakage Models}
Fine-grained leakage models are formalized in terms of two functions
that define the leakage of operators and memory accesses, denoted
\Lopname{}~and~\Laddrname{}, respectively.
These functions allow us to generalize the semantics from
\cref{sec:setting:source,sec:setting:target}: assignments \iassign{x}{e}
leak the value of~\Lop{e} (instead of producing no observation), and
load instructions \iload{x}{e} (or stores \istore{e}{x}) leak the value
of~\Laddr{e} (instead of the value of~\(e\)).
Generalizing the semantics in this way gives rise to a refined notion of
speculative constant-time, which we call \phiSCT{}
w.r.t.\ \Lopname{}~and~\Laddrname{}, and analogously with \phiCT{}.

\Cref{fig:sem:finegrained} illustrates a fine-grained leakage semantics.
The \refrule{fig:sem:finegrained}{\labelsemassign} rule corresponds to the
variable-time leakage model, where the assignment divides \(e\)~by~\(e'\)
and, therefore, leaks the sizes (i.e., the number of bits) of both operands.
Thus, the function \Lopname{} maps division expressions to sizes of their
operands.
On the other hand, the \refrule{fig:sem:finegrained}{\labelsemload} and
\refrule{fig:sem:finegrained}{\labelsemstore} rules correspond to the
cache-line leakage model, where both loads and stores leak the cache line they
access, assuming a size of 64 bytes.
Thus, the function \Laddrname{} maps addresses to their cache lines.

\paragraph{Extending SPS to Fine-Grained Leakage Models}
Fortunately, the only changes to the semantics are in the observations,
which means that the SPS transformation requires no changes to handle
fine-grained leakage models.
Thus, SPS satisfies the following reduction theorem for fine-grained
leakage models.

\begin{lemma}[Reduction of Fine-Grained SCT to Fine-Grained CT]%
  \label{lem:fg-ct-to-sct}
  A program~\(c\) is \phiSCT{} w.r.t.\ \Lopname{}~and~\Laddrname{} if
  and only if \Tprog{c} is \phiCT{} w.r.t.\ \Lopname{}~and~\Laddrname{}.
\end{lemma}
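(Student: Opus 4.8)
The plan is to replay the proofs of \cref{thm:correct-trans} and \cref{thm:ct-to-sct} with the generalized leakage functions $\Lopname{}$ and $\Laddrname{}$ substituted for the baseline ones. The key observation — already flagged in the surrounding text — is that passing to a fine-grained model alters \emph{only} the observation component of the semantic rules: comparing \cref{fig:source-semantics} with \cref{fig:sem:finegrained}, the state updates, the consumption and production of directives, the control flow, and the notion of final state are all untouched. Since SPS itself is unchanged, the lockstep simulation between a speculative run of $c$ under directives $\seq{d}$ and a sequential run of $\Tprog{c}$ on the same directives that underlies \cref{thm:correct-trans} carries over verbatim; all that must be redone is the bookkeeping function $\Tprogobsname{}$ relating the two observation streams.

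Accordingly, I would define a fine-grained analogue of $\Tprogobsname{}$, now parametrized by $\Lopname{}$ and $\Laddrname{}$, with the same shape as the baseline one: it copies through the memory-access observation of every load and store and the operator-leakage observation of every assignment (all of these commands are reproduced verbatim by $\Tprogstarname{}$), and after each source branch observation $\obranch{b}$ it inserts the extra observation $\obranch{(\hd{\dirvar})}$ emitted by the directive-driven guard that SPS introduces — exactly as in the baseline setting, where the $\ileak{e}$ prefixing each transformed conditional or loop was the only new source of observations. This map is again injective in its first argument (the inserted branch observations sit at positions fixed by the branch structure and carry the corresponding values of $\dirvar$) and again surjective onto the observation sequences actually produced by transformed programs (by the structure of $\Tprog{c}$), so the reduction argument of \cref{thm:ct-to-sct} — injectivity for one direction, surjectivity for the other — goes through once the correspondence has been re-established.

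The point that needs genuine care is that $\Tprog{c}$ contains instructions absent from $c$ — the updates $\iassign{\dirvar}{\tl{\dirvar}}$ and $\iassign{\msvar}{\msvar\lor\neg e}$, the compilations of the SelSLH operators (in particular $\iassign{x}{\eite{\msf}{0}{e}}$ from protect), and $\ileak{e}$, which unfolds to $\iif{e}{\cskip}{\cskip}$ — and under a fine-grained model these might emit observations of their own. The $\ileak{e}$'s emit branch observations, which is precisely what the insertion clause of $\Tprogobsname{}$ already accounts for. The new assignments only ever involve list head and tail, boolean connectives, and the conditional-expression operator, none of which is variable-time in the models of interest; making this precise requires the mild well-formedness hypothesis that $\Lopname{}$ is trivial on such expressions, which holds for the \emph{variable-time} model (where $\Lopname{}$ is nontrivial only on division) and for the cache-line model (where $\Lopname{}$ is trivial throughout). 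Under that hypothesis every observation of $\Tprog{c}$ is either a branch observation from an $\ileak{e}$ or from a transformed guard, or an operator- or address-leakage observation inherited unchanged from a command of $c$ — exactly the cases that $\Tprogobsname{}$ handles. (Allowing source guards that themselves contain variable-time subexpressions costs only a routine extension of $\Tprogobsname{}$ that re-inserts the guard's operator leakage at the point where SPS duplicates it, still injective and still surjective.)

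With the correspondence $\sbigsem{c(i)}{\seq{d}}{\seq{o}} \iff \tbigsem{\Tprog{c}(i,\seq{d})}{\Tprogobs{\seq{o}}{\seq{d}}}$ re-established in the fine-grained setting, the equivalence of $\phiSCT{}$ for $c$ and $\phiCT{}$ for $\Tprog{c}$, both w.r.t.\ $\Lopname{}$ and $\Laddrname{}$, follows by literally the argument of \cref{thm:ct-to-sct}. I expect the main obstacle to be bureaucratic rather than conceptual: carefully confirming that the simulation invariant used in the proof of \cref{thm:correct-trans} is genuinely insensitive to the choice of $\Lopname{}$ and $\Laddrname{}$ (it only ever speaks about states and directives, not observations), and pinning down the exact well-formedness condition on $\Lopname{}$ that makes the passthrough clauses of $\Tprogobsname{}$ sound for the SPS-introduced code.
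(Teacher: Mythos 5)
Your proposal takes exactly the route the paper (implicitly) intends: the paper offers no proof beyond the remark that fine-grained models change only the observation components of the rules, so the state-and-directive simulation underlying \cref{thm:correct-trans} and the injectivity/surjectivity argument of \cref{thm:ct-to-sct} carry over with a re-parametrized \Tprogobsname{}. Your additional point --- that the assignments SPS itself introduces (to \dirvar{}, \msvar{}, \msf{}, and the \(\eite{\msf}{0}{e}\) compiled from protect) would emit operator-leakage observations under a variable-time model unless \Lopname{} is trivial on them, so that either a mild well-formedness hypothesis on \Lopname{} or a routine extension of \Tprogobsname{} is needed --- is a genuine subtlety that the paper silently elides, and your treatment of it is the more careful one.
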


This reduction enables the verification of fine-grained SCT using
deductive verification methods for fine-grained CT\@; as mentioned
before, such tools were not previously available for SCT.

%%% Local Variables:
%%% mode: LaTeX
%%% TeX-output-dir: "latex_output"
%%% TeX-master: "dedsct"
%%% End:

\section{Speculation-Passing Style for Spectre-v4}%
\label{sec:spectrev4}

This section presents an extension of our approach to
Spectre-v4~\cite{spectre}.
This Spectre variant exploits the
store-to-load forwarding predictor in modern CPUs to recover secrets
even if they should have been overwritten.
Specifically, it forces the processor to mispredict that a load does not
depend on a preceding store to the same address.
Thus, the processor speculatively loads the stale value before the store 
is completed.

\Cref{fig:v4:insecure} presents a minimal example program that is
vulnerable to Spectre-v4.
This program stores a secret value~\jcode{sec} at address~\jcode{a}
in \cref{line:v4:store-sec}, then overwrites it with a public
value~\jcode{pub} in \cref{line:v4:store-pub}, loads a value from the
same address in \cref{line:v4:load}, and finally leaks the loaded value
in \cref{line:v4:leak}.
If the processor mistakenly predicts that the load in
\cref{line:v4:load} does not depend on the preceding public store, it
will execute the load without waiting the store to complete.
Hence, it will speculatively load the secret value~\jcode{sec} stored
earlier at address~\jcode{a} and leak it in \cref{line:v4:leak}.

% \begin{wrapfigure}[20]{r}{0.25\linewidth}
%   \vspace{-\intextsep}
%   \begin{subfigure}{\linewidth}
%       \centering
%     \examplevfour{}
%     \caption{Vulnerable code.}%
%     \label{fig:v4:insecure}
%   \end{subfigure}

%   \bigskip

%   \begin{subfigure}{\linewidth}
%     \centering
%     \exampleTvfour{}
%     \caption{Protected code.}%
%     \label{fig:v4:protected}
%   \end{subfigure}
%   \caption{%
%     The program in (\subref{fig:v4:insecure}) is vulnerable to
%     Spectre-v4, and (\subref{fig:v4:protected}) is its protected
%     version using a speculation fence.%
%   }%
%   \label{fig:v4}
% \end{wrapfigure}

\begin{figure}[]
  \centering
  \captionsetup{width=.7\linewidth}
  \begin{minipage}{0.7\linewidth}
    \centering
    \begin{subfigure}{.45\linewidth}
      \centering
      \examplevfour{}
      \caption{Vulnerable code.}%
      \label{fig:v4:insecure}
    \end{subfigure}%
    \hspace{0.05\linewidth}%
    \begin{subfigure}{.45\linewidth}
      \centering
      \exampleTvfour{}
      \caption{Protected code.}%
      \label{fig:v4:protected}
    \end{subfigure}%
  \end{minipage}

  \caption{%
    The program in (\subref{fig:v4:insecure}) is vulnerable to
    Spectre-v4, and (\subref{fig:v4:protected}) is its protected
    version using a speculation fence.%
  }%
  \label{fig:v4}
\end{figure}

This attack can be thwarted with a speculation fence before the leaking
instruction, as shown in \cref{fig:v4:protected}.
The \initmsfname{} in \cref{line:Tv4:fence} prevents subsequent
instructions from being affected by the (mispredicted) reordering of the
load with the public store.
Using fences as a mitigation incurs a considerable performance penalty;
however, it is standard practice in cases where it is impossible to
disable store-to-load forwarding (e.g., by setting the \texttt{SSBD}
flag~\cite{ssbd}).

\paragraph{Semantics for Spectre-v4}
We now extend our source semantics from \cref{sec:setting:source} to
account for Spectre-v4, following the style of the semantics
in~\cite{highassuranceSpectre}.
Intuitively, the extension considers that reading from memory may return
one of the many values previously stored at that address.
Concretely, memories in this semantics map each address to a
\emph{list}, i.e., \emath{\mem : \N \to \N^*}, containing all the values
stored at that address.
We extend directives with a \dload{n}~directive, which forces a load
instruction to load the \(n\)-th most recent value stored at its
address, allowing the attacker to forward from any earlier store.

\begin{figure}
  \figssbsem{}
  \caption{Speculative semantics for Spectre-v4.}%
  \label{\labelfigssbsem}
\end{figure}

\Cref{\labelfigssbsem} presents the three modified rules that extend our
source semantics to Spectre-v4.
The \refssbsemload{} rule evaluates its address to~\(i\) as before, but
uses the \dload{n}~directive to determine which value to load from that
address.
Thus, the semantics loads \(n\)-th element of \emath{\mem(i)}, denoted
\emath{\mem(i)_{n}}.
Additionally, the misspeculation status becomes \msmiss{} unless the
loaded value is the most recent store to the address.
The observation is the address loaded from, as before.
The \refssbsemstore{} rule appends the stored value to the list
corresponding to the address; it leaks the address as before.
Finally, the \refssbseminit{} rule requires that the misspeculation
status is \msseq{} and sets the MSF to \msfnomask{}, as before.
However, it now modifies the resulting memory to discard all but the
most recent stores to each address.

We extend \cref{def:sct} to redefine \phiSCT{} using the modified
semantics described above.

\paragraph{Speculation-Passing Style for Spectre-v4}

% \begin{wrapfigure}[11]{r}{0.44\linewidth}
%   \vspace{-1\intextsep}
%   \begin{minipage}{\linewidth} % ?? this is necessary
%     \figssbtrans{}
%   \end{minipage}
%   \caption{Speculative-Passing Style Transformation for Spectre-v4.}%
%   \label{fig:trans-ssb}
% \end{wrapfigure}

\begin{figure}
  \figssbtrans{}
  \caption{Speculative-Passing Style Transformation for Spectre-v4.}%
  \label{fig:trans-ssb}
\end{figure}

The SPS transformation for Spectre-v4 directly reflects the changes in
the semantics.
\Cref{\labelfigssbsem} presents the three cases that differ from the
Spectre-v1 version.
Load commands now read the value at position \jcode{\hd{\dirvar}} from
their memory location, and store commands append an element to theirs.
The case for \initmsfname{} is transformed as before, and also
introduces a new \clearmemname{}~command, which sets each memory
location to its most recent element.
Intuitively, \clearmemname{} is equivalent to
\emath{\istore{0}{[\hd{[0]}]}\capp \istore{1}{[\hd{[1]}]}\capp \dots},
resetting every location.

This transformation enjoys a soundness and correctness result similar
to \cref{thm:correct-trans} that relates the speculative leakage of a
program~\(c\) with the sequential leakage of~\Tssb{c}.
This correspondence is established by a function, analogous
to~\Tprogobsname{} in the case of Spectre-v1.
Consequently, we can prove a reduction theorem between SCT and CT that
covers Spectre-v4 as follows.

\begin{theorem}[Reduction of SCT to CT for Spectre-v4]%
\label{thm:ssb:ct-to-sct}
  A program~\(c\) is \phiSCT{} in the Spectre-v4 semantics if and only
  if \Tssb{c}~is \phiCT{}.
\end{theorem}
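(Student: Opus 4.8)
The plan is to follow the Spectre-v1 development step for step: first establish the Spectre-v4 analog of \cref{thm:correct-trans} --- a leakage transformation $g : \observations^* \times \directives^* \to \observations^*$, injective in its first argument, with $\sbigsem{c(i)}{\seq{d}}{\seq{o}} \iff \tbigsem{\Tssb{c}(i, \seq{d})}{g(\seq{o}, \seq{d})}$ --- and then re-run the proof of \cref{thm:ct-to-sct} unchanged. Since $\Tssbname$ leaves the transformation of conditionals, loops, and SelSLH operators exactly as in \cref{fig:trans}, the definition of $g$ is in the same spirit as $\Tprogobsname$: it doubles every branch observation $\obranch{b}$ with the observation $\obranch{b'}$ of the directive the corresponding target branch consumes, and passes $\oaddr{\cdot}$ observations through unchanged. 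The one genuinely new wrinkle is that transformed loads now also consume a $\dload{n}$ directive while emitting only an address observation, so $g$ must account for the interleaving of $\dforce{}$ and $\dload{}$ directives in $\seq{d}$ when determining which $b'$ to insert at each branch; injectivity of $g$ in its first argument is nonetheless unaffected, since $\seq{o}$ is recovered from $g(\seq{o},\seq{d})$ simply by de-doubling the branch observations.

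The core of the argument is a step-wise simulation between source speculative states $\st{c}{\vm}{\mem}{\ms}$ and target states $\st{c_t}{\vm_t}{\mem_t}{}$, strengthened --- as in the final-state-relating version of \cref{thm:correct-trans} proved in the appendix --- to also relate final states. I would take as invariant that $c_t$ is $\Tssbstar{c}$ up to the directive/continuation bookkeeping already handled in the v1 proof, that $\vm_t$ agrees with $\vm$ on all source variables with $\vm_t(\msvar) = \ms$ and $\vm_t(\dirvar)$ the remaining directives, and that the (list-valued) memories coincide, $\mem_t = \mem$. Every case except three is literally the v1 case; the new obligations are:
\begin{itemize}
  \item \refssbsemload{}: the source picks $\mem(i)_n$ for the consumed $\dload{n}$ and sets $\ms$ to $\ms \lor (0 \neq n)$, while the target runs $\iassign{x}{\nth{[e]}{\hd{\dirvar}}}\capp \iassign{\msvar}{\msvar \lor (0 \neq \hd{\dirvar})}\capp \iassign{\dirvar}{\tl{\dirvar}}$; one checks the read cell, the $\msvar$ update, the popped directive, and the single $\oaddr{i}$ observation all match.
  \item \refssbsemstore{}: the source update $\mem(i) \mapsto \vm(x) \lapp \mem(i)$ is performed exactly by the target's $\istore{e}{\app{[e]}{x}}$, which emits $\oaddr{i}$ and consumes no directive.
  \item \refssbseminit{}: the source requires $\ms = \msseq$ and collapses every cell to its most recent element while setting $\msf$ to $\msfnomask$; the target runs $\iassert{\neg\msvar}\capp \iassign{\msf}{\msfnomask}\capp \iclearmem{}$, and by the invariant the assertion succeeds precisely when the source rule applies, after which $\iclearmem{}$ --- which emits no observation and consumes no directive --- re-establishes $\mem_t = \mem'$. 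Dually, when $\ms = \msmiss$ the source state is \emph{final} (stuck at $\iinitmsf$) and the target steps to $\sterror{}$ via \reftgtsemassertF{}; the strengthened invariant records this match, which is exactly why $\tbigsem{\cdot}{\cdot}$ is defined to accept error-terminating runs.
\end{itemize}
A mild technical point is that $\iclearmem{}$ is only \emph{morally} the infinite composition $\istore{0}{[\hd{[0]}]}\capp \istore{1}{[\hd{[1]}]}\capp \dots$, so I would equip it with a primitive one-step rule whose effect is $\mem_t \mapsto \lambda j.\,[\hd{\mem_t(j)}]$ with empty leakage and no directive consumption, lining it up with the source's quantified condition $\qA{j}{\mem'(j) = \mem(j)_0}$.

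With the v4 analog of \cref{thm:correct-trans} in hand, \cref{thm:ssb:ct-to-sct} follows exactly as \cref{thm:ct-to-sct} did: the backward direction uses injectivity of $g$ in its first argument, and the forward direction uses surjectivity of $g$ onto the leakage traces producible by $\Tssb{c}$ --- which again holds by inspection of $\Tssb{c}$, whose branch observations occur only in the $\ileak{e}\capp \iif{\hd{\dirvar}}{\dots}{\dots}$ (and loop) pattern and whose address observations mirror the source's. The step I expect to be the main obstacle is keeping the two new pieces of state in sync: the list-valued source memory against the target memory across prepending stores, directive-indexed loads, and the collapsing performed by $\iinitmsf$ and $\iclearmem{}$; and, in parallel, pinning down the definition of $g$ in the presence of interleaved $\dforce{}$ and $\dload{}$ directives (and proving it surjective for the forward implication). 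Everything concerning branches and SelSLH is inherited verbatim from the Spectre-v1 proof.
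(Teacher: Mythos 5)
Your proposal is correct and follows exactly the route the paper intends: the paper offers no explicit proof of this theorem, only the one-paragraph remark that the result follows from a Spectre-v4 analog of \cref{thm:correct-trans} established via a leakage-transformation function analogous to \Tprogobsname{}, after which the reduction is derived verbatim as in \cref{thm:ct-to-sct}. Your elaboration of the three new simulation cases (directive-indexed loads, prepending stores, and \iassert{\neg\msvar} plus \iclearmem{} for \iinitmsf{}), the interleaving of \dforce{} and \dload{} directives in the definition of the leakage map, and the need to give \iclearmem{} a primitive one-step semantics is a faithful and more detailed filling-in of that sketch.
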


%%% Local Variables:
%%% mode: LaTeX
%%% TeX-output-dir: "latex_output"
%%% TeX-master: "dedsct"
%%% End:

\section{End-to-End Speculative Constant-Time Methods}%
\label{sec:application}
In this section, we present different ways in which SPS can be
combined with other methods to (dis)prove CT\@.

\begin{theorem}[%
    Soundness and Completeness of Assertion Elimination for CT%
  ]%
\label{thm:assert-elim-ct}
  A program~\(c\) is \phiCT{} if and only if \Tass{c} is \phiCT{}, for
  any \phi{}.
\end{theorem}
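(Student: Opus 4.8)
The plan is to prove a stronger statement relating the executions of $c$ and $\Tass{c}$ step by step, from which the theorem follows. Recall from Figure~(assertion-elimination transformation) that $\Tassname{}$ replaces each $\iassert{e}$ by $\iift{\neg e}{\iassign{\retvar}{\top}}$, guards subsequent code and loop bodies by $\neg\retvar$ when they contain an assertion, and prefixes the whole program with $\iassign{\retvar}{\bot}$. The key observation is that $\retvar$ is a fresh variable not appearing in $c$, so it is invisible to $\phi$ and plays no role in the guards or memory of the original program. I would first establish the \emph{simulation invariant}: for any target state $\st{c}{\vm}{\mem}{}$ reachable from $c(i,\seq{d})$ and the corresponding state $\st{\Tassstar{c}'}{\vm'}{\mem'}{}$ reachable from $\Tass{c}(i,\seq{d})$ with matching directives, we have $\mem = \mem'$, $\vm$ and $\vm'$ agree on all variables except possibly $\retvar$, and the observation produced at that step is identical in both runs. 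When $\vm'(\retvar) = \bot$, the transformed program faithfully mirrors $c$; when $\vm'(\retvar) = \top$, the transformed program has ``short-circuited'' past a failed assertion and is executing only skips and trivial conditionals (which may still emit $\obranch{}$ observations for the guards $\neg\retvar$, but those are constant, hence identical across the two related runs).

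Concretely I would proceed by induction on the structure of $c$ (or on the length of the target execution), case-splitting on whether $\vm'(\retvar)$ is $\bot$ or $\top$. In the $\bot$ case, each transformed command steps in lockstep with the original: assignments, loads, stores, and genuine conditionals are unchanged; an $\iassert{e}$ in $c$ either steps (when $\eval{e}{\vm}=\top$, and then $\iift{\neg e}{\ldots}$ in $\Tassstar{c}$ also steps to $\cskip$ leaving $\retvar$ untouched) or steps to $\sterror$ via $\reftgtsemseqE$ (when $\eval{e}{\vm}=\bot$, and then $\iift{\neg e}{\iassign{\retvar}{\top}}$ sets $\retvar:=\top$); in the latter subcase the source run halts at $\sterror$ and the target run enters the $\top$ regime, where by the guards $\neg\retvar$ all remaining loop bodies and sequenced blocks are skipped, so the target run terminates with \emph{no further observations} beyond the guard $\obranch{}$s. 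Because $\Tassname{}$ does not alter the leakage of any command (it only adds $\iift{}$s whose guards are the deterministic, secret-independent variable $\retvar$), the observation sequence of $\Tass{c}(i,\seq{d})$ equals the observation sequence of $c(i,\seq{d})$ up to these constant extra $\obranch{}$ markers, \emph{and} the positions and values of those markers depend only on $c$, $\seq{d}$, and where the first assertion failure occurs — all of which are identical for $\phi$-related inputs precisely when they are identical for $c$. This last point, combined with $\tbigsem{}$ already admitting error-terminating runs (as remarked after Definition~\ref{def:ct}), is what makes the equivalence go through in both directions.

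For the theorem itself: suppose $c$ is $\phiCT{}$ and take $\phi$-related $i_1, i_2$ with $\tbigsem{\Tass{c}(i_k,\seq{d})}{\seq{o_k}}$. Using the simulation invariant, I recover source runs $\tbigsem{c(i_k,\seq{d})}{\seq{o_k}'}$ with a fixed reconstruction map $\seq{o_k} \mapsto \seq{o_k}'$ that deletes the constant $\retvar$-guard markers (whose pattern is the same for $k=1,2$, since both runs follow the same control structure up to the first error, which — being a function of branch values and memory, not of $\retvar$ — is itself governed by $\phi$-equivalence). Then $\phiCT{}$ of $c$ gives $\seq{o_1}' = \seq{o_2}'$, and re-inserting the identical marker pattern yields $\seq{o_1} = \seq{o_2}$. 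The converse direction is symmetric, reconstructing target runs from source runs. The \textbf{main obstacle} I anticipate is bookkeeping the marker positions precisely in the $\retvar=\top$ regime: after a failed assertion, the transformed program still traverses the surrounding loops and sequences (their bodies guarded away, but the loop \emph{guard} tests $e \land \neg\retvar$ and the trailing $\iift{\neg\retvar}{}$ still emit $\obranch{}$ observations), so I must argue carefully that the number and values of these trailing observations are a deterministic function of the pre-error control-flow position — hence equal across $\phi$-related runs exactly when the error position is — rather than depending on data. Handling the mutually recursive $\Tassstar{}$ clauses for $\capp$ and $\iwhile{}$, where the guard is only added ``otherwise'' (i.e., when the sub-block genuinely contains an assertion), requires a clean case analysis on $\Tassstar{c} = c$, but this is routine once the invariant is stated with the disjunction on $\vm'(\retvar)$.
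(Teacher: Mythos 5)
Your overall strategy coincides with the paper's: it too proves the theorem via an execution-correspondence lemma between $c$ and \Tass{c} (by induction on the number of steps, with the two variable maps equal up to~\retvar{}) and then transfers \phiCT{} across a deterministic reconstruction of the observation sequences. You are in fact more explicit than the paper about the observation bookkeeping, since the paper's auxiliary lemma in \cref{app:ct-verif:ret-elim} only relates final states of non-error runs and does not track observations at all.

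However, the step you yourself flag as the ``main obstacle'' is where the argument genuinely fails for arbitrary $c$ and $\phi$. Your claim that the inserted $\obranch{}$ markers are ``identical for $\phi$-related inputs precisely when they are identical for $c$'' is not true in general, because $\Tassstar{\iassert{e}} = \iift{\neg e}{\iassign{\retvar}{\top}}$ leaks $\obranch{\neg\eval{e}{\vm}}$ whereas \iassert{e} leaks \lnil{} under both \reftgtsemassertT{} and \reftgtsemassertF{}: the transformation makes assertion outcomes observable even when they are invisible in~$c$. Concretely, take $c = \iassert{x = 0}$ with $\phi$ placing no constraint on~$x$. Both runs of~$c$ produce the empty trace (one ends in \cskip{}, the other in \sterror{}), so $c$ is \phiCT{}; yet \Tass{c} leaks $\obranch{x \neq 0}$, which differs across the two runs. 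So the forward direction of your argument (and of the theorem as literally stated ``for any $\phi$'') breaks at exactly this point, and no reconstruction map on traces can fix it. The repair is to add the hypothesis under which the theorem is actually used: for programs in the image of \Tprogname{}, the only assertion condition is $\neg\msvar$, and \msvar{} is a deterministic function of the shared directive list and of branch values that are already leaked, so the positions and values of every inserted marker are determined by $(\seq{d}, \seq{o})$ and agree whenever the source traces do. You should state and use that invariant explicitly; as written, your proposal (like the paper's one-line justification) glosses over precisely the case of a run that fails an assertion whose outcome is not already determined by the leakage.
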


This theorem follows directly from a result analogous to
\cref{thm:correct-trans}, where the (sequential) observations of~\(c\)
and~\Tass{c} are in precise correspondence.

% ----------------------------------------------------------------------
\subsection{Verification via Noninterference}%
\label{sec:application:noninterference}
Our first approach is based on a---folklore---reduction of constant-time
to noninterference, a widely studied information flow policy which
requires that public outputs do not depend on secret inputs.

The reduction is performed in two steps.
The first step removes \iassertname{} commands, by making programs single-exit,
i.e., only return at the end of the program, with a standard
transformation, denoted \Tassname{} and presented in
\refappendixae{}, that introduces conditional branches to
skip the rest of the program when an assertion fails.
The second step introduces a ghost variable that accumulates leakage during
execution (see, e.g.,~\cite{FinegrainedCT}). \Cref{fig:leak-inst}
presents such a transformation, denoted~\Tleakname{}, which
initializes a ghost variable~\obsvar{} to the empty list and appends
to it the observation resulting from each memory access and branch.
All other instructions are left unchanged.

\begin{figure}
  \figleaktrans{}
  \caption{Leakage instrumentation.}%
  \label{fig:leak-inst}
\end{figure}

Both steps are sound and complete for CT\@. Consequently, we can employ
standard noninterference verification techniques to prove SCT\@.  In
this work, we choose Relational Hoare Logic (RHL), is sound and
complete to prove noninterference~\cite{Barthe2016}.

Recall that RHL manipulates judgments of the form \rhl{c_1}{c_2}{\Pre}{\Pos},
which mean that terminating executions of the programs \(c_1\)~and~\(c_2\)
starting from initial states in the relational precondition~\Pre{} yield final
states in the relational postcondition~\Pos{}.
Thus, the statement for CT soundness and completeness introduced in previous
work is as follows.

\begin{theorem}[Soundness and Completeness of RHL for CT]
  A program~\(c\) is \phiCT{} if and only if we can derive
  \begin{equation*}
    \rhl%
      {\Tleak{c}}
      {\Tleak{c}}
      {\phi}
      {\eqset{\obsvar}}\text,
  \end{equation*}
  where the \eqset{\obsvar} clause means that the final states coincide
  on the variable~\obsvar{}.
\end{theorem}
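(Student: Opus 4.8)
The plan is to factor the statement through the (known) soundness and completeness of Relational Hoare Logic for non-interference~\cite{Barthe2016}: for any command $p$ and any relations $\Pre,\Pos$ expressible in the assertion language, the judgment $\rhl{p}{p}{\Pre}{\Pos}$ is derivable if and only if $p$ is non-interferent from $\Pre$ to $\Pos$---that is, any two terminating executions of $p$ from $\Pre$-related initial states end in $\Pos$-related states, where ``state'' here means the variable map and the memory (the emitted observations are irrelevant to this property). Taking $p$ to be $\Tleak{c}$, $\Pre$ to be $\phi$, and $\Pos$ to be $\eqset{\obsvar}$, it then suffices to show that $c$ is \phiCT{} if and only if $\Tleak{c}$ is non-interferent from $\phi$ to $\eqset{\obsvar}$. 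Two conventions make this precise: $\phi$ is understood to keep the directive input $\dirvar$ public (i.e., it implies $\dirvar_1 = \dirvar_2$), mirroring the single directive list shared by both runs in \cref{def:ct}; and, since $\Tleakname{}$ is defined only on assert-free commands, we assume $c$ is assert-free, which can be arranged by first applying $\Tassname{}$ and invoking \cref{thm:assert-elim-ct}.

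The bridge between the two sides is a \emph{simulation lemma} for the leakage instrumentation, of the same form as \cref{thm:correct-trans} (and, as noted there, really proved in a stronger version that relates final states as well): for every assert-free command $c$ and input $(i,\seq{d})$, the executions of $c$ and of $\Tleak{c}$ from $(i,\seq{d})$ either both terminate or both diverge, and in the terminating case end in final states that agree on every variable of $c$ and on memory, with the final value of $\obsvar$ in the $\Tleak{c}$ run equal to the leakage trace $\seq{o}$ produced by the $c$ run. I would prove this by structural induction on $c$, maintaining the invariant that at each intermediate point $\obsvar$ holds the prefix of $c$'s trace emitted so far, that $\obsvar$ does not occur in $c$ (so the inserted assignments to $\obsvar$ never alter the evaluation of any guard or address, nor the memory), and that all other state components coincide. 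The assignment, load, store, $\cnil{}$, and sequencing cases are immediate from the clauses of $\Tleakname{}$; the conditional case uses that appending $\obranch{e}$ and then branching on $e$ reproduces the same guard value and records the matching $\obranch{b}$; and the loop case needs an inner induction on the number of iterations, using that $\Tleakname{}$ inserts one $\obsvar$-append before the loop and one at the end of each iteration, so that $\obsvar$ accumulates precisely $\obranch{b_1},\obranch{b_2},\ldots$ interleaved with the inductively correct body traces.

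Given the lemma, the equivalence is routine. If $c$ is \phiCT{}, take $i_1 \relphi i_2$ with both runs of $\Tleak{c}$ terminating; by the lemma their final values of $\obsvar$ are the traces $\seq{o_1}$ and $\seq{o_2}$ produced by $c$ on $i_1$ and $i_2$, and \cref{def:ct} gives $\seq{o_1} = \seq{o_2}$, so the final states agree on $\obsvar$; hence $\Tleak{c}$ is non-interferent and the judgment is derivable by completeness of RHL. Conversely, if the judgment is derivable then $\Tleak{c}$ is non-interferent by soundness of RHL, so $\phi$-related inputs yield equal final values of $\obsvar$; by the lemma these equal the traces of $c$, so $c$ is \phiCT{}.

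The main obstacle is the simulation lemma, and within it two points warrant care: the loop case, where the argument must be uniform in the number of unrollings and must keep the ``$\obsvar$ equals the trace emitted so far'' invariant synchronized with both the pre-loop append and the per-iteration append---an off-by-one in where the guard observation is recorded would break the correspondence; and the freshness discipline for $\obsvar$, which is precisely what makes the instrumentation semantically transparent to the original computation. The remaining ingredients---the reduction to assert-free programs via \cref{thm:assert-elim-ct}, and the treatment of $\dirvar$ as a public input---are routine, and the only genuine difficulty of the ``only if'' (completeness) direction, synthesizing loop invariants, is inherited from completeness of RHL for non-interference and is not re-proved here.
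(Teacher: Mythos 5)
Your proposal is correct and follows exactly the route the paper takes (the paper states this result as imported from prior work rather than proving it): factor through soundness and completeness of RHL for non-interference, and separately show that the leakage instrumentation \Tleakname{} is semantically transparent with \obsvar{} accumulating precisely the leakage trace, so that equality of final \obsvar{} values is equivalent to equality of traces. Your added care about the loop case, the freshness of \obsvar{}, the assert-free assumption (discharged via \Tassname{} and \cref{thm:assert-elim-ct}), and the treatment of \dirvar{} as public all match the conventions the paper relies on implicitly.
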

Combining this result with the soundness and completeness of SPS
(\cref{thm:ct-to-sct}), we obtain the following verification methodology.

\begin{corollary}[Soundness and Completeness of RHL for SCT]
  A program~\(c\) is \phiSCT{} if and only if we can derive
  \begin{equation*}
    \rhl%
      {\Tleak{\Tass{\Tprog{c}}}}
      {\Tleak{\Tass{\Tprog{c}}}}
      {\phi \land \eqset{\dirvar}}
      {\eqset{\obsvar}}\text,
  \end{equation*}
  where the \eqset{\dirvar} clause means that the initial states
  coincide on the variable~\dirvar{}.
\end{corollary}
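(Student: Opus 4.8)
The plan is to obtain the statement as a composition of three biconditionals that are already available, chained through the intermediate programs $\Tprog{c}$ and $\Tass{\Tprog{c}}$. First, \cref{thm:ct-to-sct} gives that $c$ is \phiSCT{} if and only if $\Tprog{c}$ is \phiCT{}. Second, I would instantiate \cref{thm:assert-elim-ct} at $\Tprog{c}$ to get that $\Tprog{c}$ is \phiCT{} if and only if $\Tass{\Tprog{c}}$ is \phiCT{}. Third, I would instantiate the preceding theorem (soundness and completeness of RHL for CT) at $\Tass{\Tprog{c}}$ to get that $\Tass{\Tprog{c}}$ is \phiCT{} if and only if the judgment $\rhl{\Tleak{\Tass{\Tprog{c}}}}{\Tleak{\Tass{\Tprog{c}}}}{\phi}{\eqset{\obsvar}}$ is derivable. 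Chaining these three equivalences gives the corollary, up to the exact reading of the relational precondition, which is the only point that needs care.

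The assertion-elimination step is essential rather than cosmetic: $\Tprog{c}$ genuinely contains assert commands, since the SPS transform of $\iinitmsf{}$ is $\iassert{\neg \msvar} \capp \iassign{\msf}{\msfnomask}$, whereas the leakage instrumentation $\Tleak{\cdot}$ of \cref{fig:leak-inst} is defined only on assert-free programs. So I would first push $\Tprog{c}$ through $\Tass{\cdot}$, which makes the program single-exit and assert-free while preserving CT by \cref{thm:assert-elim-ct}; only then is $\Tleak{\cdot}$ applicable. I would also record that $\Tass{\cdot}$ introduces a fresh ghost variable $\retvar{}$ and neither reads nor writes $\dirvar{}$ or $\obsvar{}$, so it does not perturb the remaining reasoning.

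The delicate point is the relational precondition. The CT notion used for the \emph{target} program $\Tprog{c}$ is \cref{def:ct}, which quantifies over a \emph{single, shared} directive sequence $\seq{d}$ for the two executions, and this $\seq{d}$ is supplied to $\Tprog{c}$ through the distinguished input variable $\dirvar{}$ via $c(i, \seq{d}) \eqdef \st{c}{\mset{\vm}{\dirvar}{\seq{d}}}{\mem}{}$. The generic RHL-for-CT theorem, by contrast, takes an opaque relation on inputs, so I would instantiate it with the relation ``$\phi$ holds of the source inputs \emph{and} the $\dirvar{}$ components agree'', i.e., with $\phi \land \eqset{\dirvar}$. Since $\dirvar{}$ does not occur in $c$ (so $\phi$ places no constraint on it) and since $\Tprog{\cdot}$, $\Tass{\cdot}$, $\Tleak{\cdot}$ touch $\dirvar{}$ only through the head/tail updates they themselves introduce, the pairs of target initial states related by $\phi \land \eqset{\dirvar}$ are exactly the pairs $(c(i_1, \seq{d}), c(i_2, \seq{d}))$ with $i_1 \relphi i_2$ over which \cref{def:ct} ranges; making this identification precise is where the only real content of the proof lies, and I expect it to be the main, if mild, obstacle, since everything else is routine rewriting. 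Finally, I would observe that $\eqset{\obsvar}$ on final states coincides with equality of leakage traces, because $\Tleak{\cdot}$ faithfully reifies the leakage into the fresh variable $\obsvar{}$; hence the derived judgment is precisely the one produced by the RHL-for-CT theorem under this instantiation, closing the chain.
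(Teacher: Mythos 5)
Your proposal is correct and follows essentially the same route as the paper, which obtains the corollary by chaining the SPS reduction (\cref{thm:ct-to-sct}), assertion elimination (\cref{thm:assert-elim-ct}), and the RHL-for-CT theorem, with the precondition strengthened to $\phi \land \eqset{\dirvar}$ precisely because \cref{def:ct} quantifies over a single shared directive sequence injected through~\dirvar{}. Your additional observations (that $\Tass{\cdot}$ is needed before $\Tleak{\cdot}$ because \Tprog{c} contains asserts, and that the precondition instantiation is where the real content lies) are accurate elaborations of what the paper leaves implicit.
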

A stronger result, using a restricted version of RHL with only two-sided
rules, also holds.

% ----------------------------------------------------------------------
\subsection{Verification via Assertion Safety}%
\label{sec:application:assertion-safety}

\begin{figure}
  \figproducttrans{}
  \caption{Product program transformation.}%
  \label{fig:product-program}
\end{figure}

A different approach to verify CT is to reduce it to the assertion
safety of a product program (see, e.g.,~\cite{Barthe2016}). Again, the
reduction is performed in two steps. The first step is to replace
\iassertname{} commands by conditioned \ireturnname{} commands; by
abuse of notation, we also call this step $\Tass{}$. The second step
is to build the product program. Intuitively, the product program is
obtained by duplicating every instruction of the original program
(introducing two copies of each variable) and inserting an assertion
before every memory access and conditional branch (which guarantees
that the leaked values coincide).  \Cref{fig:product-program}
summarizes the transformation, denoted~\Tprodname{}, which depends on
the indistinguishability relation \phi{}.  We use \rtag{e}{n} to
rename all variables in the expression~\(e\) with subscript~\(n\).  In
this setting, the soundness and completeness statement for CT
from~\citet{Barthe2016} is as follows.

\begin{theorem}[Soundness and Completeness of Assertion Safety for CT]
  A program~\(c\) is \phiCT{} if and only if \Tprod{c}{\phi} is assertion safe,
  i.e.,
  \begin{equation*}
    \qnE{\,i \,\seq{d}}{\sem*{\Tprod{c}{\phi}(i, \seq{d})}{}{}{\sterror}}\text.
  \end{equation*}
\end{theorem}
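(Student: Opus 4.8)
The plan is to relate executions of the product program $\Tprod{c}{\phi}$ to \emph{pairs} of executions of $c$, via a lockstep simulation. First I would fix an encoding of product states: a state of $\Tprod*{c}$ whose variable map $\vm$ splits along the two subscripts into source variable maps $\vm_1,\vm_2$ (with $\vm(\rtag{y}{1}) = \vm_1(y)$ and $\vm(\rtag{y}{2}) = \vm_2(y)$), whose memory splits analogously, and whose residual command is — up to the inserted assertions and the duplication of the basic instructions — the product of two source continuations. The key structural fact, read off \cref{fig:product-program} by induction on $c$, is that every conditional, loop, load, and store in $\Tprod*{c}$ is immediately preceded by an assertion $\iassert{\rtag{e}{1} = \rtag{e}{2}}$ comparing the two copies of the guard or address; hence \emph{as long as no assertion has failed}, the two source components have taken exactly the same control-flow decisions (so their residual commands are literally equal) and have emitted exactly the same observation at each leaking instruction.

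With this in hand I would prove a simulation lemma: starting from a product input encoding two source inputs $i_1,i_2$ and the common directive list $\seq{d}$ in both $\dirvar$ copies, either (i) the product run never reaches $\sterror$, in which case it weakly simulates the source runs of $c$ from $i_1$ and $i_2$ in lockstep — so at every point their leakages-so-far coincide, and one terminates iff the other does, with equal final traces; or (ii) the product run reaches $\sterror$ at some assertion $\iassert{\rtag{e}{1} = \rtag{e}{2}}$, which happens precisely when, after a common leakage prefix, the two source runs are about to disagree on a branch guard or a memory address. The simulation is \emph{weak} because each basic source instruction is compiled to two copies (plus, for control and memory operations, a leading assertion), so the invariant must also record how far the product has advanced inside simulating the current source step; this bookkeeping is routine, by induction on $c$ and the clause $\Tprod*{c \capp c'} = \Tprod*{c} \capp \Tprod*{c'}$.

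From the lemma the theorem follows. For the ``if'' direction (assertion safety $\Rightarrow$ $\phiCT{}$): take $i_1 \relphi i_2$ with $\tbigsem{c(i_1, \seq{d})}{\seq{o_1}}$ and $\tbigsem{c(i_2, \seq{d})}{\seq{o_2}}$ and run $\Tprod{c}{\phi}$ on the corresponding product input; the leading guard $\iift{\phi}{\Tprod*{c}}$ is taken because the two halves are $\phi$-related, the run does not reach $\sterror$ by hypothesis, so case (i) gives $\seq{o_1} = \seq{o_2}$. For the ``only if'' direction I would argue contrapositively: a product run reaching $\sterror$ must start from a product input whose halves are $\phi$-related (otherwise the leading $\phi$-guard skips its body and no assertion is reached), and then case (ii) exhibits two runs of $c$ on $\phi$-related inputs whose leakage traces first differ at some position, contradicting $\phiCT{}$. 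If the transformation in the statement is read as folding in the assertion-elimination step, one first invokes \cref{thm:assert-elim-ct} to reduce to an assert-free $c$.

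The delicate point, which I expect to be the main obstacle, is reconciling the termination-insensitive, whole-execution shape of $\phiCT{}$ (\cref{def:ct} constrains only \emph{pairs of terminating} runs) with the prefix-sensitive nature of an assertion failure — the product can reach $\sterror$ because one copy of a loop runs forever while the other exits, a pair about which \cref{def:ct} alone says nothing. Making the ``only if'' direction and the correspondence go through cleanly therefore calls for the step-indexed ($n$-step) semantics (or the assumption, standard for cryptographic code, that $c$ terminates on all inputs): one states the correspondence at each fixed step count, observes that an assertion failure witnesses a leakage desynchronization already visible at some finite $n$, and thereby recovers the exact match with $\phiCT{}$. The remaining work — the simulation itself — is a direct structural induction guided by \cref{fig:product-program}.
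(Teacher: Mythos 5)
Your argument is correct, and it is the standard product-program/self-composition proof; note that the paper does not actually prove this theorem---it imports the statement from \cite{Barthe2016}---so the comparison is with that classical argument rather than with anything in the text. Your key lemma (every leaking instruction in $\Tprod*{c}$ is guarded by $\iassert{\rtag{e}{1} = \rtag{e}{2}}$, so an assertion failure is exactly the first point at which two lockstep source runs would emit different observations) is precisely how that result is established, and your bookkeeping for the weak simulation is routine as you say. The termination caveat you raise at the end is a genuine issue and not mere pedantry: since \cref{def:ct} constrains only pairs of runs that each reach a final state, a program whose two $\phi$-related runs first disagree on a branch guard, after which one of them diverges, can still be \phiCT{} (the diverging run admits no big-step judgment, so the definition imposes nothing on that pair), yet its product reaches $\sterror{}$ at that very assertion---so the ``only if'' direction fails as literally stated. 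Your proposed remedies, a step-indexed formulation or the standing assumption that $c$ terminates on all inputs (reasonable for the cryptographic code targeted here), are the right way to discharge this; the paper leaves that hypothesis implicit.
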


The following corollary establishes that composing the product program
transformation with SPS lifts this methodology to SCT\@.

\begin{corollary}[Soundness and Completeness of Assertion Safety for SCT]
  A program~\(c\) is \phiSCT{} if and only if
  \Tprod{\Tass{\Tprog{c}}}{\phi \land \eqset{\dirvar}} is assertion safe.
\end{corollary}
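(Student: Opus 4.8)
The plan is to obtain the corollary by composing three results that are already available: the reduction of SCT to CT for the SPS transform (\cref{thm:ct-to-sct}), soundness and completeness of assertion elimination for CT (\cref{thm:assert-elim-ct}), and soundness and completeness of assertion safety for CT via the product program (the theorem immediately preceding this corollary). Reading these off in sequence, I would establish the following chain of equivalences: $c$ is \phiSCT{} iff $\Tprog{c}$ is \phiCT{} (by \cref{thm:ct-to-sct}); $\Tprog{c}$ is \phiCT{} iff $\Tprog{c}$ is $(\phi \land \eqset{\dirvar})$-CT, reading \dirvar{} as an ordinary public input variable (a definitional unfolding, see below); $\Tprog{c}$ is $(\phi \land \eqset{\dirvar})$-CT iff $\Tass{\Tprog{c}}$ is $(\phi \land \eqset{\dirvar})$-CT (by \cref{thm:assert-elim-ct}, instantiated at the relation $\phi \land \eqset{\dirvar}$, which is legitimate since that theorem holds for an arbitrary relation); and $\Tass{\Tprog{c}}$ is $(\phi \land \eqset{\dirvar})$-CT iff $\Tprod{\Tass{\Tprog{c}}}{\phi \land \eqset{\dirvar}}$ is assertion safe (by the preceding theorem, instantiated at the program $\Tass{\Tprog{c}}$ and the relation $\phi \land \eqset{\dirvar}$). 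Chaining these four equivalences gives the statement.

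The middle step is the one that needs a word of justification, and it is where I expect the only real friction. By \cref{def:ct} for target programs, $\Tprog{c}$ is \phiCT{} precisely when, for every pair $i_1 \relphi i_2$ of low-equivalent inputs and every directive list \seq{d}, the executions $\Tprog{c}(i_1, \seq{d})$ and $\Tprog{c}(i_2, \seq{d})$ produce equal leakage; since the directive list is passed identically to both runs, this is exactly the statement that $\Tprog{c}$ is CT with respect to the relation $\phi \land \eqset{\dirvar}$ on full inputs, i.e.\ reading \dirvar{} as a public input variable that the relation forces to agree on the two sides. This reconciliation of the ``shared extra parameter'' reading of CT used in \cref{def:ct} with the ``relation on the full input'' reading used by the product-program construction is the hinge of the argument: it is what lets the CT-only machinery of \cref{thm:assert-elim-ct} and the product-program theorem be applied to the SPS transform, and it is why the relation appearing in the statement is $\phi \land \eqset{\dirvar}$ rather than $\phi$.

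The remaining work is routine bookkeeping, namely checking that the composition of syntactic transformations is well-formed. Here $\Tprog{c}$ is a target-syntax program that may contain assertions $\iassert{\neg \msvar}$ arising from translating $\iinitmsf$, alongside the ordinary variables $\msvar$, $\msf$, and \dirvar{}; \Tassname{} removes precisely those assertions, introducing the fresh variable $\retvar$ and guarding the rest of the program by $\neg \retvar$; and \Tprodname{} then applies uniformly to the resulting assertion-free program, duplicating every variable (including \dirvar{}) and inserting its own assertions before branches and memory accesses. In particular, the $\eqset{\dirvar}$ conjunct of the relation is realized as the conjunct $\rtag{\dirvar}{1} = \rtag{\dirvar}{2}$ of the leading guard introduced by $\Tprod{\cdot}{\phi \land \eqset{\dirvar}}$, which is exactly what discharges the per-loop assertions $\rtag{\hd{\dirvar}}{1} = \rtag{\hd{\dirvar}}{2}$ that the product transformation produces for the loops of $\Tprog{c}$ (which branch on $\hd{\dirvar}$). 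I do not expect any genuine obstacle beyond the identification above: \cref{thm:ct-to-sct} already absorbs the speculative content of the argument, and each remaining step is an off-the-shelf CT result applied to a syntactically well-behaved program.
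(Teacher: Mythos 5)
Your proposal is correct and follows exactly the route the paper intends: the corollary is obtained by chaining \cref{thm:ct-to-sct}, \cref{thm:assert-elim-ct} at the relation $\phi \land \eqset{\dirvar}$, and the preceding product-program theorem, with the only substantive point being the identification of the shared directive list in \cref{def:ct} with the $\eqset{\dirvar}$ conjunct of the relation---which you correctly isolate as the hinge. Nothing further is needed.
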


\subsection{Verification by Dynamic and Taint Analysis}%
\label{sec:application:taint-analysis}
Dynamic analysis and fuzzing execute the program with different secret
inputs and collect observations~\cite{WichelmannMES18} (e.g., memory
access patterns or taken branches) or execution
times~\cite{DBLP:conf/date/ReparazBV17}, and aim to find differences
between them.
When such differences are found, the tools report a potential CT
violation.
Taint analysis (e.g.,~\cite{ctgrind}) is a kind of dynamic approach, which
marks secret inputs as ``tainted'' and tracks how the taint propagates
through the program.
If the taint reaches a memory access or a branch, the tool reports a CT
violation.
Thus, taint analysis has the benefit of not requiring multiple executions
with different secret inputs.
With the help of SPS, which exposes speculative execution as sequential execution, dynamic and taint analysis tools will be able to access speculative leakages.
In this way, both dynamic analysis and taint analysis can be lifted to
SCT verification.

%%% Local Variables:
%%% mode: LaTeX
%%% TeX-output-dir: "latex_output"
%%% TeX-master: "dedsct"
%%% End:

\section{Evaluation}%
\label{sec:evaluation}
This section evaluates the SPS-lifted CT verification methodologies
from \cref{sec:application}.  We use existing CT verification tools:
the \easycrypt{} proof assistant, the \binsecrel{} symbolic execution CT
checker, and the \ctgrind{} taint analyzer.  We refer to these
combinations as SPS-\easycrypt{}, SPS-\binsec{}, and SPS-\ctgrind{}.

We evaluate the tools using an established benchmark, initially
developed by Paul Kocher for the sake of testing the Spectre mitigations in
Microsoft's C/C++ Compiler~\cite{benchmark}. This benchmark was also
used to assess the performance of SCT analysis tools in previous work,
e.g., in \citet{DBLP:conf/ndss/DanielBR21,pitchfork}.  The benchmark
contains 16 test cases that are vulnerable to Spectre-v1, of which
only four contain loops---loops are relevant for our second
methodology, which is bounded.  We consider three variants of each
case.  The first variant is the original vulnerable program (from
Kocher's benchmarks, with modifications from
Pitchfork~\cite{pitchfork}).  The second variant is patched with index
masking~\cite{indexmasking} (taken
from~\cite{DBLP:conf/ndss/DanielBR21}), where the index of each memory
access is masked with the size of the array.  The third variant is
patched with SelSLH (implemented in this work), where each value
loaded from memory is protected with an MSF\@.  Thus, we have 48
evaluation cases in total.  Of these, nineteen contain SCT violations:
the vulnerable variant of every case (sixteen total) and three
index-masked cases (these are patched with index-masking but still
insecure).

Besides this benchmark, we evaluate SPS-\easycrypt{} on four relevant
examples from the literature: the motivating example, a program that
minimizes MSF updates and cannot be verified with the type system
of~\cite{ShivakumarTyping2023}, the MAC rotate function from
MEE-CBC~\cite{FinegrainedCT}, and the Spectre-v4 example from
\cref{sec:spectrev4}.

\begin{figure}
  \centering
  \caseone{}\hspace{2em}\casefive{}
  \caption{Evaluation cases one and five.}%
  \label{fig:case-1-and-5}
\end{figure}

As an illustration of the evaluation cases, we present the first and
fifth cases from the benchmarks in \cref{fig:case-1-and-5}.
All cases share two public arrays, \jcode{pub}~and~\jcode{pub2},
a secret array,~\jcode{sec} (which is never accessed directly),
a temporary variable,~\jcode{temp} (to avoid compiler optimizations),
and an MSF,~\jcode{msf}.
The first evaluation case has no loops, and its vulnerability stems
from the bounds check in \cref{line:case1:bounds-check} that can be
speculatively bypassed.
Its index-masked and SelSLH variants protect the index of access
with a mask and an MSF, respectively.
The fifth evaluation case contains a loop in \cref{line:case5:loop}
that, when speculatively overflowed, can access the secret array.
As in the previous case, its index-masked and SelSLH variants protect
the array index.

% ----------------------------------------------------------------------
\subsection{Noninterference Verification in SPS-\easycrypt{}}%
\label{sec:evaluation:easycrypt}

\easycrypt{}~\cite{EasyCrypt} is a proof assistant that previous work has
used as a CT verification tool~\cite{FinegrainedCT}.
As it implements Relational Hoare logic---among other features---we use
it to realize our first verification approach, presented in
\cref{sec:application:noninterference}.

\paragraph{Setup}
We manually transform each variant of each evaluation case with the
three transformations discussed in
\cref{sec:application:noninterference}, i.e., SPS, leakage
instrumentation, and assertion elimination.  We also transform two
examples of interest from the literature: a SelSLH-protected program
that cannot be verified with an SCT type system
(from~\cite{ShivakumarTyping2023}), and the MAC rotate function from
MEE-CBC~\cite{FinegrainedCT}.  \easycrypt{} readily supports our target
language---i.e., a while language with arrays and lists.  For each
case (we take case one as an example), we prove a lemma of the form
\begin{center}
  \begin{jasmincode}[
      fontsize=\footnotesize,
      lineno=false,
      frame=false,
      outerwidth=80ex
    ]
    \jasminkw{equiv} \jasmindname{case1\_sct} \jasminkw{:}
    Case1.slh\_trans \jasmintype{\(\sim\)} Case1.slh\_trans \jasmintype{:}
    phi $\land$ \jasmintype{=\{}dir\jasmintype{\} ==>}
    \jasmintype{=\{}obs\jasmintype{\}}.
  \end{jasmincode}
\end{center}
as discussed in \cref{sec:application:noninterference}.

\begin{wrapfigure}[11]{r}{0.28\linewidth}
  \centering
  \vspace{-2.5\intextsep}
  \casevoneslh{}%
  \caption{%
    The SelSLH example from~\cite{ShivakumarTyping2023} that does not
    type-check but SPS-\easycrypt{} can verify.%
  }%
  \label{fig:case-v1-slh}
\end{wrapfigure}

\paragraph{Results}
We are able to verify all secure cases from the
Kocher benchmark are SCT with average 10.28 lines of code.
(See \path{kocher.ec} in the artifact.)
In addition, we verified that three examples with SelSLH
protection are SCT, which cannot be shown to be safe with
type system.
More specifically, we considered the initialization example
from \cref{fig:init}, the example (\cref{fig:case-v1-slh})
from~\cite{ShivakumarTyping2023}, and the Spectre v4 example from \cref{fig:v4} (See \path{initialization.ec}, \path{sel_slh_typing_v1.ec}, and \path{v4.ec} in the artifact).
Verifying all three examples requires little effort:
the proof script is only 12.6 lines in average.
The reason for low proof effort is that, when the program is indeed SCT, two executions driven by the same directives are lockstep, which is naturally supported by \easycrypt{}.

\subsubsection*{MAC Rotation in MEE-CBC}
To illustrate the use of deductive verification for (S)CT verification
under fine-grained leakage models, we prove that a MAC rotation function
is CT under the cache-line model in SPS-\easycrypt{}.
We consider the MAC rotation function from the
MAC-then-Encode-then-CBC-Encrypt (MEE-CBC) authenticated encryption
scheme in TLS 1.2, which is a natural case study for CT verification
because early implementations were vulnerable to the Lucky Thirteen
timing attack~\cite{DBLP:conf/sp/AlFardanP13}.

\begin{figure}
  \begin{subfigure}[t]{0.4\linewidth}
    \centering
    \macrotinsecure{}
    \subcaption{Original Program.}%
    \label{fig:mac-rotation:origin}
  \end{subfigure}%
  \hfill%
  \begin{subfigure}[t]{0.53\linewidth}
    \centering
    \macrottranslated{}
    \subcaption{SPS and leakage Transformation.}%
    \label{fig:mac-rotation:sps}
  \end{subfigure}
  \caption{MAC rotation implementation optimized for cache-line
    leakage model.  \mdsize{}, \outvar{}, and \rotatedmac{} are public
    inputs, \secvar{} is a secret output. We assume that the following
    precondition is initially valid: $ 0 \le \secvar \le \mdsize < 64
    ~~\land~~ 64 \mid \rotatedmac$.}%
  \label{fig:mac-rotation}
\end{figure}

\Cref{fig:mac-rotation:origin} shows the code of the MAC rotation
function, which rotates a segment of memory, in which a MAC is stored.
The function takes public inputs \mdsize{}, \outvar{}, and
\rotatedmac{}, and a secret input \secvar{}, assuming the precondition
\begin{equation*}
  0 \le \secvar \le \mdsize < 64 ~~\land~~ 64 \mid \rotatedmac\text,
\end{equation*}
that is, that the secret is at most \mdsize{}, that \mdsize{} is smaller
than 64, and that \rotatedmac{} is a multiple of 64.
The function copies the contents of~\rotatedmac{} to~\outvar{}, rotated
by \secvar{}.

The MAC rotation function is CT under the cache-line leakage model with
respect to the relation
\emath{\phi \eqdef \eqset{\mdsize, \rotatedmac, \outvar}}.
The reason is that \cref{line:mac-rotation:mod} ensures that
\emath{0 \le \rovar < 64} at every loop iteration; therefore, the load
at \cref{line:mac-rotation:load} always accesses the same cache line.
Specifically, let us take \(k\) such that \emath{\rotatedmac = 64k} and
observe that the load at \cref{line:mac-rotation:load} leaks
\begin{equation*}
  \floor{\frac{\rotatedmac + \rovar}{64}}
  = \floor{\frac{64k + \rovar}{64}}
  = k + \floor{\frac{\rovar}{64}}
  = k\text,
\end{equation*}
which is a constant.
The other leaking instructions only depend on public data.

To lift our constant-time result to speculative constant-time, we
present the SPS transformation of the MAC rotation function in
\cref{fig:mac-rotation:sps}.
Since \cref{line:mac-rotation-sps:mod} is unmodified by SPS, we can use
the same argument as for CT and conclude that the original program is
\phiSCT{} under the cache-line leakage model.
We use SPS-\easycrypt{} to verify this result in
\path{mee_cbc_cache_line.ec} in the artifact.

% ----------------------------------------------------------------------
\subsection{Assertion Safety Verification with SPS-\binsec{}}
\label{sec:evaluation:binsec}

\binsecrel{}~\cite{DBLP:conf/sp/DanielBR20} is a binary symbolic
execution engine that can efficiently verify CT\@.
As a symbolic execution engine, \binsecrel{} is a sound and
bounded-complete methodology for finding CT violations and proving their
absence (up to a given exploration depth).
It builds on the \binsec{}~\cite{DBLP:conf/wcre/DavidBTMFPM16} toolkit,
and extends \emph{Relation Symbolic Execution}
(RelSE)~\cite{DBLP:conf/ppdp/FarinaCG19} to achieve efficient CT
verification.
Intuitively, RelSE performs symbolic execution on a product program;
thus, we use \binsecrel{} to realize our second verification approach,
as outlined in \cref{sec:application:assertion-safety}.

To ensure that SPS-\binsec{} effectively lifts a CT verification tool to
SCT, we compare SPS-\binsec{} with
\binsechaunted{}~\cite{DBLP:conf/ndss/DanielBR21}. \binsechaunted{} extends
\binsecrel{} by lifting RelSE to the speculative domain (Haunted RelSE).
In a nutshell, Haunted RelSE performs aggressive pruning and sharing of
symbolic execution paths to overcome the path explosion introduced by
speculation.
We compare SPS-\binsec{} with \binsechaunted{} to evaluate the
effectiveness of our approach against a methodology tailored to SCT
verification.

\paragraph{Setup}
We manually transform each variant of each evaluation case with the two
transformations discussed in \cref{sec:application:assertion-safety},
i.e., SPS and assertion elimination.
The transformed programs are in C (see \path{kocher.c} in the artifact)
and compiled to x86 32-bit executable with the same set of compilation flags as in the Haunted RelSE
evaluation, and we encode lists as arrays together with an index to the
first element, and perform bounds checks on every list access.
We use the official Docker distributions of \binsecrel{} and
\binsechaunted{}, at versions \texttt{c417273} and
\texttt{Running RelSE}, respectively, on an Intel Core i7-10710U with
Ubuntu 20.04.
We set the exploration depth and to be 200 for both SPS-\binsec{}
and \binsechaunted{}, and the speculation window as 200 for the latter.

\begin{table}
  \centering
  \caption{%
    SPS-\binsec{}, \binsechaunted{} and SPS-\ctgrind{} evaluation results on the
    Spectre-v1 benchmarks.%
  }%
  \label{tab:eval}
  \tabspsbinsec{}
\end{table}

\paragraph{Results}
\Cref{tab:eval} presents our evaluation results for SPS-\binsec{} and
\binsechaunted{}, for the three variants of each evaluation case.
We write \evalbug{} when the tool finds an SCT violation; this is the
case for all vulnerable programs.
We write \evalok{} when the tool verifies the program as SCT\@; this is
the case for all loop-free patched programs.
Finally, we write \evalabort{} when the tool proves the absence of
violations up to the maximal exploration depth; this is the case for all
patched programs with loops.
Three of these programs, as mentioned above, are index-masked yet
contain SCT violations.
Triggering these violations, however, requires an exploration depth much
larger than the one in this evaluation.
Given these results, we can say that SPS-\binsec{} and \binsechaunted{}
agree on these evaluation cases, and therefore SPS-\binsec{} effectively
lifts \binsecrel{} to SCT on these evaluation cases.
\looseness=-1

In addition to achieving the same analysis results as \binsechaunted{},
SPS-\binsec{} reduces the verification time cost by approximately 2.6x
(43s for SPS-\binsec{} vs 111s for \binsechaunted{} int total).
This shows that the increase in code size due to SPS transformation does not handicap the verification effort.
The reason for the time difference is that SPS-\binsec{} enjoys new features of
\binsecrel{}, added since the release of \binsechaunted{}, e.g., support for
more assembly languages or improved efficiency (as shown in~\cite{Geimer2023}).

This is one of the benefits of our approach: it seamlessly inherits the
most up-to-date version of the underlying CT analysis tool with no
maintenance burden in this regard.

\subsection{Dynamic Taint Analysis with SPS-\ctgrind{}}%
\label{sec:evaluation:ctgrind}

CT-grind~\cite{ctgrind} is one of the most widely used CT analysis tools~\cite{DBLP:conf/sp/JancarFBSSBFA22}.
It performs dynamic taint analysis to detect CT violations by leveraging Valgrind, 
a popular framework for memory safety analysis.
Specifically, CT-grind marks a memory region as secret by setting the corresponding memory region as 
uninitialized and then tracks the usage of secret data during program execution.
It reports conditional branches and memory accesses that depend on this uninitialized memory region 
as potential CT violations. 
Unlike noninterference verification and assertion safety verification, 
CT-grind does not guarantee the absence of CT violations if no violations are reported.
Therefore, it can detect CT violations but not prove their absence.

\paragraph{Setup}
We reuse the SPS-transformed programs from the previous evaluation of
SPS-\binsec{}.
We mark the secret data as uninitialized memory region and run the programs with CT-grind. 
Since CT-grind is a dynamic analysis tool, we need to provide actual inputs to the programs.
Here, we randomly generate directives with a size of 231072 and input data within the range of 
zero to 200,000 for each program. We create 256 threads to run the programs in parallel 
and set a timeout of 10 minutes for each program.

\paragraph{Results}
The results are presented in the rightmost column of \cref{tab:eval}.
The legend is in the same principle for SPS-\binsec{} and \binsechaunted{}:
we write \evalbug{} when the tool finds an SCT violation, and
\evalabort{} when it finds no violations within the time limit.
SPS-\ctgrind{} reports CT violations for all vulnerable programs 
and reports no violations for patched programs.
Similar to SPS-\binsec{} and \binsechaunted{}, 
SPS-\ctgrind{} also reports no violations for the three index-masked programs 
that still contain CT violations. 
This is due to the fact that triggering these violations requires specific directives 
which takes significantly longer time to be generated randomly.
Nevertheless, we confirm that when providing the specific directives,
e.g., a list full of $\top$,
SPS-\ctgrind{} can successfully detect the CT violations in these three vulnerable 
index-masked programs.

%%% Local Variables:
%%% mode: LaTeX
%%% TeX-output-dir: "latex_output"
%%% TeX-master: "dedsct"
%%% End:

\section{Related Work}%
\label{sec:related}
Recent surveys~\cite{DBLP:conf/sp/BarbosaBBBCLP21,CT_Survey,%
  DBLP:conf/sp/JancarFBSSBFA22,Geimer2023,SCT_Survey}, report over 50
tools for CT analysis and over 25 tools for SCT analysis.
We structure this section around three axes: microarchitectural
semantics, security notions, and verification techniques.

% ----------------------------------------------------------------------

\subsection{Microarchitectural Semantics}
Traditionally, program semantics describe program execution at an
architectural level; concretely, these semantics define the semantics
of programs as a transition relation between architectural states, and
do not reason about microarchitectural effects of program executions.
Moreover, they typically describe a classical model of execution that
does not account for speculative or out-of-order (OOO) execution.
However, the situation has changed dramatically over the last
decade. Researchers have developed semantics that incorporate key
components of microarchitectural states, such as the cache, in the
classical model of execution~\cite{DBLP:conf/uss/DoychevFKMR13}.

\paragraph*{Speculation Window}
Several existing models of speculative execution consider a
\emph{speculation window}, during which the processor can continue
computing after making a prediction and before verifying the
correction of the prediction~\cite{spectector,pitchfork}. The notion
of speculation window reflects what happens in practice, since the
size of the reorder buffer of the CPU limits the number of
instructions that can be executed speculatively. However, it also
dictates fixing how execution backtracks once misspeculation is
detected. A common approach is to consider an \emph{always-mispredict
semantics}~\cite{spectector}. This semantics assumes that every branch
is mispredicted, and then rolled back. This semantics is intended to
maximize artificially the leakage of an execution, rather than to
provide a faithful model of execution. An advantage of this semantics
is that it captures all potential leakage in a single execution, and
dispenses from using directives. Our notion of security is stronger
than those with a speculation window: ours implies security for all
possible values of the speculation window, or even an adversarially
controlled speculation window~\cite{highassuranceSpectre}. Moreover,
allowing the security of an implementation to depend on the size of
the speculation window seems hazardous. Hence, we believe that our
notion of security is a better target for verification.

\paragraph*{Out-of-Order Execution}
Modern processors execute programs out-of-order. Although there is a
large body of work on out-of-order execution in weak memory models,
only a handful of works consider out-of-order execution in a security
context. One example is~\cite{DBLP:conf/uss/DanielBNBRP23}.  It would
be interesting to consider how to extend SPS to accommodate OOO
execution.

\paragraph*{Other Forms of Speculation}
Most models of speculative execution aim at capturing Spectre-v1 and
Spectre-v4 attacks. Other variants include Spectre-v2~\cite{spectre},
which exploit misprediction of indirect branches, and
Spectre-RSB~\cite{DBLP:conf/ccs/MaisuradzeR18,DBLP:conf/woot/KoruyehKSA18},
which exploits mispredictions of return addresses.
There are only few attempts to model and reason about these attacks
formally. Indeed, protecting against these attacks programmatically is
hard; in fact, current approaches against Spectre-v2 rely on
mitigation insertion rather than verification. These mitigations for
Spectre-v2 are based on restructuring the code to ensure that
predictions fall inside a known set of safe
targets~\cite{DBLP:conf/uss/NarayanDMCJGVSS21,DBLP:conf/ccs/ShenZOC18}.
Similarly, mitigations for Spectre-RSB are also based on
compiler
transformations~\cite{DBLP:conf/sp/MosierNMT24,DBLP:conf/asiaccs/Hetterich00R24,DBLP:conf/asplos/OlmosBCGL0SY025}.

% ----------------------------------------------------------------------
\subsection{Security Notions and Leakage Models}
There are several alternatives to the definition of SCT\@. One notion is relative
speculative constant-time (RSCT), which requires that speculative execution of a
program does not leak more than sequential execution~\cite{Cheang2019}. Yet
another notion is leakage simulation, which involves two programs, and requires
that the execution of one (source) program does not leak more than the execution of
another (shadow) program~\cite{SCTPreservation}.
Given a source program executing speculatively in some leakage model,
there are two specific instances of leakage simulation of interest: when
the shadow program executes sequentially or speculatively---in the same
leakage model.
The first instance subsumes RCST, whereas the second instance is useful
to reason about SCT preservation.

Relative speculative constant-time can be stated as ``if the program $c$ is CT, then $c$ is SCT,'' which is a
4-property relating the sequential and speculative executions of program $c$ on
two different inputs \(i_1\)~and~\(i_2\). Existing
works~\cite{Cheang2019,spectector,Dong2023,Dongol2024,Guanciale2020} analyze
RSCT with symbolic execution, model checking, and theorem proving, upon
speculative semantics.
% SPS provides formal grounds for the analysis of speculative semantics in these approaches.
With the help of SPS, one can reduce RSCT to a 4-property that only considers
sequential execution. Furthermore, \citet{DBLP:conf/sp/ShivakumarBBCCGOSSY23}
shows that RSCT can be seen as a 2-property, since every speculative execution
starts with a sequential execution, i.e., the sequential executions can be
considered as prefixes of speculative executions. Therefore, RSCT can also be
defined as a 2-property relating only two sequential executions via SPS\@. It
would be interesting to investigate the equivalence between these RSCT forms.
Similar to RSCT, leakage simulation can be expressed as a 4-property, and SPS
can be used to reduce leakage simulation to a 4-property about sequential
executions. In both cases, the resulting 4-properties on sequential execution
could then be verified using a multi-program variant of Cartesian Hoare
Logic~\cite{DBLP:conf/pldi/SousaD16} or of Hyper Hoare Logic
~\cite{DBLP:journals/pacmpl/DardinierM24}, or a general form of product program.

% ----------------------------------------------------------------------

\subsection{Security by Transformation}
As already mentioned in the introduction, the idea of using program
transformation to reduce SCT to CT has been used implicitly in the
literature to inform the design of many SCT tools.
\citet{DBLP:journals/pacmpl/BrotzmanZKT21} uses it explicitly, showing
that a program is speculative constant-time (for a flavor of SCT)
whenever its transformation is CT---however, it does not show the
converse. The transformation is more complex than ours, as the
speculative semantics is based on a speculation window.

\subsection{Verification Techniques}
In this section, we consider whether SPS can combine with different
classes of CT verification techniques and how it compares with other
approaches for ensuring SCT\@.
We remark that the answer to the first question does not follow
immediately from \cref{thm:ct-to-sct}, since this theorem proves a
reduction between semantic notions rather than applicability of
verification techniques.

\paragraph*{Deductive Verification, Symbolic Execution, and Model Checking}
  \Cref{sec:evaluation} builds on~\cite{FinegrainedCT}, which uses
  \easycrypt{} as a deductive verification method for constant-time.
  Similarly, \citet{DBLP:conf/ccs/ZinzindohoueBPB17} use F\(^\star\)
  to verify constant-time (and functional correctness) of
  cryptographic implementations. Another approach based on deductive
  verification is \citet{Coughlin2024}, which proposes a weakest
  precondition calculus for OOO speculative security and formally
  verifies it in Isabelle/HOL\@.  Their strategy is to track a pair of
  postconditions (a sequential one and a speculative one) to handle
  speculation and to use rely-guarantee reasoning to handle OOO
  execution~\cite{DBLP:conf/csfw/MantelSS11}.  Intuitively, this
  approach can be seen as applying SPS at the logical level---instead
  of transforming the program---and enjoys the versatility of Hoare
  logic.  A key difference with SPS is that it handles leakage
  explicitly, by tracking labels (public or secret) for each value,
  and aims to show an information flow property, i.e., that no secrets
  flow into leaking instructions.  This disallows semantic reasoning
  about leakage, permitted by SPS\@; in particular, it does not
  allows us to reason about mitigations as SelSLH\@.

Lightweight formal techniques like symbolic
execution~\cite{pitchfork,guanhua2020kleespectre,DBLP:conf/ndss/DanielBR21,spectector,SpecuSym}
and bounded model checking~\cite{kaibyo} have also been used for the
verification of CT and SCT\@. These tools target bounded executions,
in which every function call is inlined and every loop unrolled up to
a certain exploration depth.  On the other hand, their distinctive
advantage is that they are fully automatic and can readily find
counterexamples.  As discussed in \cref{sec:evaluation}, SPS combines
successfully with these techniques and lifts them to SCT\@. Moreover,
it solves the limitation of the speculation window.

\paragraph*{Type Systems}
Type systems are commonly used to verify constant-time and speculative
constant-time policies~\cite{blade,ShivakumarTyping2023,%
  DBLP:journals/jar/BartheBC0P20}.  Both families of type systems
consider typing judgments of the form \emath{\vdash c: \Gamma \to
  \Gamma'}, where \(\Gamma\)~and~\(\Gamma'\) are security environments
that map variables to security types.  The key difference between type
systems for constant-time and type systems for speculative
constant-time is that, in the former, the environment~\(\Gamma\) maps
every variable to a security level, while in the latter,
\(\Gamma\)~maps every variable to a pair of security levels: one for
sequential execution, and one for speculative execution.  Type systems
for speculative constant-time that support mitigations such as SLH can
be understood as a value-dependent type system, where the dependence
is relative to the misspeculation flag. Unfortunately, there is not
much work on value-dependent type systems for constant-time, hence it
is not possible to use SPS in combination for type systems for
constant-time. It would be of theoretical interest to study if one
could use SPS in combination with the instrumentation of leakage to
reduce speculative constant-time to noninterference, and use a
value-dependent information flow type system in the style
of~\cite{DBLP:conf/popl/LourencoC15} on the resulting program.

\paragraph*{Taint Analysis and Fuzzing}
Speculative constant-time verification can also be tackled with
traditional program analysis approaches like taint
analysis~\cite{Qi2021SpecTaintST,guanhua2020oo7} and
fuzzing~\cite{specfuzz,Oleksenko2022,Oleksenko2023,Jana2023}.
\cref{sec:evaluation} shows that one can effectively combine SPS with
such tools. 

% ----------------------------------------------------------------------
\subsection{Mitigations}

Our approach allows us to verify that a program is not vulnerable to
Spectre attack after being protected by specific mitigations like
SLH\@.  In this section, we will discuss alternative approaches based
on mitigations and hardware solutions.

\paragraph*{Software-Based Mitigations}
A very common approach to protect against Spectre attacks is to
introduce mitigations in vulnerable programs.  For example, LLVM
implements a pass that introduces SLH
protections~\cite{carruth2018slh}, and
\citet{DBLP:conf/uss/ZhangBCSY23} improves on it, proposing Ultimate
SLH, to guarantee full Spectre protections. Flexible
SLH~\cite{DBLP:conf/csfw/BaumannBDHH25} further refines the approach
to reduce the number of inserted protections and formally verifies its
soundness in the Rocq prover.  These compiler transformations aim to
strengthen constant-time programs to achieve speculative
constant-time.  As such, they constitute an alternative approach to
ours, convenient when developers can afford such transformations.
Other approaches, see e.g.,~\cite{blade,DBLP:conf/sp/BulckM0LMGYSGP20,lightslh},
automatically insert fences.

\paragraph*{Hardware and Operating System Solutions}
Some proposals advocate for relying on hardware and operating system
extensions to mitigate Spectre---or lighten the software burden of
mitigating it.  For example,
ProSpeCT~\cite{DBLP:conf/uss/DanielBNBRP23} proposes a processor
design and RISC-V prototype that guarantee that CT programs are
speculatively secure against all known Spectre attacks, even under OOO
execution.  Serberus~\cite{DBLP:conf/sp/MosierNMT24} proposes a
different approach: it relies on Intel's Control-Flow Enforcement
Technology and speculation control mechanisms, on the OS to fill the
return stack buffer on context switches, and on four compiler
transformations.  These components allow Serberus to ensure that
\emph{static constant-time} programs (a strengthening of CT) are
protected against all known Spectre attacks.  These two approaches
attain much stronger guarantees than ours, and make much stronger
assumptions.  In this work, we focus on verification of software-only
mitigations.

%%% Local Variables:
%%% mode: LaTeX
%%% TeX-output-dir: "latex_output"
%%% TeX-master: "dedsct"
%%% End:

\section{Conclusion and Future Work}%
\label{sec:conclusion}
Speculation-Passing Style is a program transformation that can be used
for speculative constant-time analysis to constant-time analysis. It
applies to Spectre-v1 and -v4 vulnerabilities, both in baseline and
fine-grained leakage models. SPS allows, without loss of precision,
to verify SCT using preexisting approaches based on deductive
verification, symbolic execution and tainting. We demonstrate the
viability of this approach using the \easycrypt{} proof assistant, the
\binsecrel{} relational symbolic execution engine, the \ctgrind{}
dynamic taint analyzer.

An important benefit of SPS is that it empowers users to verify their code in
situations for which no verification tool is readily available for a concrete
task at hand.  Potential scenarios of interest include hardware-software
contracts~\cite{DBLP:conf/sp/GuarnieriKRV21}, and in particular contracts for
future
microarchitectures~\cite{DBLP:conf/isca/VicarteSNT0KF21,DBLP:conf/ccs/BartheBCCGGRSWY24}.
% ct-grind
% llvm
We could also implement SPS at assembly level and combine it with \ctgrind{},
which may allow analyzing real-world libraries protected with LLVM SLH\@. An
alternative approach is to implement SPS at the LLVM IR and combine it with
CT-LLVM~\cite{DBLP:journals/iacr/ZhangB25}. However, as the SLH pass is in the
backend of the LLVM compiler, this approach requires moving the SLH protection pass
from the backend to the IR level, and taking care of the unintended leakages introduced
by the backend. We leave these directions for future work.

\begin{acks}
We thank the reviewers for their time and insightful feedback.
We are grateful to Benjamin Grégoire for many useful discussions on this work.
This research was supported
by the%
\grantsponsor{
  DFG%
}{
  \textit{Deutsche Forschungsgemeinschaft} (DFG, German research Foundation)%
}{}
as part of the Excellence Strategy of the German Federal and State Governments
-- \grantnum{DFG}{EXC 2092 CASA - 390781972}.
\end{acks}

% \ifanon{\input{data-availability}}{}
\section*{Data-Availability Statement}

We provide an artifact in
\ifanon{%
  \url{https://doi.org/10.5281/zenodo.17304903}%
}{%
  \url{https://doi.org/10.5281/zenodo.17339112}%
}.
The artifact includes
\begin{itemize}
  \item Mechanized SCT proofs in \easycrypt{} proof assistant (\cref{sec:overview} and \cref{sec:evaluation});
  \item The SPS-transformed initialization example in C and the guide of how to detetect leakages by \binsec{} and \ctgrind{} (\cref{sec:overview}); and
  \item Kocher's Spectre-v1 benchmark programs in C and their SPS transformations, and the scripts for reproducing the experiment in \cref{tab:eval} (\cref{sec:evaluation}).
\end{itemize}
% The artifact has the limitation that the evaluation of SPS-\ctgrind{} may be
% slow, in the sense that the evaluation may take hours for each case if the
% machine does not have enough CPU cores.
% It will be submitted for Artifact Evaluation.

% In it, indicate whether an artifact exists, its nature and limitations, and whether it will be submitted for Artifact Evaluation. This section should ideally also include links to preliminary versions of (anonymized) artifacts, datasets, and so on that reviewers may find useful (but are not obliged to follow). The statement is not meant to be a detailed description of how to use the artifact; that should accompany the artifact itself.

\bibliographystyle{ACM-Reference-Format}
\bibliography{bib}

  \ifAppendix%
    \appendix
    \newpage
    \crefalias{section}{appendix}
    \section{Sequential Semantics of the Target Language}%
\label{app:target-semantics}

\begin{figure}
  \figtgtsem{}
  \caption{Sequential semantics of the target language.}%
  \label{\labelfigtgtsem}
\end{figure}

\Cref{\labelfigtgtsem} presents the semantics of the target language.
It is the standard while-language semantics with assertions and observations.
The \reftgtsemassertT{} rule ensures that the expression evaluates to true,
while the \reftgtsemassertF{} rule ensures that if the expression evaluates to
false the program terminates with an error immediately.

\section{Soundness and Completeness of the SPS Transformation}%
\label{app:proof-trans}

To prove the correctness of the main translation, we first prepare a measure function and two lemmas.

We define $m(c)$ as how many steps the translated target program takes for the first speculative source step of a source code $c$, aligning with \cref{fig:trans}, as follows.
\begin{equation*}
  m(c) \eqdef \begin{cases}
    0 & \text{if } c \text{ is } \cnil \text,\\
    1 & \text{if } c \text{ is } \iassign{x}{e} \text,\\
    2 & \text{if } c \text{ is } \iload{x}{e} \text{ or } \istore{x}{e} \text,\\
    4 & \text{if } c \text{ is } \iif{e}{c_\top}{c_\bot} \text{ or } \iwhile{e}{c_w} \text,\\
    2 & \text{if } c \text{ is } \iinitmsf, \\
    1 & \text{if } c \text{ is } \iupdatemsf{e} \text{ or } \iprotect{x}{e} \text,\\
    m(c_1) & \text{if } c \text{ is } c_1 \capp c_2 \text.
  \end{cases}
\end{equation*}

The first lemma is a foundational building block, called step-wise simulation, saying that there is a one-to-one correspondence between a speculative source execution step and an $m(c)$-step target execution trace.

\begin{lemma}[Step-wise simulation]
\label{lemma:step-sim}
Considering any non-final source state $\st{c}{\vm}{\mem}{\ms}$ and any directive list $\seq{d}$,
the following two semantic relations are equivalent:
\begin{align*}
  \sem{\st{c}{\vm}{\mem}{\ms}}{\seq{d}}{\seq{o}}{\st{c'}{\vm'}{\mem'}{\ms'}}
\end{align*}
iff
\begin{align*}
\nsem%
  {\st{\Tprogstar{c}}{
    \msetaux{\vm}{
      \dirvar \mapsto \encode{\seq{d} \lapp \seq{d_0}},
      \msvar \mapsto \ms
    }}{\mem}{}}
  {}{T(\seq{o}, \seq{d})}
  {\st{\Tprogstar{c'}}{
    \msetaux{\vm'}{
      \dirvar \mapsto \encode{\seq{d_0}},
      \msvar \mapsto \ms'
    }}{\mem'}{}}
  {m(c)}.
\end{align*}
\end{lemma}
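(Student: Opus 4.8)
The plan is to proceed by case analysis on the structure of the non-final source command~$c$, following exactly the clauses of the measure function~$m(c)$ and the translation~$\Tprogstarname{}$ in \cref{fig:trans}. For each case I would unfold one step of the speculative source semantics (\cref{fig:source-semantics}) on the left and $m(c)$ steps of the sequential target semantics (\cref{fig:target-semantics}) on the right, and check that the two are inter-derivable with the stated relationship between variable maps, memories, misspeculation bits, and leakage. The base cases---$\iassign{x}{e}$, $\iload{x}{e}$, $\istore{e}{x}$, $\iinitmsf$, $\iupdatemsf{e}$, $\iprotect{x}{e}$---are essentially bookkeeping: for instance, for $\iassign{x}{e}$ one target step via \reftgtsemassign{} matches \refsemassign{} with $m(c)=1$, no directive consumed, empty leakage; for $\iload{x}{e}$ the first target step appends $\oaddr{e}$ to the (implicit) handling and the second performs the load, matching \refsemload{} with $m(c)=2$; for $\iinitmsf$ the \iassertname{} step succeeds exactly when $\ms=\bot$ (which is forced by \refseminit{} on the source side) and is followed by the $\msf$ assignment, giving $m(c)=2$; and the \iupdatemsf{} and \iprotect{} cases are single assignments matching their source rules directly.

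The two genuinely structural cases are the conditional/loop and sequencing. For $\iif{e}{c_\top}{c_\bot}$ (and symmetrically $\iwhile{e}{c_w}$, where the unfolding also rebuilds the loop body using the $c_\top = c_w \capp \iwhile{e}{c_w}$ clause), the four target steps are: the \ileak{e} conditional producing $\obranch{b'}$ where $b' = \eval{e}{\vm}$; the target conditional on $\hd{\dirvar}$ producing $\obranch{(\hd{\dirvar})}$, i.e. $\obranch{b}$ where $\dforce{b} = \hd{\seq{d}}$; then the $\iassign{\dirvar}{\tl{\dirvar}}$ step that pops the consumed directive (so the residual list on the right becomes $\encode{\seq{d_0}}$ exactly when $\seq{d} = [\dforce{b}]$); then the $\iassign{\msvar}{\msvar \lor \neg e}$ (resp.\ $\msvar \lor e$) step, which updates $\msvar$ to $\ms \lor (b \neq b')$, matching the source rule's update $\ms \lor (b \neq b')$. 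This is exactly $m(c)=4$ steps, the leakage produced is $\obranch{b'} \lapp \obranch{b}$, which is precisely $T([\obranch{b'}], [\dforce{b}])$ by the definition of $\Tprogobsname{}$, and the residual command is $\Tprogstar{c_b}$ as required. For the sequencing case $c = c_1 \capp c_2$, I would invoke the induction hypothesis on $c_1$ (which is non-final since $c$ is non-final and $m(c) = m(c_1)$), using \reftgtsemseq{} to lift each target step through the $\capp c_2$ context and \refsemseq{} on the source side, and noting that $\Tprogstar{c_1 \capp c_2} = \Tprogstar{c_1} \capp \Tprogstar{c_2}$; the only subtlety is when $c_1$ steps to $\cnil$, which is handled by the translation sending $\cnil$ to $\cnil$ and the semantics' \refsemskip{}/\reftgtsemskip{} rules, though strictly this is a step on $c_1 \capp c_2$ with $m = 0$ which the non-final hypothesis rules out (a one-step source reduction of $\cnil \capp c_2$ consumes no leakage and the statement degenerates appropriately, so I would either fold it into the $\cnil$ treatment or note $m$ counts the first \emph{speculative} step).

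The main obstacle I anticipate is not any single case but getting the \textbf{directive-encoding accounting} exactly right across all cases simultaneously: the statement threads a fixed suffix $\seq{d_0}$ and asserts the right-hand residual directive list is precisely $\encode{\seq{d_0}}$, which forces $\seq{d}$ (the directives consumed by the single source step) to be $\lnil$ for all non-branching commands and a singleton $[\dforce{b}]$ for conditionals and loops---this has to be read off correctly from the source rules and matched against how many $\tl{\dirvar}$ operations the translation performs. A secondary care point is the $\iinitmsf$ case: on the source side \refseminit{} is only applicable when $\ms = \msseq$ (i.e.\ $\bot$), and on the target side the \iassertname{} step does not get stuck precisely under the same condition, so the ``iff'' direction from target to source needs the observation that a successful assert forces $\msvar = \bot$ hence $\ms = \bot$, making the source rule applicable---I would state this explicitly. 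Everything else is routine unfolding, and the backward (target-to-source) direction is obtained by the same case analysis run in reverse, using determinism of both semantics to pin down which source rule must have fired.
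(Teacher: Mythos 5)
Your proposal matches the paper's proof essentially step for step: both argue by structural induction on $c$, treating sequencing as the inductive case (lifting the induction hypothesis through \refsemseq{}/\reftgtsemseq{}), unfolding the four target steps for conditionals and loops against the single directive-consuming source step, and observing that the \iassertname{} in the $\iinitmsf$ case succeeds exactly when $\msvar = \bot$, which is what makes the target-to-source direction go through. One small caution: your reading of the load case (a first target step that ``appends $\oaddr{e}$'') does not match the translation as defined --- $\Tprogstarname{}$ leaves loads and stores unchanged and the observation is emitted by the single \reftgtsemload{} step itself (the explicit append belongs to the separate $\Tleakname{}$ instrumentation), so the step count there is really governed by the paper's own definition of $m$ rather than by any extra instruction.
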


\begin{proof}
We will prove two directions separately, both by case analysis.
% induction on the syntactic structure.

\noindent\subsection*{Source to Target}
We make an induction on the syntactic structure of $c$.

\begin{proofinductivecase}{$c$ is $c_1 \capp c_2$}
The induction hypothesis is that, if
\begin{align*}
  \sem{\st{c_1}{\vm}{\mem}{\ms}}{\seq{d}}{\seq{o}}{\st{c_1'}{\vm'}{\mem'}{\ms'}},
\end{align*}
then
\begin{align*}
  \nsem%
    {\st{\Tprogstar{c_1}}{
      \msetaux{\vm}{
        \dirvar \mapsto \encode{\seq{d} \lapp \seq{d_0}},
        \msvar \mapsto \ms
      }}{\mem}{}}
    {}{T(\seq{o}, \seq{d})}
    {\st{\Tprogstar{c_1'}}{
      \msetaux{\vm'}{
        \dirvar \mapsto \encode{\seq{d_0}},
        \msvar \mapsto \ms'
      }}{\mem'}{}}
    {m(c_1)}.
\end{align*}

In order to apply the induction hypothesis, we first analyze the source semantics.
The source execution must be justified by the source rule \refrule{fig:source-semantics}{seq},
which means that there exists $c_1'$ so that $c_1' ; c_2 = c'$ and
\begin{align*}
  \sem{\st{c_1}{\vm}{\mem}{\ms}}{\seq{d}}{\seq{o}}{\st{c_1'}{\vm'}{\mem'}{\ms'}}.
\end{align*}

Now, applying the induction hypothesis, we can obtain that 
\begin{align*}
  \nsem%
    {\st{\Tprogstar{c_1}}{
      \msetaux{\vm}{
        \dirvar \mapsto \encode{\seq{d} \lapp \seq{d_0}},
        \msvar \mapsto \ms
      }}{\mem}{}}
    {}{T(\seq{o}, \seq{d})}
    {\st{\Tprogstar{c_1'}}{
      \msetaux{\vm'}{
        \dirvar \mapsto \encode{\seq{d_0}},
        \msvar \mapsto \ms'
      }}{\mem'}{}}
    {m(c_1)},
\end{align*}
where $m(c) = m(c_1)$ by definition.

Then, with the help of the target rule \refrule{fig:target-semantics}{seq}, we can conclude that
\begin{align*}
  \nsem%
    {\st{\Tprogstar{c_1} \capp \Tprogstar{c_2}}{
      \msetaux{\vm}{
        \dirvar \mapsto \encode{\seq{d} \lapp \seq{d_0}},
        \msvar \mapsto \ms
      }}{\mem}{}}
    {}{T(\seq{o}, \seq{d})}
    {\st{\Tprogstar{c_1'} \capp \Tprogstar{c_2}}{
      \msetaux{\vm'}{
        \dirvar \mapsto \encode{\seq{d_0}},
        \msvar \mapsto \ms'
      }}{\mem'}{}}
    {m(c)},
\end{align*}
where $\Tprogstar{c_1'} \capp \Tprogstar{c_2} = \Tprogstar{c_1' \capp c_2} = \Tprogstar{c'}$.
\end{proofinductivecase}

\begin{proofbasecase}{$c$ is a single instruction or $\cskip$.}

There are nine possibilities for $c$ in total.
We will discuss four representative cases; others are similar.
For each case, we are going to show that, given the source execution reaching state $\st{\cskip}{\vm'}{\mem'}{\ms'}$ and leaking $\seq{o}$,
the target program takes exactly $m(c)$ steps to reach state $\st{\cskip}{\msetaux{\vm'}{\dirvar \mapsto \seq{d_0}, \msvar \mapsto \ms'}}{\mem'}{}$, where $c'$, $\vm'$, $\mem'$, and $\ms'$ are exactly the same as in the source execution, and producing transformed observations $T(\seq{d}, \seq{o})$.

\begin{proofcase}{$c$ is $\cskip$}
In this case, for the source execution, according to \refrule{fig:source-semantics}{refl}, the resulting state is exactly the same as the starting state and nothing is leaked.
Same for the target execution (see \refrule{fig:target-semantics}{refl}).
\end{proofcase}

\begin{proofcase}{$c$ is $\iassign{x}{e}$}
  The source execution must be an exact application of the source rule \refrule{fig:source-semantics}{assign}:
  \begin{align*}
    \st{\iassign{x}{e}}{\vm}{\mem}{\ms}
    \step{\dstep}{\onone}
    \st{\cskip}{\mset{\vm}{x}{\eval{e}{\vm}}}{\mem}{\ms},
  \end{align*}
  which indicates $\seq{d}$ must be $\epsilon$, and in the resulting state, $\vm' = \msetaux{\vm}{\dirvar \mapsto \eval{e}{\vm}}$, $\mem' = \mem$, $\ms' = \ms$, and the observations produced $\seq{o} = \epsilon$.

  Let us analyze how the target program runs by $m(c) = 1$ steps.
  The target execution is the application of the target rule \refrule{fig:target-semantics}{assign}:
  \begin{gather*}
    \st
      {\iassign{x}{e}}
      {
        \msetaux
          {\vm}
          {
            \dirvar \mapsto \encode{\seq{d} \lapp \seq{d_0}},
            \msvar \mapsto \ms
          }
      }
      {\mem}
      {}
    \step{}{\onone}
    \st
      {\cskip}
      {
        \msetaux
          {\vm}
          {
            \dirvar \mapsto \encode{\seq{d_0}},
            \msvar \mapsto \ms,
            x \mapsto \eval{e}{\vm}
          }
      }
      {\mu}
      {},
  \end{gather*}
  which means that $\vm' = \msetaux{\vm}{\dirvar \mapsto \eval{e}{\vm}}$, $\mem' = \mem$, $\ms' = \ms$ -- exactly aligned with the source resulting state.
  Also, the target observations can be obtained from the source observations through the transformer: $\lnil = T(\lnil, \lnil) = T(\seq{d}, \seq{o})$.
\end{proofcase}

The cases for $\iload{x}{e}, \istore{e}{x}, \iupdatemsf{e}$, and $\iprotect{x}{e}$ are similar to this case.

\begin{proofcase}{$c$ is $\iif{e}{c_\top}{c_\bot}$}
  The source execution must be an exact application of the source rule \refrule{fig:source-semantics}{cond}:
  \begin{align*}
      \st{\iif{e}{c_\top}{c_\bot}}{\vm}{\mem}{\ms}
      \step{[\dforce{b}]}{[\obranch{b'}]}
      \st{c_b}{\vm}{\mem}{\ms \lor (b \not= b')},
  \end{align*}
  where $b' = \eval{e}{\vm}$,
  which indicates $\seq{d}$ must be $[\dforce{b}]$, and in the resulting state, $\vm' = \vm$, $\mem' = \mem$, $\ms' = \ms \lor (b \not= b')$, and the observations produced $\seq{o} = [\obranch{b'}]$.

  Now, let us analyze how the target program runs by $m(c) = 4$ steps.
  The target program $\Tprogstar{c}$ is
  \begin{equation*}
    \begin{aligned}[t]
      & \mathtt{if}~(e)~\{~\} \\
      & \mathtt{if}~(\hd{\dirvar}) \{ \\
      & \quad \iassign{\dirvar}{\tl{\dirvar}}\capp \\
      & \quad \iassign{\msvar}{\msvar\vee \neg~e}\capp \\
      & \quad \Tprogstar{c_\top} \\
      & \} ~\mathtt{else}~ \{ \\
      & \quad \iassign{\dirvar}{\tl{\dirvar}}\capp \\
      & \quad \iassign{\msvar}{\msvar\vee e}\capp \\
      & \quad \Tprogstar{c_\bot} \\
      & \}
    \end{aligned}
  \end{equation*}
  
  The target program starts at $\st{\Tprogstar{c}}{\msetaux{\vm}{\dirvar \mapsto \dforce{b} \lapp \seq{d_0}, \msvar \mapsto \ms}}{\mem}{}$.
  First, take a step by applying the rule \refrule{fig:target-semantics}{cond}, which leaks $\obranch{\eval{e}{\vm}}$, the same as the source execution.
  Second, take again a step by applying the rule \refrule{fig:target-semantics}{cond}, which leaks $\obranch{d}$.
  Third, take a step by applying the rule \refrule{fig:target-semantics}{assign}, which updates $\dirvar$ with $\tl{\dirvar}$.
  Fourth, take a step by applying the rule \refrule{fig:target-semantics}{assign}, which updates $\msvar$ with $\ms \lor (b \not= b')$.
  Finally, we've reached a state $\st{\cskip}{\msetaux{\vm}{\dirvar \mapsto \seq{d_0}, \msvar \mapsto \ms \lor (b \not= b')}}{\mem}{}$,
    which means that $\vm' = \vm$, $\mem' = \mem$, and $\ms' = \ms \lor (b \not= b')$ -- exactly aligned with the source resulting state.
  Also, the observations during these step can be obtained from the source observations through the transformer:
    $[\obranch{b'}, \obranch{b}] = T([\obranch{b'}], [\dforce{b}]) = T(\seq{o}, \seq{d})$.

  The case for $\iwhile{e}{c'}$ is similar.
\end{proofcase}

\begin{proofcase}{$c$ is $\iinitmsf$}
  The source execution must be an exact application of the source rule \refrule{fig:source-semantics}{init-msf}:
  \begin{align*}
    \st{\iinitmsf}{\vm}{\mem}{\bot}
    \step{\dstep}{\onone}
    \st{\cskip}{\msetaux{\vm}{\msf \mapsto \bot}}{\mem}{\bot},
  \end{align*}
  which indicates $\seq{d}$ must be $\epsilon$, and in the resulting state, $\vm' = \vm$, $\mem' = \mem$, $\ms' = \bot$, and the observations produced $\seq{o} = \epsilon$.

  Let us analyze how the target program runs by $m(c) = 2$ step.
  The target execution starts at
  $$\st{\iassert{\neg \msvar} \capp \iassign{\msf}{\msfnomask}}{\msetaux{\vm}{\dirvar \mapsto \seq{d_0}, \msvar \mapsto \bot}}{\mem}{}.$$
  First, take a step by the rule \reftgtsemassertT{}, which passes the assertion as $\vm(\msvar) = \bot$,
  Second, take a step by the rule \refrule{fig:target-semantics}{assign}, which updates $\msf$ by $\msfnomask$.
  Finally, we've reached a state $$\st{\cskip}{\msetaux{\vm}{\dirvar \mapsto \seq{d_0}, \msvar \mapsto \bot, \msf \mapsto \msfnomask}}{\mem}{},$$
    which means that $\vm' = \vm$, $\mem' = \mem$, and $\ms' = \bot$ -- exactly aligned with the source resulting state.
  Also, the target observations can be obtained from the source observations through the transformer: $\lnil = T(\lnil, \lnil) = T(\seq{d}, \seq{o})$.
\end{proofcase}
\end{proofbasecase}

\noindent\subsection*{Target to Source}
We make an induction on the syntactic structure of $c$, as the other direction.

\begin{proofinductivecase}{$c$ is $c_1 \capp c_2$}
The induction hypothesis is that, for any observations $\seq{o}$ and directives $\seq{d}$, if
\begin{align*}
  \nsem%
    {\st{\Tprogstar{c_1}}{
      \msetaux{\vm}{
        \dirvar \mapsto \encode{\seq{d} \lapp \seq{d_0}},
        \msvar \mapsto \ms
      }}{\mem}{}}
    {}{T(\seq{o}, \seq{d})}
    {\st{\Tprogstar{c_1'}}{
      \msetaux{\vm'}{
        \dirvar \mapsto \encode{\seq{d_0}},
        \msvar \mapsto \ms'
      }}{\mem'}{}}
    {m(c_1)},
\end{align*}
then
\begin{align*}
  \sem{\st{c_1}{\vm}{\mem}{\ms}}{\seq{d}}{\seq{o}}{\st{c_1'}{\vm'}{\mem'}{\ms'}}.
\end{align*}

In order to apply the induction hypothesis, we first analyze the target semantics.
The target execution must be derived by the source rule \refrule{fig:source-semantics}{seq},
which means that there exists $c_1'$ so that $c_1' ; c_2 = c'$ and $\Tprogstar{c_1' ; c_2} = \Tprogstar{c_1'} ; \Tprogstar{c_2}$ and
\begin{align*}
  \nsem%
    {\st{\Tprogstar{c_1}}{
      \msetaux{\vm}{
        \dirvar \mapsto \encode{\seq{d} \lapp \seq{d_0}},
        \msvar \mapsto \ms
      }}{\mem}{}}
    {}{T(\seq{o}, \seq{d})}
    {\st{\Tprogstar{c_1'}}{
      \msetaux{\vm'}{
        \dirvar \mapsto \encode{\seq{d_0}},
        \msvar \mapsto \ms'
      }}{\mem'}{}}
    {m(c_1)},
\end{align*}
where $m(c) = m(c_1)$ by definition.

Now, applying the induction hypothesis, we can obtain that 
\begin{align*}
  \sem{\st{c_1}{\vm}{\mem}{\ms}}{\seq{d}}{\seq{o}}{\st{c_1'}{\vm'}{\mem'}{\ms'}}.
\end{align*}

Then, with the help of the source rule \refrule{fig:target-semantics}{seq}, we can conclude that
\begin{align*}
  \sem{\st{c_1 \capp c_2}{\vm}{\mem}{\ms}}{\seq{d}}{\seq{o}}{\st{c_1' \capp c_2}{\vm'}{\mem'}{\ms'}}.
\end{align*}
\end{proofinductivecase}

\begin{proofbasecase}{$c$ is a single instruction or $\cskip$.}

There are nine possibilities for $c$ in total.
We will discuss four representative cases; others are similar.
For each case, we are going to show that, for any directives $\seq{d}$ and observations $\seq{o}$,
if there is a target execution
\begin{align*}
  \nsem%
    {\st{\Tprogstar{c}}{
      \msetaux{\vm}{
        \dirvar \mapsto \encode{\seq{d} \lapp \seq{d_0}},
        \msvar \mapsto \ms
      }}{\mem}{}}
    {}{T(\seq{o}, \seq{d})}
    {\st{\Tprogstar{c'}}{
      \msetaux{\vm'}{
        \dirvar \mapsto \encode{\seq{d_0}},
        \msvar \mapsto \ms'
      }}{\mem'}{}}
    {m(c)},
\end{align*}
then, there is a corresponding one-step source execution
\begin{align*}
  \sem{\st{c}{\vm}{\mem}{\ms}}{\seq{d}}{\seq{o}}{\st{c'}{\vm'}{\mem'}{\ms'}},
\end{align*}
where $c'$, $\vm'$, $\mem'$, and $\ms'$ are exactly the same in source and target.

\begin{proofcase}{$c$ is $\cskip$}
In this case, for the target execution, according to the target rule \reftgtsemrefl{},
the resulting state is exactly the same as the starting state and nothing is leaked.
Same for the source execution (see the source rule \refsemrefl{}).
\end{proofcase}

\begin{proofcase}{$c$ is $\iassign{x}{e}$}
  Let us execute the target program by $m(c) = 1$ step.
  The target execution is an application of the target rule \reftgtsemassign{}:
  \begin{gather*}
    \st
      {\iassign{x}{e}}
      {
        \msetaux
          {\vm}
          {
            \dirvar \mapsto \encode{\seq{d} \lapp \seq{d_0}},
            \msvar \mapsto \ms
          }
      }
      {\mem}
      {}
    \step{}{\onone}
    \st
      {\cskip}
      {
        \msetaux
          {\vm}
          {
            \dirvar \mapsto \encode{\seq{d_0}},
            \msvar \mapsto \ms,
            x \mapsto \eval{e}{\vm}
          }
      }
      {\mu}
      {},
  \end{gather*}
  which means that $\vm' = \msetaux{\vm}{\dirvar \mapsto \eval{e}{\vm}}$, $\mem' = \mem$, $\ms' = \ms$.
  
  Now, turn to the source execution.
  The source execution is an application of the source rule \refsemassign{}:
  \begin{align*}
    \st{\iassign{x}{e}}{\vm}{\mem}{\ms}
    \step{\dstep}{\onone}
    \st{\cskip}{\mset{\vm}{x}{\eval{e}{\vm}}}{\mem}{\ms},
  \end{align*}
  which indicates $\seq{d}$ must be $\epsilon$, and in the resulting state,
  $\vm' = \msetaux{\vm}{\dirvar \mapsto \eval{e}{\vm}}$, $\mem' = \mem$, $\ms' = \ms$ -- exactly aligned with the target resulting state.
  The observations produced $\seq{o} = \epsilon$, which means that the relation between source and target observations satisfies the transformer:
  $\lnil = T(\lnil, \lnil) = T(\seq{d}, \seq{o})$.
\end{proofcase}

The cases for $\iload{x}{e}, \istore{e}{x}, \iupdatemsf{e}$, and $\iprotect{x}{e}$ are similar to this case.

\begin{proofcase}{$c$ is $\iif{e}{c_\top}{c_\bot}$}
  The target program $\Tprogstar{c}$ is
  \begin{equation*}
    \begin{aligned}[t]
      & \mathtt{if}~(e)~\{~\} \\
      & \mathtt{if}~(\hd{\dirvar}) \{ \\
      & \quad \iassign{\dirvar}{\tl{\dirvar}}\capp \\
      & \quad \iassign{\msvar}{\msvar\vee \neg~e}\capp \\
      & \quad \Tprogstar{c_\top} \\
      & \} ~\mathtt{else}~ \{ \\
      & \quad \iassign{\dirvar}{\tl{\dirvar}}\capp \\
      & \quad \iassign{\msvar}{\msvar\vee e}\capp \\
      & \quad \Tprogstar{c_\bot} \\
      & \}
    \end{aligned}
  \end{equation*}

  Let us execute the target program by $m(c) = 4$ steps.
  The execution starts at $$\st{\Tprogstar{c}}{\msetaux{\vm}{\dirvar \mapsto \dforce{b} \lapp \seq{d_0}, \msvar \mapsto \ms}}{\mem}{}.$$
  First, take a step by applying the rule \refrule{fig:target-semantics}{cond}, which leaks $\obranch{\eval{e}{\vm}}$, the same as the source execution.
  Second, take again a step by applying the rule \refrule{fig:target-semantics}{cond}, which leaks $\obranch{d}$.
  Third, take a step by applying the rule \refrule{fig:target-semantics}{assign}, which updates $\dirvar$ with $\tl{\dirvar}$.
  Fourth, take a step by applying the rule \refrule{fig:target-semantics}{assign}, which updates $\msvar$ with $\ms \lor (b \not= b')$.
  Finally, we've reached a state $\st{\cskip}{\msetaux{\vm}{\dirvar \mapsto \seq{d_0}, \msvar \mapsto \ms \lor (b \not= b')}}{\mem}{}$,
    which means that $\vm' = \vm$, $\mem' = \mem$, and $\ms' = \ms \lor (b \not= b')$.
  The observations produced during these steps are $[\obranch{b'}, \obranch{b}]$.

  Turn to the source program.
  The source execution is an application of the source rule \refrule{fig:source-semantics}{cond}:
  \begin{align*}
      \st{\iif{e}{c_\top}{c_\bot}}{\vm}{\mem}{\ms}
      \step{[\dforce{b}]}{[\obranch{b'}]}
      \st{c_b}{\vm}{\mem}{\ms \lor (b \not= b')},
  \end{align*}
  where $b' = \eval{e}{\vm}$,
  which indicates $\seq{d}$ must be $[\dforce{b}]$, and in the resulting state, $\vm' = \vm$, $\mem' = \mem$, $\ms' = \ms \lor (b \not= b')$
  -- exactly aligned with the target resulting state.
  The observations produced are $\seq{o} = [\obranch{b'}]$, which satisfies the relation of $T$ with the target observations:
    $[\obranch{b'}, \obranch{b}] = T([\obranch{b'}], [\dforce{b}]) = T(\seq{o}, \seq{d})$.

  The case for $\iwhile{e}{c'}$ is similar.
\end{proofcase}

\begin{proofcase}{$c$ is $\iinitmsf$}
  As the target program can execute $m(c) = 2$ steps, we first show that the value of $\msvar$ in the starting state must be $\bot$.
  This is because $\Tprogstar{c}$ is $\iassert{\neg \msvar} \capp \iassign{\msf}{\msfnomask}$,
  if $\msvar$ is not $\bot$, then the execution will reach the error state and halt within one step by the rule \reftgtsemassertF{}.

  Let us run the target program by $m(c) = 2$ steps now.
  The target execution starts at
  $$\st{\iassert{\neg \msvar} \capp \iassign{\msf}{\msfnomask}}{\msetaux{\vm}{\dirvar \mapsto \seq{d_0}, \msvar \mapsto \bot}}{\mem}{}.$$
  First, take a step by the rule \reftgtsemassertT{}, which passes the assertion as $\vm(\msvar) = \bot$,
  Second, take a step by the rule \refrule{fig:target-semantics}{assign}, which updates $\msf$ by $\msfnomask$.
  Finally, we've reached a state $$\st{\cskip}{\msetaux{\vm}{\dirvar \mapsto \seq{d_0}, \msvar \mapsto \bot, \msf \mapsto \msfnomask}}{\mem}{},$$
    which means that $\vm' = \vm$, $\mem' = \mem$, and $\ms' = \bot$.
  No observations are produced during these steps.

  Turn to the source program.
  The source execution must be an exact application of the source rule \refrule{fig:source-semantics}{init-msf}:
  \begin{align*}
    \st{\iinitmsf}{\vm}{\mem}{\bot}
    \step{\dstep}{\onone}
    \st{\cskip}{\msetaux{\vm}{\msf \mapsto \bot}}{\mem}{\bot},
  \end{align*}
  which indicates $\seq{d}$ must be $\epsilon$, and in the resulting state, $\vm' = \vm$, $\mem' = \mem$, $\ms' = \bot$
    -- exactly aligned with the target resulting state.
  The observations produced are $\seq{o} = \epsilon$, which satisfies the relation of $T$ together with the target observations:
    $\lnil = T(\lnil, \lnil) = T(\seq{d}, \seq{o})$.
\end{proofcase}
\end{proofbasecase}

\end{proof}

The second lemma is not about soundness, but only about completeness of SPS, i.e., a target trace has a corresponding source trace.
This lemma is called target decomposition, saying that a target trace can be split into two traces end-to-end,
where the first trace corresponds to a speculative source step and the second trace will be split inductively.

\begin{lemma}[Target decomposition]
\label{lemma:target-decomp}
If we have such a target trace,
\begin{align*}
\sem*%
  {\st{\Tprogstar{c}}{\vm_t}{\mem_t}{}}
  {}{\seq{o}}
  {s'} \land final(s'),
\end{align*}
then one of the three possibilities holds:
(1) $c$ is $\cskip$,
or (2) the first instruction of $c$ is $\iinitmsf$ and $\rho_t(\msvar) = \top$,
or (3) there exists an intermediate code $c_{m} \not= c$ s.t. the whole trace can be split into two traces, as follows,
$$
\nsem%
  {\st{\Tprogstar{c}}{\vm_t}{\mem_t}{}}
  {}{\seq{o}_1}
  {\st{\Tprogstar{c_{m}}}{\vm_{m, t}}{\mem_{m, t}}{}}
  {m(c)}
$$
and
$$
\sem*%
  {
    \st
      {\Tprogstar{c_{m}}}
      {\vm_{m, t}}
      {\mem_{m, t}}
      {}
  }
  {}{\seq{o}_2}
  {s'},$$
where $\seq{o} = \seq{o}_1 \lapp \seq{o}_2$.
\end{lemma}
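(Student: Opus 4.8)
The plan is to prove \cref{lemma:target-decomp} from the step-wise simulation lemma (\cref{lemma:step-sim}) together with determinism of the target semantics, after a three-way case split on the leftmost command of~\(c\). First I would dispatch the two exceptional configurations. If \(c\) is \(\cnil\), possibility~(1) holds immediately. If the leftmost command of~\(c\) is \(\iinitmsf\) and \(\vm_t(\msvar) = \top\), then \(\Tprogstar{c}\) begins with \(\iassert{\neg\msvar}\); since \(\eval{\neg\msvar}{\vm_t} = \bot\), the very first target step fires \reftgtsemassertF{} (threaded through the sequencing rule), so \(s' = \sterror\) with \(\seq{o} = \lnil\)---exactly possibility~(2). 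Every other configuration will fall under possibility~(3).

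For possibility~(3) the key observation is that the source configuration \(\st{c}{\vm_t}{\mem_t}{\vm_t(\msvar)}\) is then not \final{}: indeed \(c \neq \cnil\) and it is not a fence stuck under misspeculation. Hence, by inspection of the leftmost command of~\(c\), it takes one step
\begin{equation*}
  \sem{\st{c}{\vm_t}{\mem_t}{\vm_t(\msvar)}}{\seq{d}}{\seq{o}_s}{\st{c'}{\vm'}{\mem'}{\ms'}}\text,
\end{equation*}
where \(\seq{d}\) is empty unless the leftmost command is a conditional, in which case it is the one-element prefix of the directive list carried by \(\vm_t(\dirvar)\); moreover every rule of the source semantics rewrites the command, so \(c' \neq c\). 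Feeding this step to \cref{lemma:step-sim} yields the matching \(m(c)\)-step target computation
\begin{equation*}
  \nsem{\st{\Tprogstar{c}}{\vm_t}{\mem_t}{}}{}{T(\seq{o}_s, \seq{d})}{\st{\Tprogstar{c'}}{\vm_t'}{\mem_t'}{}}{m(c)}
\end{equation*}
for suitably updated \(\vm_t', \mem_t'\). Since the target semantics is deterministic and the hypothesised run \(\sem*{\st{\Tprogstar{c}}{\vm_t}{\mem_t}{}}{}{\seq{o}}{s'}\) has at least one step, its first \(m(c)\) steps must coincide with this computation: the run ends at a \final{} state, whereas \cref{lemma:step-sim} exhibits the \(m(c)-1\) intermediate target configurations of the block, none of which is \final{}, so a deterministic run cannot stop before reaching \(\st{\Tprogstar{c'}}{\vm_t'}{\mem_t'}{}\). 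Taking \(c_m \eqdef c'\), \(\seq{o}_1 \eqdef T(\seq{o}_s, \seq{d})\), and \(\seq{o}_2\) the suffix of~\(\seq{o}\) emitted after step~\(m(c)\) gives the required split, with \(\seq{o} = \seq{o}_1 \lapp \seq{o}_2\).

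I expect the main obstacle to be the two facts tacitly used above: that a source step is genuinely available outside cases~(1) and~(2), and that the given run is long enough to contain the whole \(m(c)\)-block without terminating or erroring early. The first needs care only at the boundary---a leftmost \(\iinitmsf\) is absorbed into the case split, while a leftmost conditional requires \(\vm_t(\dirvar)\) to be non-empty, which follows from the standing type-safety assumption under which \(\hd{\dirvar}\) is evaluated only on non-empty lists. The second is where \cref{lemma:step-sim} does the real work: it completely pins down the target computation that realises one source step, so determinism of the target semantics forces the prefix of the given run to agree with it; in particular, away from possibility~(2), this prefix never reaches \(\sterror\), since the only assertion the transformation introduces sits inside a fence block and passes whenever \(\msvar = \bot\). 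In short, cases~(1) and~(2) of the statement isolate exactly the configurations for which no such first source step exists, and possibility~(3) is then a mechanical consequence of \cref{lemma:step-sim} and determinism.
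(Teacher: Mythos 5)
Your proof is correct, but it reaches possibility~(3) by a different mechanism than the paper. The paper's proof of \cref{lemma:target-decomp} is a self-contained case analysis on the first instruction of~\(c\): for each shape it unfolds \(\Tprogstar{\cdot}\) and symbolically applies the target rules for exactly \(m(c)\) steps to exhibit the intermediate state \(\st{\Tprogstar{c_m}}{\vm_{m,t}}{\mem_{m,t}}{}\), and then invokes determinism of the target semantics to argue that this computation is a prefix of the given terminating run, so the second trace exists automatically. You instead observe that outside cases~(1) and~(2) the corresponding source configuration is not \final{}, take one source step, and feed it to the source-to-target direction of \cref{lemma:step-sim} to obtain the \(m(c)\)-step block; this is not circular (\cref{lemma:step-sim} is proved independently) and it buys you a shorter proof that does not repeat the unfolding of the transformation, at the cost of two extra obligations that the paper's direct route avoids: you must justify that a source step is genuinely available (in particular, for a leftmost conditional you need \(\vm_t(\dirvar)\) to be a non-empty encoded directive list so that \cref{lemma:step-sim} is applicable with \(\seq{d}\) its one-element prefix --- the paper's direct unfolding never needs to mention the source semantics here), and you must check that none of the \(m(c)-1\) intermediate target states is \final{}, which you correctly note is what lets determinism rule out early termination. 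Your explicit treatment of that last point is actually slightly more careful than the paper's, which asserts the prefix property from determinism without remarking that the block contains no final intermediate states. Both proofs handle the \(\iinitmsf\)-with-\(\msvar=\top\) configuration identically via \reftgtsemassertF{}, and both leave the same minor bookkeeping implicit (splitting \(\vm_t\) into a source variable map plus the distinguished \(\dirvar\) and \(\msvar\) components).
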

\begin{proof}
The proof is by a case analysis on the first instruction of $c$.
For each case, we will show the existence of the first trace
$\nsem%
{\st{\Tprogstar{c}}{\vm_t}{\mem_t}{}}
{}{\seq{o_1}}
{\st{\Tprogstar{c_{m}}}{\vm_{m, t}}{\mem_{m, t}}{}}
{m(c)}$.
And the existence of the second trace will be established automatically.
Because the target semantics is deterministic, if we know from state \st{\Tprogstar{c}}{\vm_t}{\mem_t}{}, the program terminates;
and the program can execute $m(c)$ steps reaching \st{\Tprogstar{c_m}}{\vm_{m, t}}{\mem_{m, t}}{}, then we can know that this execution is a prefix of the terminating trace, and the execution from \st{\Tprogstar{c_m}}{\vm_{m, t}}{\mem_{m, t}}{} also terminates.

\begin{proofcase}{$\cnil$}
This case directly holds.
\end{proofcase}

\begin{proofcase}{$\iif{e}{c_\top}{c_\bot}$}
In this case, $\Tprogstar{c}$ is
\begin{equation*}
\begin{aligned}[t]
  % & \ileak{\obranch{e}}\capp \\
  & \mathtt{if}~(e)~\{~\} \capp \\
  & \mathtt{if}~(\dirsem{\hd{\dirvar}}{e})~\{ \\
  & \quad \iassign{\dirvar}{\tl{\dirvar}}\capp \\
  & \quad \iassign{\msvar}{\msvar\vee \neg~e}\capp \\
  & \quad \Tprogstar{c_\top} \\
  & \} ~\mathtt{else}~ \{ \\
  & \quad \iassign{\dirvar}{\tl{\dirvar}}\capp \\
  & \quad \iassign{\msvar}{\msvar\vee e}\capp \\
  & \quad \Tprogstar{c_\bot} \\
  & \} \capp \\
  & \Tprogstar{c_0}
\end{aligned}
\end{equation*}
After $m(c) = 4$ steps -- \refrule{fig:target-semantics}{cond}, \refrule{fig:target-semantics}{cond}, \refrule{fig:target-semantics}{assign}, and \refrule{fig:target-semantics}{assign} -- we will reach $\Tprogstar{c_\top}$ or $\Tprogstar{c_\bot}$.
Thus, we know the existence of $c_m$: either $c_\top$ or $c_\bot$. $\vm_{m, t}$ and $\mem_{m, t}$ can be computed by applying the four rules.
\end{proofcase}
  
\begin{proofcase}{$\iwhile{e}{c_w}$}
In this case, $\Tprogstar{c}$ is
\begin{equation*}
  \begin{aligned}[t]
    & \mathtt{if}~(e)~\{~\} \capp \\
    & \mathtt{while}~(\hd{\dirvar})~\{\\
    & \quad \iassign{\dirvar}{\tl{\dirvar}}\capp  \\
    & \quad \iassign{\msvar}{\msvar\vee \neg e}\capp \\
    & \quad \Tprogstar{c_w}\capp \\
    & \}\capp \\
    & \iassign{\msvar}{\msvar\vee e}\capp \\
    & \iassign{\dirvar}{\tl{\dirvar}}\capp \\
    & \Tprogstar{c_0}
  \end{aligned}
\end{equation*}
After $m(c) = 4$ steps -- \refrule{fig:target-semantics}{cond}, \refrule{fig:target-semantics}{while}, \refrule{fig:target-semantics}{assign}, and \refrule{fig:target-semantics}{assign} -- we will reach $\Tprogstar{c_w}$ or $\Tprogstar{c_0}$.
Thus, we know the existence of $c_m$: either $c_w$ or $c_0$. $\vm_{m, t}$ and $\mem_{m, t}$ can be computed by applying the four rules.
\end{proofcase}
  
\begin{proofcase}{$\iinitmsf$}
In this case, $\Tprogstar{c}$ is $\iassert{\neg \msvar} \capp \iassign{\msf}{\msfnomask}$.
If $\vm_t(\msvar) = \top$, then we reach $\cnil$ in one step by \reftgtsemassertF{}, without any change in the register file or memory.
This reflects the first possible case in the lemma statement:
$c$ is of the form $\iinitmsf; c'$, and $\vm_t(\msvar) = \top$, $\vm_t = \vm_t'$ and $\mem_t = \mem_t'$.
Otherwise, when $\vm_t(\msvar) = \bot$, we will also take $m(c) = 2$ steps -- \reftgtsemassertT{} and \reftgtsemassign{} -- until reaching $\Tprogstar{c_0}$.
Thus, when $\vm_t(\msvar) \not= \top$, we can take $c_0$ for $c_m$.
\end{proofcase}

\begin{proofcases}{Others}
For other cases, after $m(c)$ steps, we will reach $\Tprogstar{c_0}$. Thus, we can take $c_0$ for $c_m$.
\end{proofcases}
\end{proof}

\begin{theorem}[Soundness and completeness of SPS]%
\label{app:thm:soundness-completeness}
  \begin{gather*}
  \exists c',
  \sem*{c(i)}{\seq{d}}{\seq{o}}{\st{c'}{\vm'}{\mem'}{\ms'}} \land final(\st{c'}{\vm'}{\mem'}{\ms'})
  \end{gather*}
  \begin{center}iff\end{center}
  \begin{gather*}
  \left(
  \sem*%
  {\Tprog{c}(i, \seq{d} \lapp \seq{d_0})}
  {}{T(\seq{o},\seq{d})}
  {\st{\cnil}{\msetaux{\vm'}{\dirvar \mapsto \seq{d_0}, \msvar \mapsto \ms}}{\mem'}{}}
  \quad\textrm{ or }\quad
  \sem*%
  {\Tprog{c}(i, \seq{d} \lapp \seq{d_0})}
  {}{T(\seq{o},\seq{d})}
  {\sterror}
  \right).
  \end{gather*}
\end{theorem}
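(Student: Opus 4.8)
The plan is to prove both directions of the equivalence by an induction that peels off one speculative source step at a time, using \cref{lemma:step-sim} to translate that step into an $m(c)$-step block of sequential target steps, and \cref{lemma:target-decomp} to recognize such a block at the front of a terminating target trace. The two directions are symmetric apart from what we induct on: for soundness (left-to-right) we induct on the length of the multi-step speculative derivation, which by \refsemrefl{} and \refsemtrans{} is either empty or a single step followed by a shorter derivation; for completeness (right-to-left) we induct on the length of the sequential target trace, which is well-founded because case~(3) of \cref{lemma:target-decomp} splits off a nonempty prefix of $m(c)\ge 1$ steps.

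First I would dispatch the discrepancy between $\Tprog{c}$ and $\Tprogstar{c}$. Since $\Tprog{c} = \iassign{\msvar}{\bot} \capp \Tprogstar{c}$ and $c(i) = \st{c}{\vm}{\mem}{\msseq{}}$, the first target step (an application of \reftgtsemassign{} under \reftgtsemseq{}) merely sets $\msvar$ to $\bot = \msseq{}$, consumes no directive, and produces no observation; hence it suffices to prove the statement with the initial target state replaced by $\st{\Tprogstar{c}}{\msetaux{\vm}{\dirvar{} \mapsto \seq{d} \lapp \seq{d_0}, \msvar{} \mapsto \msseq{}}}{\mem}{}$, which is precisely the shape of state that \cref{lemma:step-sim,lemma:target-decomp} manipulate and that the induction carries as its invariant. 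Alongside this I would record the one fact about the leakage transformer $T$ that the gluing needs: a single speculative step either produces no branch observation and consumes no directive (assignments, loads, stores, the SelSLH operators) or is a conditional/loop unfolding that produces exactly one $\obranch{}$ observation and consumes exactly one directive; consequently $T$ distributes over the matching concatenations, $T(\seq{o_1} \lapp \seq{o_2}, \seq{d_1} \lapp \seq{d_2}) = T(\seq{o_1}, \seq{d_1}) \lapp T(\seq{o_2}, \seq{d_2})$, which is what lets the $T$-labelled traces on the two sides of each split compose.

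With these in hand the induction is routine. For soundness, if the source derivation is empty then $c' = c$ and $\seq{o} = \lnil{}$, and finality of $\st{c}{\vm'}{\mem'}{\ms'}$ leaves two possibilities: $c = \cnil{}$, in which case $\Tprogstar{c} = \cnil{}$ and the target state is already final (left disjunct); or $c$ begins with $\iinitmsf{}$ and $\ms' = \top$, in which case $\Tprogstar{c}$ begins with $\iassert{\neg\msvar{}}$ evaluated against $\msvar{} = \top$, so \reftgtsemassertF{} together with the error-propagating sequencing rule drives the target to $\sterror{}$ (right disjunct). If the derivation is a step $\sem{c(i)}{\seq{d_1}}{\seq{o_1}}{s_1}$ followed by $\sem*{s_1}{\seq{d_2}}{\seq{o_2}}{s'}$, then \cref{lemma:step-sim} turns the first step into an $m(c)$-step target trace to $\st{\Tprogstar{c_1}}{\cdot}{\cdot}{}$, the induction hypothesis handles the tail, and \reftgtsemtrans{} plus $T$-distributivity assemble the whole target trace with label $T(\seq{o},\seq{d})$. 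For completeness, \cref{lemma:target-decomp} applied to $\Tprogstar{c}$ gives three cases: in cases~(1) and~(2) the source takes zero steps and we verify finality exactly as above (with case~(2) producing the $\sterror{}$ branch of the hypothesis), and in case~(3) \cref{lemma:step-sim} reads the $m(c)$-step prefix back as a single source step, the induction hypothesis on the strictly shorter suffix yields the rest of the speculative derivation, and \refsemtrans{} recombines them.

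I expect the main obstacle to be purely the leakage bookkeeping rather than anything conceptual: \cref{lemma:step-sim} delivers each source step paired with a $T(\seq{o_i},\seq{d_i})$-labelled block, but because $T$ consumes directives lazily---exactly when a branch observation appears---one must check that the global directive list $\seq{d}$ in the theorem partitions in precisely the way the successive source steps consume it, in particular that each loop-body traversal and the single post-loop step each consume one directive while the memory and arithmetic steps consume none, so that the per-step equalities chain together to the statement's $T(\seq{o},\seq{d})$. A minor secondary point is pinning down the status of $\cskip{}$ at the head of a command (whether $\cskip{} \capp c$ is syntactically identified with $c$ or takes a step to it via \refsemskip{}), since the completeness induction needs the measure $m$ and the case analysis of \cref{lemma:target-decomp} to cover it so that the peeled prefix is nonempty; both are settled directly from the conventions of \cref{sec:setting:source}.
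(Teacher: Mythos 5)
Your proposal is correct and follows essentially the same route as the paper's proof: a generalization to $\Tprogstar{c}$ after peeling off the $\iassign{\msvar}{\bot}$ prefix, soundness by induction on the speculative derivation using \cref{lemma:step-sim} for the head step and $T$-distributivity to glue the labelled blocks, and completeness by induction on the target trace length using \cref{lemma:target-decomp} to isolate a nonempty $m(c)$-step prefix, with the same two base cases ($\cnil$, and $\iinitmsf$ under misspeculation yielding $\sterror$). The obstacles you flag (directive/observation bookkeeping in $T$, the treatment of $\cskip$) are exactly the points the paper handles implicitly, so no gap remains.
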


\begin{proof}
We will inductively prove the following statement:
$$\sem*{\st{c}{\vm}{\mem}{\ms}}{\seq{d}}{\seq{o}}{\st{c'}{\vm'}{\mem'}{\ms'}} \land final(\st{c'}{\vm'}{\mem'}{\ms'})$$
\begin{center}iff\end{center}
\[
\mathmakebox[\linewidth][c]{%
\left(
\begin{gathered}
  \sem*%
 {\st{\Tprogstar{c}}%
 {\msetaux{\vm}{\dirvar \mapsto \seq{d} \lapp \seq{d_0}, \msvar \mapsto \ms}}%
 {\mem_t}{}}
 {}{T(\seq{o},\seq{d})}
 {\st{\Tprogstar{c'}}%
 {\msetaux{\vm'}{\dirvar \mapsto \seq{d_0}, \msvar \mapsto \ms'}}%
 {\mem'}{}} \\[0.5em]
  \text{or}\quad
  \sem*%
 {\st{\Tprogstar{c}}{\msetaux{\vm}{\dirvar \mapsto \seq{d} \lapp \seq{d_0}, \msvar \mapsto \ms}}{\mem}{}}
 {}{T(\seq{o},\seq{d})}
 {\sterror}
\end{gathered}
\right)
}
\]

This statement can derive the theorem statement in two steps:
(1) instantiate $\ms$ as $\top$,
(2) take care of the initialization statement of $\ms$ at the beginning of $\Tprog{c}$.

% \xingyu{TODO: maybe come back to show these }
% Because
% $$\sem*%
%   {\Tprog{c}(i, \seq{d} \lapp \seq{d_0})}
%   {}{T(\seq{o},\seq{d})}
%   {\st{\cnil}{\msetaux{\vm}{\dirvar \mapsto \seq{d_0}, \msvar \mapsto \ms}}{\mem}{}},$$
% indicates

% $\sem*%
%   {\st{\Tprogstar{c}}{\vm_t}{\mem_t}{}}
%   {}{T(\seq{o},\seq{d})}
%   {\st{\Tprogstar{c'}}{\vm_t'}{\mem'}{}}$
% is equivalent to 

We will prove two directions of this statement separately.

\subsection*{Source to Target (Soundness)}
To prove the soundness, we make an induction on the execution steps, i.e., multistep execution rules of the speculative semantics (\refsemrefl{} and \refsemtrans{}).

\begin{proofbasecase}{the starting state is already final, i.e., the source execution can only be derived by \refsemrefl.}
In this case, $c$ is $\cnil$ or (the first instruction of $c$ is $\iinitmsf$ and $\ms = \top$). We consider these two sub-cases separately.
\begin{proofsubcase}{$c$ is $\cnil$}
$\Tprogstar{\cnil}$ is still $\cnil$, then, directly from the target rule \reftgtsemrefl{}, we have
\begin{gather*}
  \sem*
  {\st{\Tprog{\cnil}}{\msetaux{\vm}{\dirvar \mapsto \seq{d_0}, \msvar \mapsto \ms}}{\mem}{}}
  {}{\lnil}
  {\st{\cnil}{\vm}{\mem}{}}.
\end{gather*}
\end{proofsubcase}

\begin{proofsubcase}{$c$ is $\iinitmsf$ and $\ms = \top$.}
$\Tprogstar{\iinitmsf}$ is $\iassert{\neg \msvar} ; \iassign{\msf}{\bot}$.
Then, by the target rule \reftgtsemassertT,
we have target execution
\begin{gather*}
\sem*%
  {\st{\Tprogstar{\iinitmsf}}{\msetaux{\vm}{\dirvar \mapsto \seq{d_0}, \msvar \mapsto \top}}{\mem}{}}
  {}{\lnil}
  {\sterror}.
\end{gather*}
\end{proofsubcase}
\end{proofbasecase}

\begin{proofinductivecase}{the starting state is not final, i.e., the source execution can be derived by \refsemtrans.}
By the rule \refsemtrans{} of the source language,
the source execution can be split as follows, the first step and the rest:
\begin{gather*}
\st{c}{\vm}{\mem}{\ms}
\xrightarrow[\seq{d_1}]{\seq{o_1}}\mathrel{\vphantom{\to}}
\st{c''}{\vm''}{\mem''}{\ms''}
\text{ and }
\st{c''}{\vm''}{\mem''}{\ms''}
\xrightarrow[\seq{d_2}]{\seq{o_2}}\mathrel{\vphantom{\to}^\ast}
\st{c'}{\vm'}{\mem'}{\ms'},
\end{gather*}
where $\seq{o} = \seq{o_1} \lapp \seq{o_2}$ and $\seq{d} = \seq{d_1} \lapp \seq{d_2}$.

We will convert these two executions from source to target separately.
By applying \cref{lemma:step-sim} to the first step, we obtain
\begin{align*}
\sem*%
  {\st
    {\Tprogstar{c}}
    {\msetaux{\vm}{
      \dirvar \mapsto \seq{d_1} \lapp \seq{d_2} \lapp \seq{d_0},
      \msvar \mapsto \ms
      }}
    {\mem}
    {}
  }
  {}{T(\seq{o_1}, \seq{d_1})}
  {\st
    {\Tprogstar{c''}}
    {\msetaux{\vm''}{
      \dirvar \mapsto \seq{d_2} \lapp \seq{d_0},
      \msvar \mapsto \ms''
    }}
    {\mem''}
    {}
  }.
\end{align*}
By the induction hypothesis, we have that
\begin{align*}
\sem*%
  {\st
    {\Tprogstar{c''}}
    {\msetaux{\vm''}{
      \dirvar \mapsto \seq{d_2} \lapp \seq{d_0},
      \msvar \mapsto \ms''}}
    {\mem''}
    {}
  }
  {}{T(\seq{o_2}, \seq{d_2})}
  {\st
    {\cnil}
    {\msetaux{\vm'}{
      \dirvar \mapsto \seq{d_0},
      \msvar \mapsto \ms'}}
    {\mem'}
    {}
  }.
\end{align*}

Combining the two target traces above, by the target rule \reftgtsemtrans{},
we can conclude that we have a target execution almost in the form that we want,
\begin{align*}
  \sem*%
  {\st
    {\Tprogstar{c}}
    {\msetaux{\vm}{
      \dirvar \mapsto \seq{d_1} \lapp \seq{d_2} \lapp \seq{d_0},
      \msvar \mapsto \ms
      }}
    {\mem}
    {}}
  {}{T(\seq{o_1}, \seq{d_1}) \lapp T(\seq{o_2}, \seq{d_2})}
  {\st
    {\cnil}
    {\msetaux{\vm'}{
      \dirvar \mapsto \seq{d_0},
      \msvar \mapsto \ms'
    }}
    {\mem'}
    {}}.
\end{align*}
The only difference of this semantic relation and the target relation of interest is the observation list.
Luckily, this gap is not difficult to fill.
According to the definition of $T$ (intuitively, inserting directives after branch observations), we have $T(\seq{o_1}, \seq{d_1}) \lapp T(\seq{o_2}, \seq{d_2}) = T(\seq{o_1} \lapp \seq{o_2}, \seq{d_1} \lapp \seq{d_2})$.
\end{proofinductivecase}

\subsection*{Target to Source (Completeness)}

We will make an induction on the number of the execution steps of the target execution, to show that, if we have
\[
\mathmakebox[\linewidth][c]{%
\left(
\begin{gathered}
  \sem*%
  {\st{\Tprogstar{c}}%
    {\msetaux{\vm}{\dirvar \mapsto \seq{d} \lapp \seq{d_0}, \msvar \mapsto \ms}}%
    {\mem_t}{}}
  {}{\seq{o}}
  {\st{\cnil}%
    {\msetaux{\vm'}{\dirvar \mapsto \seq{d_0}, \msvar \mapsto \ms'}}%
    {\mem'}{}}
  \\[0.5em]
  \text{or}\quad
  \sem*%
  {\st{\Tprogstar{c}}{\msetaux{\vm}{\dirvar \mapsto \seq{d} \lapp \seq{d_0}, \msvar \mapsto \ms}}{\mem}{}}
  {}{T(\seq{o},\seq{d})}
  {\sterror}
\end{gathered}
\right)
}
\]
then
$$
\exists c', \sem*{\st{c}{\vm}{\mem}{\ms}}{\seq{d}}{T^{-1}(\seq{o})}{\st{c'}{\vm'}{\mem'}{\ms'}} \land final(\st{c'}{\vm'}{\mem'}{\ms'}),
$$
where $T^{-1}(\seq{o})$ is the sequence obtained from $\seq{o}$ by removing the second branch observation for each adjacent branch observation pair.

\begin{proofbasecase}{the execution step is zero, i.e., the target execution is directly derived by \reftgtsemrefl{}}
In this case, we can know that $c$ is $\cnil$ and $\seq{d} = \lnil$, $\vm = \vm'$, $\mem = \mem'$, $\seq{o} = \lnil$, $\ms = \ms'$.
Then, from the source rule \refsemrefl{}, we get
\begin{align*}
  \sem*
    {
      \st
        {\cnil}
        {\vm}
        {\mem}
        {\ms}
    }
    {\lnil}
    {\lnil}
    {\st{\cnil}{\vm'}{\mem'}{\ms}}.
\end{align*}
\end{proofbasecase}

\begin{proofinductivecase}{the target execution takes more than zero step, i.e., the target execution is derived by \reftgtsemtrans{}}
% The induction hypothesis is that for any natural number $n' < n$, we have
% \[
% \mathmakebox[\linewidth][c]{%
% \left(
% \begin{gathered}
%   \nsem%
%   {\st{\Tprogstar{c}}%
%     {\msetaux{\vm}{\dirvar \mapsto \seq{d} \lapp \seq{d_0}, \msvar \mapsto \ms}}%
%     {\mem_t}{}}
%   {}{T(\seq{o},\seq{d})}
%   {\st{\cnil}%
%     {\msetaux{\vm'}{\dirvar \mapsto \seq{d_0}, \msvar \mapsto \ms'}}%
%     {\mem'}{}}
%   {n'}
%   \\[0.5em]
%   \text{or}\quad
%   \nsem%
%   {\st{\Tprogstar{c}}{\msetaux{\vm}{\dirvar \mapsto \seq{d} \lapp \seq{d_0}, \msvar \mapsto \ms}}{\mem}{}}
%   {}{T(\seq{o},\seq{d})}
%   {\sterror}
%   {n'}
% \end{gathered}
% \right)
% }
% \]
% implies
% $$
% \exists c',
% \sem*
%   {\st{c}{\vm}{\mem}{\ms}}
%   {\seq{d}}
%   {T^{-1}(\seq{o})}
%   {\st{c'}{\vm'}{\mem'}{\ms'}}
% \land
% final(\st{c'}{\vm'}{\mem'}{\ms'}).
% $$

By applying the target decomposition lemma (\cref{lemma:target-decomp}) onto the target execution,
we obtain three possibilities.

\begin{proofcase}{$c$ is $\cnil$}
We can get the expected source execution by the source rule \refsemrefl{}.
\end{proofcase}

\begin{proofcase}{$c$ is $\iinitmsf$ and $\vm_t(\ms) = \top$}
The target execution only applies one rule -- \reftgtsemassertT{} -- on the transformed program: $\iassert{\neg \msvar} \capp \iassign{\msf}{\msfnomask}$.
Thus, from the target execution, we can know that $\seq{d} = \seq{o} = \lnil$ and $\ms = \ms' = \top$.
Then, we can get the source execution from the source rule \refsemrefl{},
\begin{align*}
  \sem*
    {
      \st
        {\iinitmsf}
        {\vm}
        {\mem}
        {\top}
    }
    {\lnil}
    {\lnil}
    {
      \st
        {\iinitmsf}
        {\vm'}
        {\mem'}
        {\top}
    }.
\end{align*}
\end{proofcase}

\begin{proofcase}{the target execution is composable as two traces}
We can split the target execution into two target traces, connected end-to-end, as follows.
$$
\nsem%
  {
    \st{\Tprogstar{c}}{\msetaux{\vm}{\dirvar \mapsto \seq{d} \lapp \seq{d_0}, \msvar \mapsto \ms}}{\mem}{}
  }
  {}{\seq{o_1}}
  {
    \st{\Tprogstar{c_{m}}}{\vm_{m, t}}{\mem_{m, t}}{}
  }
  {m(c)}
$$
and
\[
\mathmakebox[\linewidth][c]{%
\left(
\begin{gathered}
  \nsem%
  {\st{\Tprogstar{c_m}}%
    {\vm_{m, t}}%
    {\mem_{m, t}}{}}
  {}{\seq{o_2}}
  {\st{\cnil}%
    {\vm_{m, t}}%
    {\mem'}{}}
  {n'}
  \\[0.5em]
  \text{or}\quad
  \nsem%
  {\st{\Tprogstar{c}}{\vm_{m, t}}{\mem_{m, t}}{}}
  {}{\seq{o_2}}
  {\sterror}
  {n'}
\end{gathered}
\right)
}
\]

% where $\vm$ is defined as $\vm_t$ with $\dirvar,~\obsvar,~\msvar$ removed and
% \begin{align*}
%   \vm_{m, t} = \vm_{m} [\dirvar \leftarrow d \not= \dstep ~?~ \tl{\vm_t(\dirvar)} : \vm_t(\dirvar),
%     \obsvar \leftarrow o \lapp \vm_t(\obsvar),~\msvar \leftarrow \ms_m].
% \end{align*}

We are going to invert two traces into source execution.
First, applying the step-wise simulation lemma (\cref{lemma:step-sim}), we can invert the first trace to a speculative source step:
\begin{align*}
\sem{\st{c}{\vm}{\mem}{\ms}}{\seq{d_1}}{T^{-1}(\seq{o_1})}{\st{c_{m}}{\vm_{m}}{\mem_{m}}{\ms_m}}.
\end{align*}
Then, by the induction hypothesis, the multi-step target trace above implies
\begin{gather*}
\exists c',
\sem*
{\st{c}{\vm_m}{\mem_m}{\ms_m}}
{\seq{d_2}}{T^{-1}(\seq{o_2})}
{\st{\cnil}{\vm'}{\mem'}{\ms'}}
\land
final(\st{\cnil}{\vm'}{\mem'}{\ms'}).
\end{gather*}
By the source rule \refrule{fig:source-semantics}{trans}, we can combine two source traces into one trace:
\begin{gather*}
\exists c',
\sem*{\st{c}{\vm}{\mem}{\ms}}{\seq{d_1} \lapp \seq{d_2}}{T^{-1}(\seq{o_1}) \lapp T^{-1}(\seq{o_2})}{\st{c'}{\vm'}{\mem'}{\ms'}}
\land
final(\st{c'}{\vm'}{\mem'}{\ms'}),
\end{gather*}
where $T^{-1}(\seq{o_1}) \lapp T^{-1}(\seq{o_2}) = T^{-1}(\seq{o_1} \lapp \seq{o_2})$ and $\seq{d_1} \lapp \seq{d_2} = \seq{d}$.
\end{proofcase}
\end{proofinductivecase}
\end{proof}

\section{An Auxiliary Transformation: Assert Elimination}%
\label{app:assert-elim}

Assert elimination models the instruction $\iassert{e}$ by the basic instructions in standard sequential semantics.
\Cref{fig:assert-elim} presents the assert elimination pass: $\Tass{\cdot}$ for the whole program and $\Tassstar{\cdot}$ for a code piece.
We introduce another ghost boolean variable $\retvar$, of which the boolean value indicates whether the program should stop.

The idea of the transformation is as follows.
$\retvar$ will be updated by $\iassert{e}$ and used to emulate halting the program.
The instruction, $\iassert{e}$, itself is replaced by a normal assignment: to set $\retvar$ as $\top$.
When $\retvar$ is set as $\top$, every instruction will be skipped and the loop will terminate as $\neg \retvar$ is conjoined to the loop condition.
Based on the main idea, we also make an optimization to avoid redundant checks:
we check only when $\retvar$ is possible to be freshly set as $\top$ according to the syntax.
For example, $\iift{\neg \retvar}{\iassign{x}{1}} \capp \iift{\neg \retvar}{\iassign{x}{2}}$ is optimized to $\iift{\neg \retvar}{\iassign{x}{1} \capp \iassign{x}{2}}$.

\begin{figure}
  \figasserttrans{}
  \caption{Assert elimination.}%
  \label{fig:assert-elim}
\end{figure}

% Xingyu: funny, the new syntax makes this assert elimination transformation non-deterministic.

The soundness and completeness of assert elimination is as follows.
Intuitively, this soundness and completeness says that the execution results before and after the transformation are equal up to the ghost variable $\retvar$.

\begin{lemma}[Soundness and completeness of assert elimination]%
\label{thm:correct-assert-elim}
    For any program $c$ and any input $i$,
    $$
    \sem*%
      {c(i)}
      {}{}
      {\st{\cnil}{\vm}{\mem}{}}
    $$
    iff
    $$
    \sem*%
      {\Tass{c}(i)}
      {}{}
      {\st{\cnil}{\vm'}{\mem}{}},
    $$
    where $\vm = \vm' \setminus \{ \retvar \}$, i.e., $\vm'$ equals $\vm$ up to $\retvar$.
\end{lemma}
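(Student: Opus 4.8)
The plan is to exhibit a tight, essentially lock-step simulation between $c$ and $\Tass{c} = \iassign{\retvar}{\bot} \capp \Tassstar{c}$, and then to close the biconditional by determinism: the target semantics is syntax-directed and consumes no directives, so $c(i)$ and $\Tass{c}(i)$ each have a unique outcome — ``halt normally in some $\st{\cnil}{\vm}{\mem}{}$'', ``halt in $\sterror$'', or ``diverge'' — and it therefore suffices to show these outcomes correspond, with the normal-halting variable maps agreeing up to $\retvar$. Since $\Tassstar{c}$ contains no $\iassert{}$, $\Tass{c}$ never reaches $\sterror$; an erroring run of $c$ will instead correspond to a normally halting run of $\Tass{c}$ whose final value of $\retvar$ is $\top$. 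I will actually prove this slightly stronger correspondence, from which the displayed statement follows (reading erroring runs of $c$ as the $\retvar{=}\top$ runs of $\Tass{c}$).

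\textbf{The inductive core.} The heart of the proof is a lemma established by structural induction on $c$ (all commands assumed not to mention $\retvar$), with three clauses that track the two ``regimes'' of $\Tassstar{c}$: \textbf{(i)} if $\sem*{\st{c}{\vm}{\mem}{}}{}{}{\st{\cnil}{\vm_1}{\mem_1}{}}$ then $\sem*{\st{\Tassstar{c}}{\mset{\vm}{\retvar}{\bot}}{\mem}{}}{}{}{\st{\cnil}{\mset{\vm_1}{\retvar}{\bot}}{\mem_1}{}}$; \textbf{(ii)} if $c$ reaches $\sterror$, say with state $(\vm_1,\mem_1)$ at the first failing assertion, then $\sem*{\st{\Tassstar{c}}{\mset{\vm}{\retvar}{\bot}}{\mem}{}}{}{}{\st{\cnil}{\mset{\vm_1}{\retvar}{\top}}{\mem_1}{}}$; \textbf{(iii)} for every $\vm,\mem$, $\sem*{\st{\Tassstar{c}}{\mset{\vm}{\retvar}{\top}}{\mem}{}}{}{}{\st{\cnil}{\mset{\vm}{\retvar}{\top}}{\mem}{}}$ — once $\retvar$ is set, every inserted guard $\neg\retvar$ is false and every rewritten loop guard $e \land \neg\retvar$ is false, so the residual transformed code reduces to $\cnil$ without modifying $\vm$ or $\mem$. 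Alongside these I prove the matching ``completeness'' converses and that a nonterminating, never-erroring run of $c$ forces divergence of $\Tassstar{c}$. The base cases $\cnil$, $\iassign{}{}$, $\iload{}{}$, $\istore{}{}$ are immediate since $\Tassstar{}$ leaves them unchanged; the $\iassert{e}$ case unfolds $\Tassstar{\iassert{e}} = \iift{\neg e}{\iassign{\retvar}{\top}}$ and splits on $\eval{e}{\vm}$, matching \reftgtsemassertT{}/\reftgtsemassertF{} on the $c$ side; and $\iif{}{}{}$ follows by the induction hypothesis on the selected branch.

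\textbf{Sequencing, loops, and the conclusion.} The delicate cases are $c \capp c'$ and $\iwhile{e}{c_{body}}$, where the syntactic optimization of $\Tassstar{}$ intervenes. For $c \capp c'$: if $\Tassstar{c} = c$ then $c$ is assertion-free, so its run can only be in regime~(i), never setting $\retvar$, and $\Tassstar{c'}$ then runs unguarded; if $\Tassstar{c} \ne c$, the transform prefixes the continuation with $\iift{\neg\retvar}{\cdot}$, and after $\Tassstar{c}$ finishes either in regime~(i) (so $\retvar = \bot$, the guard passes, we continue with $c'$) or in regime~(ii) (so $\retvar = \top$, the guard is skipped — matching $c \capp c'$ propagating $\sterror$ past $c'$ by the error-propagation rule for sequencing). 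For $\iwhile{e}{c_{body}}$: if $\Tassstar{c_{body}} = c_{body}$ the transformed loop is identical and runs in lock-step; otherwise the guard becomes $e \land \neg\retvar$, and an inner induction on the iteration count shows the transformed loop performs exactly the same iterations as $c$ until an assertion fails, at which point $\retvar = \top$ makes $e \land \neg\retvar$ false and the transformed loop exits — precisely how $c$'s loop surfaces $\sterror$ through the surrounding sequencing (\reftgtsemwhile{} combined with the error-propagation rule). Divergence preservation reuses the same iteration-count analysis. Finally, peeling off the leading $\iassign{\retvar}{\bot}$ of $\Tass{c}$ and applying clauses~(i) and~(ii) together with their converses — determinism ruling out $\Tass{c}(i)$ diverging or stalling when $c(i)$ halts — yields the biconditional.

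\textbf{Main obstacle.} I expect the principal difficulty to be exactly the bookkeeping forced by the optimization: the residual command of $\Tass{c}$ during an execution is \emph{not} simply $\Tassstar{}$ applied to the residual command of $c$ — it carries the conditionally inserted $\iift{\neg\retvar}{\cdot}$ wrappers and the $e \land \neg\retvar$ loop guards — so the simulation relation needs to be robust enough (or one must thread an explicit continuation together with the current $\retvar$-status through the induction) to handle the sequencing case uniformly, in particular for re-associating nests such as $(c_1 \capp c_2) \capp c_3$ that $\Tassstar{}$ re-guards. A cleaner route that isolates this difficulty is to first prove the correspondence for the naive, unoptimized variant of $\Tassstar{}$ that always inserts the guards, and then separately verify that the optimization is semantics-preserving — i.e.\ that an $\iift{\neg\retvar}{\cdot}$ immediately following assertion-free code, and a $\neg\retvar$ conjunct in a loop whose body is assertion-free, can be dropped without affecting any execution — which is a short self-contained observation, since assertion-free code never sets $\retvar$.
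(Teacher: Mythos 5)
Your proposal is correct in substance and is considerably more detailed than the paper's own three-line sketch, but it is organized differently. The paper proves a single generalized biconditional about $\Tassstar{c}$ started with $\retvar\mapsto\bot$ by induction on the \emph{number of execution steps}, and then peels off the initial $\iassign{\retvar}{\bot}$; it says nothing about how that induction survives the fact that, after one step, the residual command of the transformed program is not $\Tassstar{}$ of the residual source command. You instead do a structural induction on $c$ with a three-regime invariant (normal run, run up to the first failing assertion, post-failure skip regime), with an inner induction on iteration count for loops, and you explicitly isolate the bookkeeping caused by the $\iift{\neg\retvar}{\cdot}$ wrappers and the $e\land\neg\retvar$ loop guards --- which is exactly the difficulty the paper's sketch glosses over. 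Your fallback of first proving an unoptimized, always-guarding transform correct and then justifying the guard-dropping optimization separately is a clean way to modularize that difficulty. You also correctly flag something the paper does not: the literal backward implication of the lemma fails when $c$ contains a failing assertion (then $c(i)$ ends in $\sterror$, not in a skip state, while $\Tass{c}(i)$ halts normally with $\retvar=\top$), so the statement only holds under the reading you make explicit.

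One clause of your invariant is overstated. Clause (iii) claims that for \emph{every} $c$, running $\Tassstar{c}$ from a state with $\retvar=\top$ reaches $\cnil$ without modifying $\vm$ or $\mem$; this is false for assertion-free commands, since e.g.\ $\Tassstar{\iassign{x}{1}}=\iassign{x}{1}$ executes unguarded and writes $x$. The property you actually need --- and the one your justification establishes --- is that the \emph{guarded} residues ($\iift{\neg\retvar}{\cdots}$ and loops whose guard carries the $\neg\retvar$ conjunct) skip once $\retvar=\top$; and inspection of $\Tassstar{}$ shows that every continuation reachable after a failed assertion is of that guarded form, so the induction goes through once clause (iii) is restated for those forms (or once you switch to the naive always-guarding transform, for which your clause (iii) is true as written). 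This is a repair of the invariant, not of the idea.
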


\begin{proof}[Proof Sketch]
We can prove a slightly more general statement, by an induction on the number of execution steps:
$$
\sem*
  {\st{c}{\rho}{\mu}{}}
  {}{}
  {\st{\cnil}{\rho'}{\mu}{}}
$$
iff
$$
\sem*
  {\st{\Tassstar{c}}{\msetaux{\rho}{\retvar \mapsto \bot}}{\mu}{}}
  {}{}
  {\st{\cnil}{\rho_r'}{\mu}{}},
$$
where $\rho' = \rho_r' \setminus \{ \retvar \}$, i.e., $\rho'$ is equal to $\rho_r'$ except for $\retvar$.

Combining this statement with the first execution step for initialization,
$$\sem{\st{\Tass{c}}{\rho}{\mu}{}}{}{}{\st{\Tassstar{c}}{\rho[\retvar \mapsto \bot]}{\mu}{}}$$
yields the lemma statement.
\end{proof}

%%% Local Variables:
%%% mode: LaTeX
%%% TeX-output-dir: "latex_output"
%%% TeX-master: "dedsct"
%%% End:

  \fi
{}

\end{document}

%%% Local Variables:
%%% mode: LaTeX
%%% TeX-output-dir: "latex_output"
%%% TeX-master: "dedsct"
%%% End: